\newif\ifpnote\pnotetrue{}
\newif\ifdraft\draftfalse{}
\newif\ifcolor\colortrue{}
\newif\ifanon\anonfalse{}
\newif\iftr\trtrue{}
\renewcommand{\ALG@beginalgorithmic}{\ttfamily}
\begin{document}
\title{\TXF: A DSL for Concurrent Filestores}


\ifanon
\author{Paper \#47}
\institute{\vspace*{-1em}}
\else
\author{
Jonathan DiLorenzo\inst{1} \and
Katie Mancini\inst{1} \and
Kathleen Fisher\inst{2} \and
Nate Foster\inst{1}}

\authorrunning{J. DiLorenzo et al.}

\institute{
Cornell University, Ithaca NY 14850, USA \and
Tufts University, Medford MA 02155, USA
}
\fi

\maketitle
\setcounter{footnote}{0}
\begin{abstract}

Many systems use ad hoc collections of files and directories to store
persistent data. For consumers of this data, the process of properly
parsing, using, and updating these \emph{filestores} using
conventional APIs is cumbersome and error-prone. Making matters worse,
most filestores are too big to fit in memory, so applications must
process the data incrementally while managing concurrent accesses by
multiple users. This paper presents \TXFFull{} (\TXF{}), which builds
on earlier work on Forest to provide a simpler, more powerful API for
managing filestores, including a mechanism for managing concurrent
accesses using serializable transactions. Under the hood, \TXF{}
implements an optimistic concurrency control scheme using
Huet's \emph{zippers} to track the data associated with filestores. We
formalize \TXF{} in a core calculus, develop a proof of
serializability, and describe our OCaml prototype, which we have used
to build several practical applications.


\keywords{Data description languages
\and File systems
\and Ad hoc data
\and Concurrency
\and Transactions
\and Zippers}

\end{abstract}

\section{Introduction}\label{sec:intro}
%
%
%
\pnote{Ad hoc data is everywhere, but problematic.}
Modern database systems offer numerous benefits to programmers,
including rich query languages and impressive performance. However,
programmers in many areas including finance, telecommunications, and
the sciences, rely on \textit{ad hoc} data formats to store persistent
data---e.g., flat files organized into structured directories. This
approach avoids some of the initial costs of using a database such as
writing schemas, creating user accounts, and importing data, but it
also means that programmers must build custom tools for correctly
processing the data---a cumbersome and error-prone task.

%
%
%
\pnote{Concurrency issues are extra difficult and in a class of their own.}
Applications often have an additional class of critical errors arising from
concurrency. Frequently, applications that store large amounts of persistent
data in the file system have multiple users that may be reading and writing the
data concurrently, or single users relying on parallelism to speed up their
work. For example, many instructors in large computer science courses use
filestores to manage student data, using ad hoc collections of assignment
directories, grade rosters stored in CSV files, and grading scores and comments
stored in ASCII files. To automate common grading tasks---e.g., computing
statistics, normalizing raw scores, uploading grades to the
registrar---instructors often write scripts to manipulate the data. However,
these scripts are written against low-level file system APIs, and rarely handle
multiple concurrent users. This can easily lead to incorrect results or even
data corruption in courses that rely on large numbers of TAs to help with
grading.

%
%
%
\pnote{PADS and Forest mitigate the standard issues, but not concurrency.}
The PADS/Forest family of languages offers a promising approach for managing ad
hoc data. With these languages, the programmer specifies the structure of an ad
hoc data format using a simple, declarative specification, and the compiler
generates an in-memory representation for the data, load and store functions for
mapping between in-memory and on-disk representations, as well as tools for
analyzing, transforming, and visualizing the data. PADS focused on ad hoc data
stored in individual files~\cite{fisher+:pads}, while Forest handles ad hoc data
in \textit{filestores}---i.e., structured collections of files, directories, and
links~\cite{forest-icfp:fisher+}. Unfortunately, the languages that have been
proposed to date lack support for concurrency.

%
%
\pnote{We introduce TxForest which provably guarantees serializability.}
To address this challenge, this paper proposes \TXFFull{} (\TXF{}),
a declarative domain-specific language for correctly processing ad hoc
data in the presence of concurrency. Like its predecessors, \TXF{}
uses a type-based abstraction to specify the structure of the data and
its invariants. From a \TXF{} description, the compiler generates a
typed representation of the data as well as a high-level programming
interface that abstracts away direct interactions with the file system
and provides operations for automatically loading and storing data
from the underlying file system, while gracefully handling errors. In
addition, \TXF{} guarantees serializable semantics for transactions.

\jdl{TODO2: Consider revising and combining Zipper paragraphs (perhaps putting
properties before description?) and including more or different information
about why they are a good fit. See the second paragraph of Section 3 (soon)}

%
%
\pnote{TxForest uses zippers as an underlying abstraction and the secret sauce}
The abstraction that facilitates the serializable semantics, along with a slew
of additional desired properties are \emph{zippers}. \TXF{} uses a
tree-structured representation based on Huet's
Zippers~\cite{Huet:1997:ZIP:969867.969872} to represent the filestore being
processed. Rather than representing a tree in terms of the root node and its
children, a zipper encodes the current node, the path it traversed to get there
and the nodes that it encountered along the way. Importantly, local changes to
the current node as well as many common navigation operations involving adjacent
nodes can be implemented in constant time. For example, by replacing the current
node with a new value and then `zipping' the tree back up to the root,
modifications can be implemented in a purely-functional way.

%
%
%
\pnote{Why are zippers the right thing?}
As others have also observed~~\cite{OlegZFSTalk}, zippers are a good
abstraction for filestores, for several reasons: (1) The concept of
the working path is cleanly captured by the current node; (2) Most
operations are applied close to the current working path; (3) The
zipper naturally captures incrementality by loading data as it
is encountered in the zipper traversal; (4) A traversal (along with
annotations about possible modification) provides all of the
information necessary to provide rich semantics, such as
copy-on-write, as well as a simple optimistic concurrency control
scheme that guarantees serializability.

\jdl{Reviewer 3 quibbled about capitalizing semi-colon separated lists. I think
I prefer it this way, but I'm not sure if one way is more correct. Since I think
Nate at least edited this list quite a bit, I'm guessing this is correct?}

%
%
%
\pnote{Brief Paper Summary.}
In this paper, we formalize the syntax and semantics of \TXF{} for a
single thread of execution, and we establish various correctness
properties, including roundtripping laws in the style of
lenses~\cite{focal-toplas}. Next, we extend the semantics to handle
multiple concurrent threads of execution, and introduce a transaction
manager that implements a standard optimistic concurrency scheme. We
prove that all transactions that sucessfully commit are serializable
with respect to one another. Finally, we present a prototype
implementation of \TXF{} as an embedded language in OCaml,
illustrating the feasibility of the design, and use it to implement
several realistic applications.

%
%
\pnote{Contributions.}
%
Overall, the contributions of this paper are as follows:
\begin{itemize}
\item We present \TXFFull{}, a declarative domain-specific language for
processing ad hoc data in concurrent settings (\secref{sec:forest,sec:concurrency}).
\item We describe a prototype implementation of \TXFFull{} as an embedded
domain-specific language in OCaml (\secref{sec:implementation}).
\item We prove formal properties about our design including serializability and
round-tripping laws.
\end{itemize}

%
%
\pnote{Overview of the sections to come.}
The rest of this paper is structured as follows: \secref{sec:overview}
introduces a simple example to motivate \TXF{}. \secref{sec:forest}
presents the syntax and single-threaded semantics
of \TXF{}. \secref{sec:concurrency} adds the multi-threaded semantics
and the serializability theorem. \secref{sec:implementation} discusses
the OCaml implementation of \TXF{} and an application. We review related
work in \secref{sec:related} and conclude in \secref{sec:conclusion}.
The proofs of formal properties are in \appOrTr{}.

\section{Example: Course Management System}\label{sec:overview}

%
%
%
\pnote{Section Overview: Course management system motivating the design.}
This section introduces an example of an idealized course management system to
motivate the design of \TXF{}. \figref{fig:grading} shows a fragment of a
filestore used in tracking student grades. The top-level directory
(\texttt{grades}) contains a set of sub-directories, one for each homework
assignment (\texttt{hw1}--\texttt{hw5}). Each assignment directory has a file
for each student containing their grade on the assignment (e.g.,
\texttt{aaa17}), as well as a special file (\texttt{max}) containing the maximum
score for that homework. Although this structure is simple, it closely resembles
filestores that have actually been used to keep track of grades at several
universities.

\begin{figure}[t]
  \centerline{\includegraphics[scale=.55]{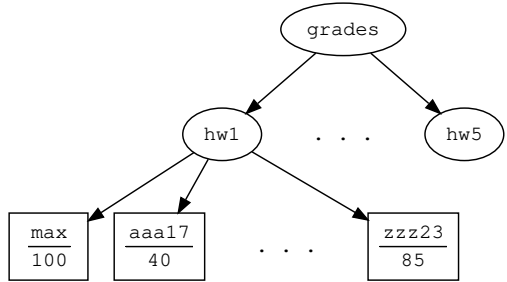}}
  \caption{Example: file system fragment used to store course data.}
  \label{fig:grading}
\end{figure}

%
%
%
\pnote{Example description with normalization and concurrency issues.}
There are various operations that one might want to perform on this
filestore, but to illustrate the challenges related to concurrency, we
will focus on normalization. Normalization might be used to ensure
that the grades for a particular homework fall between some specified
limits or match a given probability distribution. We assume an idempotent
normalization operation \texttt{f} that receives various assignment
statistics and the current score and computes a normalized score.

\paragraph*{OCaml Implementation.}
%
%
%
\pnote{OCaml Implementation intro.}
To start, let us see how we might write a renormalization procedure for this
filestore in a general-purpose language---e.g., OCaml. For simplicity, the code
relies on helper functions, which are explained below.
%
%
\begin{code}
let renormalize f hw gmin =
  let hwDir = sprintf "grades/hw
  let gmax = get_score (hwDir ^/ "max") in
  let studentFiles = get_students hwDir in
  let (cmin, cmax) = get_min_and_max studentFiles in
  map_scores (f cmin cmax gmin gmax) studentFiles
\end{code}
%
%
%
\pnote{Explanation of code.}
The \texttt{renormalize} function takes as input the function to use
to normalize individual scores (\texttt{f}), the identifier of a
homework assignment (\texttt{hw}), and the minimum score to use when
scaling scores (\texttt{gmin)}. It retrieves the value from the
\texttt{max} file, using the \texttt{get\_score} helper, which  reads the file and parses it
into a score. Next, it retrieves the list of paths to every
student file (\texttt{studentFiles}). It then computes the minimum
(\texttt{cmin}) and maximum (\texttt{cmax}) score over all students
using a helper function (\texttt{get\_min\_and\_max}), which again
accesses data in the underlying file system. Finally, it maps the
function \texttt{f} over each student's score, together with the
aggregate statistics supplied as arguments, and writes the new score
back to the file, again using a helper function to perform the
necessary iteration (\texttt{map\_scores}) and file writes.

%
%
%
\pnote{Standard Forest Errors.}
Although this procedure is simple, there are a number of potential
pitfalls that could arise due to its use of low-level file system
APIs. For example, one of the files or directories might not exist or
there might be extra files in the file system. The structure of the
filestore might be malformed, or might change over time. Any of these
mistakes could lead to run-time errors or worse, they might silently
succeed, but produce incorrect results.

%
%
%
\pnote{Concurrency issue which is hard to detect, diagnose, and fix.}
This implementation also suffers from a more insiduous set of problems related
to concurrency. Consider what happens if multiple members of the
course staff execute the renormalization procedure concurrently. If
the stage that computes the minimum and maximum scores is interleaved
with the stage that invokes \texttt{f} and writes the normalized
values back to the file system, we could easily be left with a mangled
filestore and incorrect results---something that would likely be
difficult to detect, diagnose, and fix.

\paragraph*{Classic Forest Implementation.}
%
%
%
\pnote{Forest Implementation Intro.}
Next let us consider an implementation in
Forest~\cite{forest-icfp:fisher+}. The programmer starts by explicitly
specifying the structure of their filestore using the following
declarations:
%
%
\begin{code}
  grades = [hw :: hws | hw <- matches RE "hw[0-9]+"]
  students = file
  hws = directory \TextSet{
    max is "max" :: file;
    students is [student :: students | student <- matches RE "[a-z]+[0-9]+"];
  }
\end{code}
%
%
%
\pnote{Explanation of specification and what it generates.}
The \texttt{grades} specification describes the structure of the
top-level directory: a list of homework directories, each containing a
file named \emph{max} and a list of \texttt{students} (each
represented as a \texttt{file}\footnotemark).

\footnotetext{By integrating with PADS~\cite{fisher+:pads}, we could go  a step further and specify
the contents of the file as well---i.e. a single line containing an integer.}

Given this specification, the Forest compiler generates an in-memory
representation for the data, as well as associated functions for loading data
from and storing data to the file system. For example, the types generated from
the \texttt{grades} specification, and the representation of \texttt{hws}, are:
%
%
\begin{code}
  type grades_rep = hws_rep list
  type grades_md = hws_md list md
  type hws_rep = \Set{ max : string; students : students_rep list}
  val grades_load : filepath -> grades_rep * grades_md
  val grades_store : filepath -> grades_rep * grades_md -> unit
\end{code}

%
%
%
\pnote{Explanation of generated code.}
The \texttt{md} types store metadata about the Forest computation,
including permissions and whether any errors were encountered.
The \texttt{load} and \texttt{store} functions map between the on-disk
and in-memory representations, and automatically check for errors and
inconsistencies in the data. Using these functions, we write
the \texttt{renormalize} procedure as follows:

%
%
\begin{code}
  let renormalize f hw gmin : unit =
    let (gr,gmd) = grades_load (baseDir ^/ "grades") in
    if gmd.num_errors = 0 then
      let (hwr,hwmd) = find (sprintf "hw
      let gmax = get_score hwr.max in
      let (cmin, cmax) = get_min_max hwr in
      map_scores (f cmin cmax gmin gmax) hwr hwmd
    else
      failwith (String.concat "\\n" gmd.error_msg)
\end{code}

%
%
%
\pnote{Explanation of code.}
This code is similar to the OCaml implementation, but there are a few
key differences. It first loads the entire \texttt{grades} directory
and checks that it has no errors. This makes the auxilliary functions,
like
\texttt{get\_score} (which now just turns a string into an integer) and
\texttt{set\_score} simpler and more robust, since they no longer
need to worry about such issues. It then locates the representation
and metadata for the assignment, computes aggregate statistics, and
invokes \texttt{f} to renormalize and update the scores.
The \texttt{get\_min\_max} and \texttt{map\_scores} helpers are
similar to the direct versions discussed previously.

%
%
%
\pnote{This solves the standard issues, but fails to solve concurrency.}
The Forest implementation offers several important benefits over the
OCaml code: (1) The structure of the filestore is explicit in the
specification and the code; (2) The use of types makes certain
programming mistakes impossible, such as attempting to read a file at
a missing path; and (3) Any part of the filestore not conforming to
the specification is automatically detected.

However, the Forest code still suffers from the same concurrency
issues discussed above. Further, it is unnecessary (and often
infeasible) to load the entire filestore into memory at once,
particularly when we only want to manipulate one homework, or one
student. While we could have directly loaded a single homework or
student with their associated load functions, we would not get as much
information about possible errors.

\paragraph*{\TXFFull{} Implementation.}
%
%
%
\pnote{TxForest intro.}
\TXF{} offers the same advantages as Forest, while dealing with
issues related to concurrency and incrementality. The only cost is a
small shift in programming style---i.e., navigating using a zipper.

%
%
%
\pnote{How specifications are used in TxForest.}
The \TXF{} specification for our running example is identical to the
Forest version. However, this surface-level specification is then
translated to a core language (\secref{sec:forest}) that uses Huet's
zipper internally and also provides transactional guarantees.
The \TXF{} code for the \texttt{renormalize} function is different
than the Forest version. Here is one possible implementation:

%
%
\begin{code}
  let renormalize f hw gmin zipper : (unit,string) Result.t =
    let
    let
    let
    let
    map_scores (f cmin cmax gmin gmax) studentZ
\end{code}
%
%
%
\pnote{Explanation of code.}
Note that the type of the function has changed so that it takes a
zipper as an argument and returns a value in the result monad:
\begin{code}
  type ('a,'b) Result.t = Ok 'a | Error of 'b
\end{code}

Intuitively, this monad tracks the same sorts of errors seen in the
Forest code---e.g. from malformed filestores, and does not include
concurrency issues.

The \texttt{goto\_name\_p} function traverses the zipper---e.g.,
\texttt{goto\_name\_p "hw1" zipper} navigates to the comprehension
node named \emph{hw1} and then down to the corresponding file system
path, ending up at a \texttt{hws} node. The \emph{bind} operator
(\texttt{>>=}) threads the resulting zipper through the monad. The
\texttt{let\%bind $\VarMeta$ = $\ExpMeta_1$ in $\ExpMeta_2$}
syntax is shorthand for \texttt{$\ExpMeta_1$ >>= fun $\VarMeta$ ->
$\ExpMeta_2$}. The \texttt{goto} function is similar, but is limited
to directories and does not walk down the last path operator. Finally,
the helper functions, \texttt{map\_scores} and \texttt{get\_min\_max},
use TxForest library functions to map and fold over the zipper
respectively.

\pnote{Introduction of transaction running constructs.}
To use the \texttt{renormalize} function, users need some way to
construct a zipper. The TxForest library provides functions
called \texttt{run\_txn} and \texttt{loop\_txn}:
%
%
\begin{code}
  type txError = TxError | OpError of String
  val run_txn : spec -> path -> (t -> ('a,string) Result.t)
                -> (unit -> ('a,txError) Result.t)
  val loop_txn : spec -> path -> (t -> ('a,string) Result.t)
                -> (unit -> ('a,string) Result.t)
\end{code}
which might be used as follows:
%
%
\begin{code}
  match run_txn grades_spec "grades" (renormalize 1 60) () with
  | Error TxError -> printf "Transaction aborted due to conflict"
  | Error (OpError err) -> printf "Transaction aborted due to error: 
  | Ok _ -> printf "Renormalization successful"
\end{code}

The \texttt{run\_txn} function takes a specification, an initial path,
and a function from zippers (\texttt{t}) to results and produces a
thunk. When the thunk is forced, it generates a zipper focused on the
given path and runs the function. If this execution results in an
error, the outer computation produces an \texttt{OpError}. Otherwise,
it attempts to commit the modifications produced during the
computation. If this succeeds, it returns the result of the function,
otherwise it discards the results and returns a \texttt{TxError}.
The \texttt{loop\_txn} function is similar, but retries the
transaction until there is no conflict or the input function produces
an error.

%
%
%
\pnote{Concurrency issues go away! (And incrementality?).}
\TXF{} guarantees that transactions will be serializable with respect
to other transactions---i.e., the final file system will be equivalent
to one produced by executing the committed transactions in some
serial order. See \secref{sec:concurrency} for the formal concurrent
semantics and the serializability theorem. In our example, this means
that no errors can occur due to running multiple renormalization
transactions simultaneously. Furthermore, \TXF{} automatically
provides incrementality by only loading the data needed to traverse
the zipper---an important property in larger filestores. Incremental
Forest~\cite{dilorenzo2016incremental} provides a similar facility, but
requires explicit user annotations.

\jdl{Reviewer 3 notes that incremental Forest is mentioned here in passing and
suggests that moving Related Work to beginning of paper might help avoid reader
confusion. I don't think moving the Related Work is good and given that we talk
about Classic Forest earlier in this section, I'm unconvinced that this mention
is a problem.}

%
%
%
\pnote{Segue to formalism.}
Overall, \TXF{} provides incremental support for filestore
applications in the presence of concurrency. The next two sections
present the language in detail, develop an operational model, and
establish its main properties.

\jdl{Why operational model? It's mostly denotational? Or is that not what we
mean by this...?}

\section{\TXFFull}\label{sec:forest}
%
%
%
\pnote{Section Overview: Tie back to main idea (need formalism for proofs?),
local semantics, roadmap (multiple threads will come up next).}
This section presents \TXF{} in terms of a core calculus. We discuss the goals
and high level design decisions of our language before formalizing the syntax
and semantics as well as several properties including round-tripping laws for
fetching and storing data, equational identities, and filestore consistency
relations. This section deals primarily with the single-threaded semantics,
while the next section presents a concurrent model.

\pnote{Language requirements.}
The goals of this language are to allow practical processing of filestores for
non-expert users. This leads to several requirements: (1) An intuitive way of
specifying filestores, which has been solved in previous
work~\cite{forest-icfp:fisher+}; (2) Automatic incremental processing, as
filestores are often too large to fit in memory; (3) Automatic concurrency
control, since concurrency is both common and difficult to get right; and (4)
Transparency: since filestore interaction is often expensive, it should be
explicit.

\jdl{Not clear that Transparency is useful here or that cool.}

\pnote{Zipper and Language interdesign.}
The zipper abstraction that our language is based on helps us achieve our second
and fourth requirement. Both of these requirements, and concurrency are then
further addressed by our locality-centered language design. The semantics of
every command and expression is designed to only consider the locale around the
focus node of the zipper. This means that every command only needs to look at a
small part of the filestore, which, along with the fact that data can be loaded
only as-required while traversing the zipper, gives us incrementality. We
believe that this locality and the explicit zipper traversal commands also lends
us transparency. In particular, the footprint of any command is largely
predictable based on the filestore specification and current state. Finally,
this predictability makes concurrency control simpler by making logging an easy
affair.

\subsection{Syntax}\label{subsec:forest-syntax}

\begin{figure}[t]
  \[
  \begin{array}{@{}l@{\;}r@{\;}cl@{}@{~~}r}
  \text{Strings} & \StrMeta{}  & \in & \StrSet{} &  \\
  \text{Paths} & \PathMeta{} & \in & \PathSet{} &   \\
  \text{Integers} & \IntMeta{} & \in & \IntSet{} &   \\
  \text{Variables} & \VarMeta{} & \in & \VarSet{} &  \\
  \text{Values} & \ValMeta{} & \in & \ValSet{} &   \\
  \text{Environments } & \EnvMeta \in \EnvSet & : &\map{\VarSet}{\ValSet} \\
  \text{File Systems} & \FSMeta  & : & \map{\PathSet}{\ContentsSet}  \\
  \text{Programs} & \ProgMeta  & \bnfdef & (\PathMeta,\SpecMeta,\CommandMeta)
  \\
  \text{Contents} & \ContentsMeta  & \bnfdef & \FSDirDef \alt \FSFileD
  \\
  \text{Commands} & \CommandMeta  & \bnfdef & \FCommandMeta \alt \ComSkip \alt \ComSeqD \alt \ComAssignD \\
  && \alt & \ComIfD \alt \ComWhileD
  \\
  \text{Expressions} & \ExpMeta, \BoolMeta  & \bnfdef & \FExpMeta
    \alt \ValMeta
    \alt \VarMeta \alt \ExpAppD \alt \dots
  \\
  \text{Local Contexts} & \ContextMeta  & : & \product{\EnvSet}{\product{\PathSet}{\ZipperSet}}
  \\
  \text{Global Contexts} & \GContextMeta & : & \product{\PathSetSet}{\FSSet}
  \end{array}
  \]
  \caption{Preliminaries}
  \jdl{TODO4: Perhaps just have a context and a separate FS}
  \jdl{We should probably add a simple path syntax here}
  \jdl{Reviewer 3 would like us to explain $\Set{\overline{\StrMeta}}$. I think
  it's pretty self-explanatory...?}
  \label{fig:forest-meta}
\end{figure}

%
%
%
\pnote{FS Model.}
In our formal model, we view a file system as a map from paths to file system
contents, which are either directories (containing their children's names) or
files (containing strings). For a path and file system, $\PathMeta$ and
$\FSMeta$, we define $\PathMeta \in \FSMeta \eqdef \PathMeta \in \dom{\FSMeta}$.
See \figref{fig:forest-meta} for the metavariable conventions used in our
formalization.

We will assume that all file systems are well formed---i.e., that they
encode a tree, where each node is either a directory or a file with no
children:

\begin{definition}[Well-Formedness]\label{def:fs-wf}
  A file system $\FSMeta$ is well-formed iff:
  \begin{enumerate}
    \item $\MapGet{\FSMeta}{\PathRoot} = \FSDirF{\ignoreArg}$ (where $\PathRoot$ is the root node)
    \item $\PathCons{\PathMeta}{\StrMeta} \in \FSMeta \iff
          \MapGet{\FSMeta}{\PathMeta} = \FSDirF{\Set{\StrMeta; \dots}}$
  \end{enumerate}
\end{definition}

%
%
\begin{figure}[t]

  \[
  \begin{array}{lrcl}
  \text{Specifications} & \SpecMeta \in \SpecType & \bnfdef &
  \FSpecFile \alt \FSpecDir
  \alt \FSpecPathD \alt \FSpecPairD
  \alt \FSpecCompD \alt \FSpecOptD \alt \FSpecPredD
  \\
  \text{Zippers} & \ZipperMeta  & \bnfdef & \{
  \NamedType{\ZipParent}{\ZipperSet~\OptType}; \\
  &&& ~~ \NamedType{\ZipLeft}{(\FZipNodeType)~\ListType}; \\
  &&& ~~ \NamedType{\ZipCurrent}{(\FZipNodeType)}; \\
  &&& ~~ \NamedType{\ZipRight}{(\FZipNodeType)~\ListType}
  \}
  \\ \\
  \text{Forest Commands} & \FCommandMeta  & \bnfdef & \FNavigationMeta \alt \FUpdateMeta
  \\
  \text{Forest Navigations} & \FNavigationMeta  & \bnfdef &
    \FComDown
    \alt \FComUp
    \alt \FComNext
    \alt \FComPrev
  \\ && \alt &
    \FComIntoPair
    \alt \FComIntoComp
    \alt \FComIntoOpt
    \alt \FComOut
  \\
  \text{Forest Updates} & \FUpdateMeta  & \bnfdef &
    \FComStoreFileD \alt \FComStoreDirD \alt \FComCreatePath
  \\
  \\
  \text{Forest Expressions} & \FExpMeta  & \bnfdef &
    \FExpFetchFile
    \alt \FExpFetchDir
    \alt \FExpFetchPath
  \\ && \alt &
    \FExpFetchComp
    \alt \FExpFetchOpt
    \alt \FExpFetchPred
  \\ && \alt &
  \FExpRunCD
  \alt \FExpRunED
  \alt \FFunVerify
  \\
  \\
  \text{Log Entries} & \LogEntryMeta & \bnfdef &
  \TxLReadD
  \alt \TxLWriteFileD
  \alt \TxLWriteDirD
  \\
  \text{Logs} & \LogMeta & : & \LogEntrySet ~ \ListType
  \end{array}
  \]
  \jdl{TODO1: Add Paths}
  \caption{Main Syntax}
  \label{fig:forest-syntax}
\end{figure}

%
\jdl{I'm not sure this is where I want this signposting paragraph}
\pnote{Transition/signposting paragraph with an example relating to previous section.}
In the previous section, we gave a flavor of the specifications one might write
in \TXF{}. These were written in our surface language, which compiles down to
the core calculus, whose syntax is given in \figref{fig:forest-syntax}. The core
specifications are described fully below, but first, we will provide the
translation of the \texttt{hws} specification from \secref{sec:overview} to
provide an intuition:
\begin{code}
directory \TextSet{
    max is "max" :: file;
    students is [student::students | student <- matches RE "[a-z]+[0-9]+"]}
\end{code}
becomes
$\FSpecPairF{max}{\FSpecPathF{"max"}{\FSpecFile}}{\FSpecPairF{dir}{\FSpecDir}{
  \FSpecCompF{\FSpecPathF{student}{students}}{student}{\ExpMeta}}}$
where $\ExpMeta = \texttt{filter }(\FExpRunF{\FExpFetchDir}{dir}) \texttt{ "[a-z]+[0-9]+"}$.

Directories become dependent pairs, allowing earlier parts of directories to be
referenced further down. Comprehensions which use regular expressions to query the
file system, also turn into dependent pairs. The first part is a
$\FSpecDir{}$, which, in the second part, is fetched and filtered using the regular expression.

\pnote{Syntax description: Specifications.}
Formally, \TXF{} specifications $\SpecMeta$ describe the
shape and contents of \emph{filestores}, structured subtrees of a file system.
They are almost identical to those in Classic Forest~\cite{forest-icfp:fisher+}
and, to a first approximation, can be understood as follows:
\begin{itemize}
  \item \emph{Files and Directories.} The $\FSpecFile{}$ and $\FSpecDir{}$
  specifications describe filestores with a file and directory, respectively, at
  the current path.
  \item \emph{Paths.} The $\FSpecPathD$ specification describes a filestore modeled by $\SpecMeta$ at
  the extension of the current path by the evaluation of $\ExpMeta$.
  \item \emph{Dependent Pairs.} The $\FSpecPairD$ specification describes a
  filestore modeled by both $\SpecMeta_1$ and $\SpecMeta_2$.
  Additionally, $\SpecMeta_2$ may refer to a context constructed from
  $\SpecMeta_1$ through the variable $\VarMeta$.
  \item \emph{Comprehensions.} The $\FSpecCompD$ specification describes a
  filestore modeled by $\SpecMeta$ when $\VarMeta$ is bound to any
  element in the evaluation of $\ExpMeta$.
  \item \emph{Options.} The $\FSpecOptD$ specification describes a filestore
  that is either modeled by $\SpecMeta$ or where the current path does not
  exist.
  \item \emph{Predicates.} The $\FSpecPredD$ specification describes a filestore
  where $\ExpMeta$ evaluates to the boolean $\btrue$. This
  construct is usually used with dependent pairs.
\end{itemize}

\jdl{Reviewer 3 has trouble understanding this list and in particular paths.
They also don't understand the next paragraph or what 'evaluation of e' means.}

Most specifications can be thought of as trees with as many children as they
have sub-specifications. Comprehensions are the single exception; we think of
them as having as many children as there are elements in the evaluation of
$\ExpMeta$.

%
%
%
\pnote{Syntax description: Zipper.}
To enable incremental and transactional manipulation of data contained
in filestores, \TXF{} uses a \emph{zipper} which is constructed from
a specification. The zipper traverses the specification tree while
keeping track of an environment (for binding variables from dependent
pairs and comprehensions). The zipper can be thought of as
representing a tree along with a particular node of the tree that is
in focus.
\ZipCurrent{} represents the focus node, while
\ZipLeft{} and \ZipRight{} represent its siblings to the left and right
respectively.
\ZipParent{} tracks the focus node's ancestors, by containing the
zipper we came from before moving down to this depth of the tree. Some
key principles to keep in mind regarding the zipper are (1) that the
tree can be unfolded as it is traversed and (2) that operations near
the current node in focus are fast, thus optimizing for locality.

%
%
%
\pnote{Syntax description: Command and Expression.}
To express navigation on the zipper, we use standard imperative (IMP)
commands, $\CommandMeta$, as well as special-purpose Forest Commands,
$\FCommandMeta$, which are divided into Forest
Navigations, \FNavigationMeta{}, and Forest Updates, \FUpdateMeta{}.
Navigation commands are those that traverse the zipper, while Update
commands modify the file system. Expressions are mostly standard and
pure: They never modify the file system and only Forest Expressions
query it. Forest Commands and Expressions will be described in greater
detail in \secref{subsec:forest-semantics}.

%
%
%
\pnote{Syntax description: Log.}
To ensure serializability among multiple \TXF{} threads executing concurrently,
we will use a log composed of a list of entries, \LogEntryMeta{}. \TxLReadD{}
indicates that we have read $\ContentsMeta$ at path $\PathMeta$. \TxLWriteFileD{}
indicates that we have written the file $\ContentsMeta_2$ to path $\PathMeta$,
where $\ContentsMeta_1$ was before. \TxLWriteDirD{} indicates that we have written
the directory $\ContentsMeta_2$ to path $\PathMeta$, where $\ContentsMeta_1$ was
before.

\subsection{Semantics}\label{subsec:forest-semantics}

\jdl{We should define $[-]_c$ as the semantics of commands and give its type}
%
%
%
\pnote{Semantics intro + IMP is standard.}
Having defined the syntax, we now present the denotational semantics
of \TXF{}. The semantics of IMP commands are standard and thus elided.
We start by defining the semantics of a program:
\begin{gather*}
  \FPDenD ~ \eqdef ~
    \texttt{project\_fs} ~ (\FCDenFG{\CommandMeta}{(\SetEmpty,\PathMeta, \MkZip{\FZipNodeSNF{\SetEmpty}{\SpecMeta}}{}{}{})}{\pair{\SetEmpty}{\FSMeta}})
\end{gather*}
%
%
%
\pnote{Semantics Description: Programs + Notation intro.}
The denotation of a \TXF{} program is a function on file systems. We use the
specification $\SpecMeta$, to construct a new zipper, seen in the figure using
our zipper notation defined after this paragraph. Then we construct a new local
context using the zipper and the path $\PathMeta$. Finally, we construct a
global context from the file system $\FSMeta$, execute the command
$\CommandMeta$, and project out the resulting file system.

\jdl{Reviewer 3 would like us to state the type of \texttt{project\_fs}, which I
think is quite clear, and $[-]_c$, which is actually an excellent point, since we
haven't talked about it yet here...}

\begin{definition}[Zipper Notation]\label{def:notation}
  We define notation for constructing and deconstructing zippers. To
  construct a zipper, we write,
  \\
  \(
    \MkZip{\ZipCurrent}{\ZipperMeta}{\ZipLeft}{\ZipRight} \eqdef
    \Set{\ZipParent = \OptSomeF{\ZipperMeta}; \ZipLeft; \ZipCurrent; \ZipRight},
  \)
  where any of \ZipParent{}, \ZipLeft{}, and \ZipRight{} can be left out to denote a
  zipper with $\ZipParent = \OptNone$, $\ZipLeft = \ListEmpty$, and $\ZipRight =
  \ListEmpty$ respectively. For example:
  \[
     \MkZip{\ZipCurrent}{}{}{} \eqdef
     \Set{\ZipParent = \OptNone; \ZipLeft = \ListEmpty; \ZipCurrent; \ZipRight = \ListEmpty}
  \]
  \noindent
  Likewise, to destruct a zipper we write,
  \(
    \GetZip{\ZipCurrent}{\ZipperMeta}{\ZipLeft}{\ZipRight}
  \)
  where any part can be left out to ignore that portion of the zipper, but any
  included part must exist. For example,
  $\ZipperMeta = \GetZip{\ignoreArg}{\ZipperMeta'}{}{} \defiff
  \ZipGetParent = \OptSomeF{\ZipperMeta'}$.
\end{definition}

\begin{figure}
  \(
    \begin{tabular}{| c | c | l |}
      \hline
      Command: \FCommandMeta{} & Conditions: \PredListMeta & Def. of $\FCDenFG{\FCommandMeta}{\ContextExp}{\GContextExp} ~ \text{ when } \PredListMeta$
      \\ \Xhline{5\arrayrulewidth}
      \FComDown
      & $
        \begin{aligned}
          \ZipperMeta &= \GetZip{\FZipNodeSNS{\FSpecPathD}}{}{}{}
          \\
          \pair{\StrMeta}{\LogMeta} &= \FEDenDLE{\ExpMeta}
          \\
          \FSDirD &= \MapGet{\FSMeta}{\PathMeta}
        \end{aligned}
        $
      & $
        \begin{array}{lcl}
          \LogMeta' & = & \ListAppendLF{\LogMeta}{(\TxLReadF{(\FSDirD)}{\PathMeta})}
          \\
          \BMFullReturnF
            {(\EnvMeta,\PathCons{\PathMeta}{\StrMeta},\MkZip{\FZipNodeSND}{\ZipperMeta}{}{})}
            {\pair{\PathSetMeta \union (\PathCons{\PathMeta}{\StrMeta})}{\FSMeta}}
            {\LogMeta'}
            \span \span
        \end{array}
        \mkern-16mu
        $
      \\ \hline
      \FComUp
      & $
        \begin{aligned}
          \ZipperMeta &= \GetZip{\ignoreArg}{\ZipperMeta'}{}{}
          \\
          \ZipperMeta' &= \GetZip{\FZipNodeSNI{\FSpecPathD}}{}{}{}
        \end{aligned}
        $
      & $
        \begin{array}{lcl}
            \BMFullReturnF{(\EnvMeta,\FFPopD,\ZipperMeta')}{\GContextExp}{\LogEmpty}
            \span \span
        \end{array}
        $
      \\ \Xhline{3\arrayrulewidth}
      \FComIntoOpt
      & $
        \begin{aligned}
          \ZipperMeta &= \GetZip{\FZipNodeSNS{\FSpecOptD}}{}{}{}
        \end{aligned}
        $
      & $
        \begin{array}{lcl}
          \BMFullReturnZ{\MkZip{\FZipNodeSND}{\ZipperMeta}{}{}} \span \span
        \end{array}
        $
      \\ \hline
      \FComIntoPair
      & $
        \begin{aligned}
          \ZipperMeta &= \GetZip{\FZipNodeSNS{\FSpecPairD}}{}{}{}
        \end{aligned}
        $
      & $
        \begin{array}{lcl}
          \ContextMeta & = & (\FEnvLoc,\PathMeta,\MkZip{\FZipNodeSNS{\SpecMeta_1}}{}{}{})
          \\
          \ZipperMeta' & = & \MkZip{\FZipNodeSNS{\SpecMeta_1}}{\ZipperMeta}{}
            {\List{\FZipNodeF{\FEnvLoc\subst{\VarMeta}{\ContextMeta}}{\SpecMeta_2}}}
          \\
          \BMFullReturnZ{\ZipperMeta'} \span \span
        \end{array}
        \mkern-16mu
        $
      \\ \hline
      \FComIntoComp
      & $ \!
        \begin{aligned}
          \ZipperMeta &= \GetZip{\FZipNodeSNS{\FSpecCompD}}{}{}{}
          \\
          \pair{\ListAppendIF{h}{t}}{\LogMeta} &= \FEDenDLE{\ExpMeta}
        \end{aligned}
        $
      & $
        \begin{array}{lcl}
          r & = & \mapFunF{(\anonFunF{\StrMeta}{\FZipNodeF{\FEnvLoc\subst{\VarMeta}{\StrMeta}}{\SpecMeta}})}{t}
          \\
          \ZipperMeta' & = &
            \MkZip{\FZipNodeSNF{\FEnvLoc\subst{\VarMeta}{h}}{\SpecMeta}}
            {\ZipperMeta}{}
            {r}
          \\
          \BMFullReturnZL{\ZipperMeta'}{\LogMeta} \span \span
        \end{array}
        $
      \\ \hline
      \FComOut
      & $
        \begin{aligned}
          \ZipperMeta &= \GetZip{\ignoreArg}{\ZipperMeta'}{}{}
          \\
          \ZipperMeta' &\neq \GetZip{\FZipNodeSNI{\FSpecPathD}}{}{}{}
        \end{aligned}
        $
      & $
        \begin{array}{lcl}
          \BMFullReturnZ{\ZipperMeta'}\span \span
        \end{array}
        $
      \\ \Xhline{3\arrayrulewidth}
      \FComNext
      & $ \ZipperMeta = \GetZip{c'}{\ZipperMeta'}{l}{(\ListAppendIF{c}{r})} $
      & $
        \begin{array}{lcl}
          \ZipperMeta'' &=& \MkZip{c}{\ZipperMeta'}{(\ListAppendIF{c'}{l})}{r}
          \\
          \BMFullReturnZ{\ZipperMeta''}\span \span
        \end{array}
        $
      \\ \hline
      \FComPrev
      & $ \ZipperMeta = \GetZip{c'}{\ZipperMeta'}{(\ListAppendIF{c}{l})}{r} $
      & $
        \begin{array}{lcl}
          \ZipperMeta'' &=& \MkZip{c}{\ZipperMeta'}{l}{(\ListAppendIF{c'}{r})}
          \\
          \BMFullReturnZ{\ZipperMeta''} \span \span
        \end{array}
        $
      \\ \Xhline{3\arrayrulewidth}
      \FComStoreFileF{\! \ExpMeta}
      & $
        \begin{aligned}
          \ZipperMeta &= \GetZip{\FZipNodeSNI{\FSpecFile}}{}{}{}
          \\
          \pair{\StrMeta}{\LogMeta} &= \FEDenDE{\ExpMeta}
        \end{aligned}
        $
      & $
        \begin{array}{lcl}
          \pair{\FSMeta'}{\LogMeta'} & = & \FFMkFileD
          \\
          \BMFullReturnG{\pair{\PathSetMeta}{\FSMeta'}}{\ListAppendLF{\LogMeta}{\LogMeta'}} \span \span
        \end{array}
        $
      \\ \hline
      \FComStoreDirD
      & $
        \begin{aligned}
          \ZipperMeta &= \GetZip{\FZipNodeSNI{\FSpecDir}}{}{}{}
          \\
          \pair{\SetMeta}{\LogMeta} &= \FEDenDE{\ExpMeta}
        \end{aligned}
        $
      & $
        \begin{array}{lcl}
          \pair{\FSMeta'}{\LogMeta'} & = & \FFMkNodeD
          \\
          \BMFullReturnG{\pair{\PathSetMeta}{\FSMeta'}}{\ListAppendLF{\LogMeta}{\LogMeta'}} \span \span
        \end{array}
        $
      \\ \hline
      \FComCreatePath
      & $
        \begin{aligned}
          \ZipperMeta &= \GetZip{\FZipNodeSNS{\FSpecPathD}}{}{}{}
          \\
          \pair{\StrMeta}{\LogMeta} &= \FEDenDLE{\ExpMeta}
        \end{aligned}
        $
      & $
        \begin{array}{lcl}
          \pair{\FSMeta'}{\LogMeta'} & = & \FFAddNodeD
          \\
          \LogMeta'' & = &  \ListAppendLF{\LogMeta}{
            \ListAppendLF{(\TxLReadF{\MapGet{\FSMeta}{\PathMeta}}{\PathMeta})}
              {\LogMeta'}}
          \\
          \BMFullReturnG{\pair{\PathSetMeta}{\FSMeta'}}{\LogMeta''} \span \span
        \end{array}
        $
      \\ \hline
    \end{tabular}
  \)
  \jdl{Reviewer 3 suggests that a figure with the type signatures for auxiliary
  functions (like pop, make\_file etc) and perhaps a paragraph with short
  explanations are in order. We do have such a figure (though it doesn't include
  pop) in the appendix, but, while I think pop is pretty clear, I do sorta agree
  that we should address it a bit more in the enumeration. At the same time, the
  details are of these functions are not really interesting and their name and
  brief explanation should give readers all the intuition they need unless they
  want to look at proofs.}
  \caption{\FCommandMeta{} Command Semantics}
  \label{fig:forest-commands}
\end{figure}

\jdl{Fix flow and intro of (and just look at) this paragraph}
\pnote{Command Semantics: Invariants + Locality.}
The two key invariants that hold during execution of any command are
(1) that the file system remains well-formed (\defref{def:fs-wf})
and (2) that if
$\FCDenFG{\FCommandMeta}
  {(\ignoreArg,\PathCons{\PathMeta}{\StrMeta},\ignoreArg)}{\pair{\ignoreArg}{\FSMeta}}
  = \BMFullReturnF{(\ignoreArg,\PathCons{\PathMeta'}{\StrMeta'},\ignoreArg)}{\pair{\ignoreArg}{\FSMeta'}}{\ignoreArg}
$
and $\PathMeta \in \FSMeta$, then $\PathMeta' \in \FSMeta$.
The first property states that no command can make a well-formed file system
ill-formed. The second states that, as we traverse the zipper, we maintain a
connection to the real file system. It is important that only the parent of the
current file system node is required to exist, which enables constructing new
portions of the filestore. A central design choice that underpins the semantics
is that each command acts \emph{locally} on the current zipper and does not
require further context. This makes the cost of the operation apparent and, as
in Incremental Forest~\cite{dilorenzo2016incremental}, facilitates partial
loading and storing.

\jdl{We could expand on why this connection is so weird (i.e. why not just
require that the current path is always in the FS?), which is for the sake of
options.}

%
%
%
\pnote{Semantics Description: Forest Command Intro.}
\figref{fig:forest-commands} defines the semantics of Forest Commands. As
illustrated in the top row of the table, each row should be interpreted as
defining the meaning of evaluating the command in a given local and global
context, $\ContextExp$ and $\GContextExp$, provided the conditions hold. The
denotation function is partial, being undefined if none of the rows apply.
Intuitively, a command is undefined when it is used on a malformed file system
with respect to the specification, or when it is ill-typed---i.e. used on an
unexpected zipper state. Operationally, the semantics of each command can be
understood as follows:

\jdl{Reviewer 3 wonders what we mean by 'duals' even though we say precisely
what we mean after mentioning it even using a colon to indicate this
fact} 

\begin{itemize}
  \item \FComDown{} and \FComUp{} are duals: The first traverses the zipper into a path
  expression, simultaneously moving us down in the filestore, while the other
  does the reverse. Additionally, \FComDown{} queries the file system,
  producing a \TxLReadName{}.
  \item $\FComStyle{Into}$ and $\FComOut$ are duals: The first
  traverses the zipper into its respective type of specification, while the
  second moves back out to the parent node. Additionally, their subexpressions
  may produce logs.
  \\
  For dependent pairs, we update the environment of the second
  child with a context constructed from the first specification.
  \\
  For comprehensions, the traversal requires the expression evaluation to be
  non-empty, and constructs a list of children with the same specification,
  but environments with different mappings for $\VarMeta$, before moving to
  the first child.
  \item \FComNext{} and \FComPrev{} are duals: The first traverses the zipper
  to the right sibling and the second to the left sibling.
  \item \FComStoreFileD{}, \FComStoreDirD{}, and \FComCreatePath{} all update
  the file system, leaving the zipper untouched. The functions they call out to
  all close the file system to remain well-formed and their definitions can be
  found in \appref{fig:forest-helpers}. These functions
  produce logs recording their effects.
  \\
  For \FComStoreFileD{}, $\ExpMeta$ must evaluate to a string, $\StrMeta$,
  after which the command turns the current file system node into a file
  containing $\StrMeta$.
  \\
  For \FComStoreDirD{}, $\ExpMeta$ must evaluate to a string set, $\SetMeta$,
  after which the command turns the current file system node into a directory
  containing that set. If the node is already a directory containing
  $\SetMeta'$, then any children in $\SetMeta' \setminus \SetMeta$ are
  removed, any children in $\SetMeta \setminus \SetMeta'$ are added (as empty
  files) and any children in $\SetMeta \inter \SetMeta'$ are untouched.
  \\
  For \FComCreatePath{}, the current node is turned into a directory
  containing the path that the path expression points to. The operation is
  idempotent and does the minimal work required: If the current node
  is already a directory, then the path is added. If the path was already
  there, then \FComCreatePath{} is a no-op, otherwise it will map to an empty
  file. 
\end{itemize}

\begin{figure}[t]
  \(
    \begin{tabular}{| c | c | l |}
      \hline
      Expression: \FExpMeta{{}} & Conditions: \PredListMeta & Def. of $\FEDenDE{\FExpMeta}  ~ \text{ when } \PredListMeta$
      \\ \Xhline{5\arrayrulewidth}
      \FExpFetchFile
      & $
        \begin{aligned}
          \ZipperMeta &= \GetZip{\FZipNodeSNI{\FSpecFile}}{}{}{}
          \\
          \FSFileD &= \MapGet{\FSMeta}{\PathMeta}
        \end{aligned}
        $
      & $
        \begin{array}{lcl}
          \ReturnF{\StrMeta}{\List{\TxLReadF{(\FSFileD)}{\PathMeta}}}
          \span \span
        \end{array}
        $
      \\ \hline
      \FExpFetchDir
      & $
        \begin{aligned}
          \ZipperMeta &= \GetZip{\FZipNodeSNI{\FSpecDir}}{}{}{}
          \\
          \FSDirD &= \MapGet{\FSMeta}{\PathMeta}
        \end{aligned}
        $
      & $
        \begin{array}{lcl}
          \ReturnF{\SetMeta}{\List{\TxLReadF{(\FSDirD)}{\PathMeta}}}
          \span \span
        \end{array}
        $
      \\ \hline
      \FExpFetchPath
      & $
        \begin{aligned}
          \ZipperMeta &= \GetZip{\FZipNodeSNS{\FSpecPathD}}{}{}{}
        \end{aligned}
        $
      & $
        \begin{array}{lcl}
          \FEDenDLE{\ExpMeta}
          \span \span
        \end{array}
        $
      \\ \hline
      \FExpFetchComp
      & $
        \begin{aligned}
          \ZipperMeta &= \GetZip{\FZipNodeSNS{\FSpecCompD}}{}{}{}
        \end{aligned}
        $
      & $
        \begin{array}{lcl}
          \FEDenDLE{\ExpMeta}
          \span \span
        \end{array}
        $
      \\ \hline
      \FExpFetchOpt
      & $
        \begin{aligned}
          \ZipperMeta &= \GetZip{\FZipNodeSNI{\FSpecOptD}}{}{}{}
        \end{aligned}
        $
      & $
        \begin{array}{lcl}
          \ReturnF{\PathMeta \in \FSMeta}{\List{\TxLReadF{\MapGet{\FSMeta}{\PathMeta}}{\PathMeta}}}
          \span \span
        \end{array}
        $
      \\ \hline
      \FExpFetchPred
      & $
        \begin{aligned}
          \ZipperMeta &= \GetZip{\FZipNodeSNS{\FSpecPredD}}{}{}{}
        \end{aligned}
        $
      & $
        \begin{array}{lcl}
          \FEDenDLE{\ExpMeta}
          \span \span
        \end{array}
        $
      \\ \Xhline{3\arrayrulewidth}
      \FExpRunCD
      & $
        \begin{aligned}
          \pair{\ContextMeta}{\LogMeta} &= \FEDenDE{\ExpMeta}
          \\
          (\ContextMeta',\ignoreArg,\LogMeta') &= \FCDenFG{\FNavigationMeta}{\ContextMeta}{\GContextExp}
        \end{aligned}
        $
      & $
        \begin{array}{lcl}
          \ReturnF{\ContextMeta'}{\ListAppendLF{\LogMeta}{\LogMeta'}}
          \span \span
        \end{array}
        $
      \\ \hline
      \FExpRunED
      & $
        \begin{aligned}
          \pair{\ContextMeta}{\LogMeta} &= \FEDenDE{\ExpMeta}
          \\
          \pair{\ValMeta}{\LogMeta'} &= \FEDenF{\FExpMeta}{\ContextMeta}{\GContextExp}
        \end{aligned}
        $
      & $
        \begin{array}{lcl}
          \ReturnF{\ValMeta}{\ListAppendLF{\LogMeta}{\LogMeta'}}
          \span \span
        \end{array}
        $
      \\ \hline
      \FFunVerify
      & $\btrue$
      & $
        \begin{array}{lcl}
        \pair{\PathMeta'}{\ZipperMeta'} &=& \FFGoToRootD
          \\
          \FPConsistentF{(\PathMeta',\ZipperMeta')}{\GContextExp}
          \span \span
        \end{array}
        $
      \\ \hline
    \end{tabular}
  \)
  
  \caption{Expression Semantics}
  \label{fig:forest-expressions}
\end{figure}

%
%
%
\pnote{Forest Command Wrap-up + Forest Expression semantics intro.}
With that, we have covered the semantics of all of the Forest Commands, but 
their subexpressions remain.
%
%
The semantics of non-standard expressions is given in
\figref{fig:forest-expressions}. The interpretation of each row is the same as
for commands. There is one $\FComStyle{Fetch}$ expression per specification
except for pairs, which have no useful information available locally. Since a
pair is defined in terms of its sub-specifications, we must navigate to them
before fetching information from them. This design avoids incurring the
cost of eagerly loading a large filestore.

\pnote{Fetch Semantics.}
Fetching a file returns the string contained by the file at the current path.
For a directory, we get the list of its children. Both of these log
\TxLReadName{}s since they inspect the file system. For a path specification,
the only locally available information is the actual path. For a
comprehension, we return the string set. For an option, we find out whether
the current path is in the file system or not and log a \TxLReadName{}
regardless. Finally, for a predicate, we determine if it holds.

\jdl{Reviewer 3 is confused about the fact that fetching a file returns a string
rather than a file-handle (and assumes that in the implementation, it isn't so).
I'm pretty convinced we shouldn't do anything about this.}

\pnote{Run Expression Semantics.}
There are two $\FComStyle{Run}$ expressions. The subexpression,
$\ExpMeta$, must evaluate to a local context. These can only come from a
dependent pair, which means that $\FComStyle{Run}$s can only occur as
subexpressions of specifications. We utilize them by performing traversals
($\FExpRunCD$) and evaluating Forest expressions ($\FExpRunED$) in the input
context. For example, a filestore defined by a file \texttt{index.txt} and a
set of files listed in that index could be described as follows:
\begin{align*}
  &\FSpecPairF
    {index}
    {\FSpecPathF{"index.txt"}{\FSpecFile}}
    {\\&
    \FSpecCompF
      {\FSpecPathF{\VarMeta}{\FSpecFile}}
      {\VarMeta}
      {\texttt{lines\_of}~
        (\FExpRunF
          {\FExpFetchFile}
          {(\FExpRunF{\FComDown}{index})}
        )}}
\end{align*}
where \texttt{lines\_of} maps a string to a string set by splitting it by lines.

\pnote{Verification and Partial Consistency Semantics.}
Finally, \FFunVerify{} checks the partial consistency of the traversed part of
the filestore---i.e. whether it conforms to our specification. Unfortunately,
checking the entire filestore, even incrementally can be very expensive and,
often, we have only performed some local changes and thus do not need the full
check. Partial consistency is a compromise wherein we only check the portions of
the filestore that we have traversed, as denoted by the path set. This ensures
that the cost of the check is proportional to the cost of the operations we have
already run. Partial consistency is formally defined in the next subsection,
which among other properties, details the connection between partial and full
consistency.

\jdl{Reviewer 3 suggests that examples may be in order too (though they only
requested the type signatures)}

\subsection{Properties}\label{subsec:forest-properties}
%
%
%
\pnote{Properties intro.}
This section establishes properties of the \TXF{} core
calculus: consistency and partial consistency, equational identities
on commands, and round-tripping laws.

\begin{figure}
  \(
    \begin{tabular}{| c | c | l |}
      \hline
      Spec: \SpecMeta & 
      Conditions: \PredListMeta & 
      Def. of 
        $\FPConsistentZ{\GetZip{\pair{\EnvMeta}{\SpecMeta}}{}{}{} \text{ as } \ZipperMeta}  
        ~ \text{ when } \PredListMeta$
      \\ \Xhline{5\arrayrulewidth}
      $\ignoreArg$
      & $\PathMeta \notin \PathSetMeta $
      & $
        \begin{array}{lcl}
          \FPConstRetE{\btrue}{\bfalse}
          \span \span
        \end{array}
        $
      \\ \Xhline{3\arrayrulewidth}
      $\FSpecFile{}$
      & $\PathMeta \in \PathSetMeta $
      & $
        \begin{array}{lcl}
          \FPConstRetR{\MapGet{\FSMeta}{\PathMeta} = \FSFileF{\ignoreArg}}{\btrue}
          \span \span
        \end{array}
        $
      \\ \hline
      $\FSpecDir{}$
      & $\PathMeta \in \PathSetMeta $
      & $
        \begin{array}{lcl}
          \FPConstRetR{\MapGet{\FSMeta}{\PathMeta} = \FSDirF{\ignoreArg}}{\btrue}
          \span \span
        \end{array}
        $
      \\ \hline
      $\FSpecPathD$
      & $\PathMeta \in \PathSetMeta $
      & $
        \begin{array}{lcl}  
        \pair{\StrMeta}{\LogMeta} & = & \FEDenDE{\ExpMeta}
        \\
          \FPConstRetF{\MapGet{\FSMeta}{\PathMeta} = \FSDirF{\ignoreArg}}{\btrue}
          {\ListAppendLF{\LogMeta}{(\TxLReadFS)}}
          ~ \bland \span \span
        \\
          \FPConsistentF{\pair{\PathConsS}
            {\MkZip{\pair{\EnvMeta}{\SpecMeta}}{\ZipperMeta}{}{}}}
          {\GContextExp}
          \span \span
        \end{array}
        $
      \\ \hline
      $\FSpecPairD$
      & $\PathMeta \in \PathSetMeta $
      & $
        \begin{array}{lcl}
          \ContextMeta & = & (\EnvMeta,\PathMeta,\MkZip{\pair{\EnvMeta}{\SpecMeta_1}}{}{}{})
          \\
          \EnvMeta' & = & \EnvMeta\subst{\VarMeta}{\ContextMeta}
          \\
          \FPConsistentZ{\MkZip{\pair{\EnvMeta}{\SpecMeta_1}}{\ZipperMeta}{}
          {\List{\pair{\EnvMeta'}{\SpecMeta_2}}}}
          ~ \bland \span \span
        \\
          \FPConsistentZ{\MkZip{\pair{\EnvMeta'}{\SpecMeta_2}}{\ZipperMeta}
          {\List{\pair{\EnvMeta}{\SpecMeta_1}}}{}}
          \span \span
        \end{array}
        $
      \\ \hline
      $\FSpecCompD$
      & $\PathMeta \in \PathSetMeta $
      & $
        \begin{array}{lcl}
          \pair{\SetMeta}{\LogMeta'} & = & \FEDenDE{\ExpMeta}
          \\
          \FPConstRetF{\BoolMeta_1}{\BoolMeta_2}{\LogMeta} & = & 
            \bigwedge\limits_{\ValMeta \in \SetMeta}
            \FPConsistentZ{\MkZip{\pair{\EnvMeta\subst{\VarMeta}{\ValMeta}}{\SpecMeta}}{\ZipperMeta}{}{}}
          \\ 
          \FPConstRetF{\BoolMeta_1}{\BoolMeta_2}{\ListAppendLF{\LogMeta'}{\LogMeta}}
            \span \span
        \end{array}
        $
      \\ \hline
      $\FSpecOptD$
      & $\PathMeta \in \PathSetMeta $
      & $
        \begin{array}{lcl}
          \FPConstRetR{\PathMeta \notin \FSMeta}{\btrue} ~ \blor ~ 
          \span \span
          \\ 
          \FPConsistentZ{\MkZip{\pair{\EnvMeta}{\SpecMeta}}{\ZipperMeta}{}{}}
          \span \span
        \end{array}
        $
      \\ \hline
      $\FSpecPredD$
      & $\PathMeta \in \PathSetMeta $
      & $
        \begin{array}{lcl}
          \pair{\BoolMeta}{\LogMeta} & = & \FEDenDE{\ExpMeta}
          \\
          \FPConstRetD{\BoolMeta} 
          \span \span          
        \end{array}
        $
      \\ \hline
    \end{tabular}
  \)
  \[
    \begin{array}{lcl}
      \pair{\pair{\bfalse}{\ignoreArg}}{\LogMeta} ~ \bland ~ \ignoreArg & \eqdef &  \pair{\pair{\bfalse}{\bfalse}}{\LogMeta}
      \\
      \pair{\pair{\BoolMeta_1}{\BoolMeta_2}}{\LogMeta} ~ \bland ~ \pair{\pair{\BoolMeta_1'}{\BoolMeta_2'}}{\LogMeta'}
        & \eqdef &  \pair{\pair{\bandf{\BoolMeta_1}{\BoolMeta_1'}}{\bandf{\BoolMeta_2}{\BoolMeta_2'}}}
                          {\ListAppendLF{\LogMeta}{\LogMeta'}}
      \\
      \pair{\pair{\btrue}{\btrue}}{\LogMeta} ~ \blor ~ \ignoreArg & \eqdef &  \pair{\pair{\btrue}{\btrue}}{\LogMeta}
      \\
      \pair{\pair{\BoolMeta_1}{\BoolMeta_2}}{\LogMeta} ~ \blor ~ \pair{\pair{\BoolMeta_1'}{\BoolMeta_2'}}{\LogMeta'}
        & \eqdef &  \pair{\pair{\borf{\BoolMeta_1}{\BoolMeta_1'}}{\borf{\BoolMeta_2}{\BoolMeta_2'}}}
                          {\ListAppendLF{\LogMeta}{\LogMeta'}}

      \\
      \\
      \FPathCoverD & \defiff & \pairSnd ~ (\pairFst ~ (\FPConsistentD))
    \end{array}
  \]
  
  %
  \caption{Partial Consistency and Cover}
  \label{fig:partial-consistency}
\end{figure}

%
%
%
\pnote{Consistency.}
The formal definition of partial consistency is given in
\figref{fig:partial-consistency}. Intuitively, full consistency (\FConsistentName{})
captures whether a filestore conforms to its specification. For example, the
file system, $\FSMeta$, at $\PathMeta$ conforms to $\FSpecFile$ if and only if
$\MapGet{\FSMeta}{\PathMeta} = \FSFileF{\ignoreArg}$ and to $\FSpecPathD$ if
$\ExpMeta$ evaluates to $\StrMeta$ and $\FSMeta$ at
$\PathCons{\PathMeta}{\StrMeta}$ conforms to $\SpecMeta$. Partial consistency
(\FPConsistentName{}) then checks partial conformance (i.e. does the filestore
conform to part of its specification). \FPConsistentName{} returns two booleans
(and a log), the first describing whether the input filestore is consistent with
the input specification and the second detailing whether that consistency is
total or partial. The definition of full consistency is very similar to partial,
except that there are no conditions and the pathset is ignored. The properties
below describe the relationship between partial consistency and full
consistency. Their proofs can be found in
\appref{app:properties}.

\jdl{Perhaps describe some rule more closely}

%
%
%
\pnote{Consistency theorems.}
\begin{theorem}\torestate{
  \label{thm:const-to-pconst}
  Consistency implies partial consistency: \\
  $\forall \PathSetMeta. ~ \FConsistentD \implies \pairFst ~ (\FPConsistentD)$
}
\end{theorem}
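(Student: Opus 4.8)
The plan is to prove this by structural induction on the specification $\SpecMeta$ at the current node of the zipper. The first step is to pin down precisely how $\FConsistentName$ relates to $\FPConsistentName$: as the text says, $\FConsistentName$ is obtained from the table in \figref{fig:partial-consistency} by dropping the side conditions, ignoring the path set $\PathSetMeta$, and keeping only the \emph{first} boolean component of each clause — so the pair-lifted connectives $\bland$, $\blor$ at the bottom of that figure collapse to ordinary $\wedge$, $\vee$ on booleans, and the threaded log $\LogMeta$ is irrelevant to the first projection. With this identification fixed, the argument becomes a clause-by-clause comparison: for each shape of $\SpecMeta$, show that full consistency of the clause entails the first component of the corresponding partial-consistency clause.

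The easy half is the first row of \figref{fig:partial-consistency}: if $\PathMeta \notin \PathSetMeta$ then $\pairFst(\FPConsistentName(\dots)) = \btrue$ unconditionally, so the implication is vacuous. So I would assume $\PathMeta \in \PathSetMeta$ and split on $\SpecMeta$. For the base cases $\FSpecFile$, $\FSpecDir$ and $\FSpecPredD$, the first component of the partial clause is literally the body of the full clause — e.g.\ $\MapGet{\FSMeta}{\PathMeta} = \FSFileF{\ignoreArg}$, or the boolean produced by evaluating $\ExpMeta$, which is the same value in both since expression evaluation is deterministic — so there is nothing to prove. For $\FSpecPathD$, $\FSpecPairD$, $\FSpecCompD$ and $\FSpecOptD$, I would unfold the pair-lifted $\bland$/$\blor$: the first component of the result is $\wedge$ (resp.\ $\vee$) of a ``local'' boolean — identical in the partial and full clauses — with the first components of one or more recursive calls on a proper subspecification, at the same path set, with exactly the same environments, extended path, and (for comprehensions) the same finite index set $\SetMeta = \pairFst(\FEDenDE{\ExpMeta})$ that full consistency uses. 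Applying the induction hypothesis to each such recursive call turns full consistency of the sub-filestore into the first component of its partial consistency, and reassembling the $\wedge$/$\vee$ yields the result. The $\FSpecOptD$ case needs one extra remark: if $\PathMeta \notin \FSMeta$ the partial clause already has $\btrue$ in its first component, and otherwise the disjunction in the full clause forces $\FConsistentName$ of the child, which the induction hypothesis then converts.

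The main obstacle is not any individual case but the well-foundedness bookkeeping: the comprehension clause unfolds its body once per element of $\SetMeta$, a set obtained by evaluating an expression against the file system, so one must observe that this set is finite for the filestore fragment in question and — more to the point — that every recursive occurrence in \figref{fig:partial-consistency} is on a strict subspecification of $\SpecMeta$, so ordinary structural induction on $\SpecMeta$ suffices. It is cleanest to state the induction hypothesis universally over all path sets $\PathSetMeta$, even though in fact the path set is unchanged by every recursive call. The rest is the routine verification that the derived operators $\bland$, $\blor$ on $(\text{bool},\text{bool},\LogMeta)$ triples act as plain $\wedge$, $\vee$ on first components and that the logs play no role in $\pairFst$ — both immediate from the definitions at the bottom of \figref{fig:partial-consistency}.
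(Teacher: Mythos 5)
Your proposal is correct and follows essentially the same route as the paper: a structural induction on the specification carried in the zipper, with the observation that the log component is irrelevant and that the $\PathMeta \notin \PathSetMeta$ row makes the partial check trivially true. The paper states this in one sentence; your clause-by-clause elaboration fills in exactly the details that sentence elides.
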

\begin{theorem}\torestate{
  \label{thm:pconst-monotonicity}
  Partial Consistency is monotonic w.r.t. the path set:\\
  $\forall \PathSetMeta_1,\PathSetMeta_2. ~ \PathSetMeta_2 \subseteq \PathSetMeta_1
  \implies$\\
  $\pairFst ~ (\FPConsistentF{\pair{\EnvMeta}{\ZipperMeta}}{\pair{\PathSetMeta_1}{\FSMeta}}) \implies
  \pairFst ~ (\FPConsistentF{\pair{\EnvMeta}{\ZipperMeta}}{\pair{\PathSetMeta_2}{\FSMeta}})$\\
  $\band ~
  \pairSnd ~ (\FPConsistentF{\pair{\EnvMeta}{\ZipperMeta}}{\pair{\PathSetMeta_2}{\FSMeta}}) \implies
  \pairSnd ~ (\FPConsistentF{\pair{\EnvMeta}{\ZipperMeta}}{\pair{\PathSetMeta_1}{\FSMeta}})
  $
}
\end{theorem}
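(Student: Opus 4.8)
The plan is to prove both implications simultaneously by structural induction on the specification at the focus node of $\ZipperMeta$, generalizing over the path sets $\PathSetMeta_1 \supseteq \PathSetMeta_2$, the current path, and the surrounding zipper, so that the induction hypothesis applies to every sub-specification reached by the recursion of $\FPConsistentD$ --- always at a (possibly) extended path and under a (possibly) extended environment, but under the \emph{same} two path sets. The outer step is a case split on where the current focus path $\PathMeta$ lies relative to the two sets. If $\PathMeta \notin \PathSetMeta_1$, then $\PathMeta \notin \PathSetMeta_2$ as well, both evaluations fire the ``path-not-covered'' row and return $(\btrue,\bfalse)$, and both implications are immediate. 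If $\PathMeta \in \PathSetMeta_1 \setminus \PathSetMeta_2$, the $\PathSetMeta_1$-evaluation takes the specification-specific row while the $\PathSetMeta_2$-evaluation returns $(\btrue,\bfalse)$; the first implication is then trivial (its consequent's first component is $\btrue$) and the second is vacuous (its antecedent's second component is $\bfalse$). Only when $\PathMeta \in \PathSetMeta_2$ do both evaluations take the same specification-specific row, and this is the only case that uses the induction hypothesis.

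In that case I would proceed row by row. $\FSpecFile{}$ and $\FSpecDir{}$ return identical pairs on both sides. For $\FSpecPathD$, $\FSpecCompD$, and $\FSpecPredD$ I would first invoke a lemma (call it Lemma~A) that $\FEDenDE{\ExpMeta}$ --- both the value and the log it emits --- is independent of the path-set component of the global context, so that the string, set, or boolean witness is the same on both sides; for the path and comprehension cases the ensuing recursive $\FPConsistentD$ calls are on structurally smaller specifications under the same two path sets, so the induction hypothesis applies directly, and the stated result follows by combining those sub-results with the remaining operand of the row --- a fixed, path-set-independent pair (for instance $(\MapGet{\FSMeta}{\PathMeta}=\FSDirF{\ignoreArg},\btrue)$ for $\FSpecPathD$), or a finite $\bland$-fold for $\FSpecCompD$ that I would discharge with a side induction on $|\SetMeta|$. $\FSpecPairD$ is the same pattern with two recursive calls whose environments are extended uniformly and hence never touch the path sets, and $\FSpecOptD$ the same with a single recursive call disjoined with the path-set-independent pair $(\PathMeta\notin\FSMeta,\btrue)$. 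All of these reduce, modulo Lemma~A, to Lemma~B: the combining operators $\bland$ and $\blor$ \emph{propagate} the two monotonicity directions --- if both argument triples satisfy ``first component: $\PathSetMeta_1$-value implies $\PathSetMeta_2$-value'' and ``second component: $\PathSetMeta_2$-value implies $\PathSetMeta_1$-value'', then so do $T_1 \bland T_2$ and $T_1 \blor T_2$.

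Lemma~A is itself an induction on $\ExpMeta$, run mutually with the analogous statement for Forest navigations (because $\FExpRunCD$ runs a navigation); the one delicate point is $\FFunVerify$, which re-enters $\FPConsistentD$ carrying the ambient path set, so one either restricts $\FFunVerify$ to top-level use or folds the present theorem into that same mutual induction, treating the flags $\FFunVerify$ returns monotonically. Lemma~B is a finite case analysis of the definitions of $\bland$ and $\blor$, made fiddly by the fact that the first $\bland$ rule (and the $(\btrue,\btrue)$ rule for $\blor$) short-circuits and \emph{discards} the right operand.

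I expect Lemma~B, and with it the second implication, to be the main obstacle. The difficulty is that when $\PathSetMeta_1$ causes a sub-check's first component to become $\bfalse$ --- the larger region exposing an inconsistency the smaller region never inspected --- an enclosing $\bland$ collapses to $(\bfalse,\bfalse)$ and forces the totality flag false, whereas under $\PathSetMeta_2$ the same $\bland$ can retain a $\btrue$ totality flag. Pushing the second implication through therefore seems to need an auxiliary characterization of when the totality flag is $\btrue$ --- morally, that a $\btrue$ flag under $\PathSetMeta_2$ certifies that the $\PathSetMeta_2$-evaluation never took the ``path-not-covered'' row, so that the $\PathSetMeta_1$-evaluation recurses over exactly the same paths --- and I anticipate that $\FSpecOptD$ (and through it $\FSpecPairD$), where the disjunctive combinator resets the totality flag, is where this invariant is most delicate to formulate and maintain.
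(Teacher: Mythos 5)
Your overall strategy---structural induction on the specification at the focus of the zipper, a three-way case split on where $\PathMeta$ sits relative to $\PathSetMeta_2 \subseteq \PathSetMeta_1$, path-set-independence of expression evaluation (your Lemma~A), and propagation of the two implications through $\bland$ and $\blor$ (your Lemma~B)---is exactly the paper's approach: the paper proves all three consistency theorems with the single sentence that they follow ``by induction on the structure of the specification in $\ZipperMeta$,'' ignoring the log component. Your $\FFunVerify$ caveat in Lemma~A is a real subtlety the paper does not acknowledge, and your handling of the first implication is sound, since the first boolean coordinate of both combinators is an honest conjunction/disjunction and the not-covered row only ever turns real checks into $\btrue$.

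The obstacle you flag for the second implication, however, is not merely ``delicate to formulate'': with the definitions as printed it is fatal, and the auxiliary invariant you propose (a $\btrue$ totality flag under $\PathSetMeta_2$ certifies the not-covered row was never taken) is false. When $\PathMeta \in \FSMeta$, the option row computes $\pair{\bfalse}{\btrue} \blor (\text{recursive check})$; the short-circuit clause does not apply, and the pointwise $\blor$ forces the totality flag to $\btrue$ regardless of how partial the recursive check was. Concretely, take a dependent pair whose \emph{first} component is an option wrapping a path specification whose target $\PathCons{\PathMeta}{\StrMeta}$ is a directory where the inner spec demands $\FSpecFile$, whose second component is $\FSpecDir$, with $\PathSetMeta_2 = \{\PathMeta\}$ and $\PathSetMeta_1 = \{\PathMeta, \PathCons{\PathMeta}{\StrMeta}\}$. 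Under $\PathSetMeta_2$ the child check returns $\pair{\btrue}{\bfalse}$, the path row gives $\pair{\btrue}{\bfalse}$, the option launders this to $\pair{\btrue}{\btrue}$, and the pair's $\bland$ yields $\pair{\btrue}{\btrue}$. Under $\PathSetMeta_1$ the child check returns $\pair{\bfalse}{\btrue}$, the option yields $\pair{\bfalse}{\btrue}$, and the pair's short-circuiting $\bland$ collapses to $\pair{\bfalse}{\bfalse}$. The totality flag thus drops from $\btrue$ to $\bfalse$ as the path set grows, refuting the second conjunct of the theorem. So your plan cannot be completed as stated; it requires first repairing the definition (e.g., conjoining rather than disjoining the totality flags in the option row when the path exists, and decoupling the totality flag from the $\bland$ short-circuit), after which your invariant becomes true and the rest of your argument goes through.
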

This theorem says that if $\PathSetMeta_1$ is partially consistent,
then any path set, $\PathSetMeta_2$, that is a subset of $\PathSetMeta_1$ will also be
partially consistent. Conversely, if the consistency of $\PathSetMeta_2$ is
total, $\PathSetMeta_1$ will also be totally consistent.

\begin{theorem}\torestate{
  \label{thm:pconst-eq-const}
  Given a specification $\SpecMeta$ and a path set $\PathSetMeta$ that
  covers the entirety of $\SpecMeta$, partial consistency is exactly full
  consistency: \\
  $\forall \PathSetMeta. ~ \exists \PathSetMeta'. ~
  \FPathCoverF{\pair{\PathMeta}{\ZipperMeta}}{\pair{\PathSetMeta'}{\FSMeta}} 
  ~ \band ~ \PathSetMeta' \subseteq \PathSetMeta
  \implies$\\
  \hspace*{2.1em}$\FConsistentD \iff \pairFst ~ (\FPConsistentD)$
}
\end{theorem}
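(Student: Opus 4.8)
The plan is to reduce this to a single lemma about partial consistency and then prove that lemma by structural induction on the specification. First observe that $\FPathCoverD$ is, by its defining equation, exactly the totality component $\pairSnd(\pairFst(\FPConsistentD))$ of the partial-consistency result, and read the antecedent as: there is a $\PathSetMeta' \subseteq \PathSetMeta$ at which $\FPathCover$ holds. Given such a $\PathSetMeta'$, the totality part of Theorem~\ref{thm:pconst-monotonicity} (upward monotonicity of the totality flag in the pathset) transfers the flag from $\PathSetMeta'$ to the larger $\PathSetMeta$, so $\FPathCover$ holds at $\PathSetMeta$ itself. It therefore suffices to prove the \emph{lemma}: for every $\SpecMeta$ and all $\EnvMeta, \ZipperMeta, \PathMeta, \PathSetMeta, \FSMeta$, if the totality component of $\FPConsistentD$ is $\btrue$, then its consistency component equals $\FConsistentD$ --- and that boolean equality is the claimed biconditional.

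I would prove the lemma by structural induction on $\SpecMeta$, leaving $\EnvMeta, \ZipperMeta, \PathMeta, \PathSetMeta, \FSMeta$ universally quantified so that the recursive calls in Figure~\ref{fig:partial-consistency} --- which change the path ($\FSpecPathD$), the environment ($\FSpecPairD$, $\FSpecCompD$), or the zipper --- are covered by the induction hypothesis. The uniform idea is that the only clause able to report ``not total'' without consulting the file system is the $\PathMeta \notin \PathSetMeta$ row, which returns $(\btrue,\bfalse)$; hence a $\btrue$ totality flag forces us into the $\PathMeta \in \PathSetMeta$ branch, whose body is the same conjunction (or, for options, disjunction) of sub-checks as the corresponding clause of full consistency, up to the totality bookkeeping and the shared pathset guard. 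For $\FSpecFile$ and $\FSpecDir$ both sides are literally the single lookup $\MapGet{\FSMeta}{\PathMeta}$. For $\FSpecPathD$, $\FSpecPairD$, and $\FSpecCompD$ the body is a $\bland$ of recursive partial-consistency calls (plus a directory check for paths): since $\bland$ is pointwise on the totality component (after its short-circuit), a $\btrue$ overall totality forces every recursive totality to $\btrue$ (and, for paths, forces $\PathCons{\PathMeta}{\StrMeta}$ into the pathset), so the induction hypothesis equates each recursive consistency component with the matching full-consistency value, and the outer conjunction of those components reproduces the conjunction in the full-consistency clause; for comprehensions one also uses that an empty $\bigwedge$ is $(\btrue,\btrue)$. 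In all three, $\ExpMeta$ is evaluated by the same expression semantics in both definitions (so the witnessed $\StrMeta$/set agree, and if that evaluation diverges both sides are undefined and nothing is to be shown). The $\FSpecPredD$ case is immediate, its body being exactly the boolean obtained from $\ExpMeta$.

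The hard part is the $\FSpecOptD$ case, since it is the only place partial consistency forms a disjunction --- its body is $(\PathMeta \notin \FSMeta) \blor (\text{recursive check on } \SpecMeta)$. If $\PathMeta \notin \FSMeta$, the left operand is $(\btrue,\btrue)$, $\blor$ short-circuits, and full consistency also returns $\btrue$, so both agree. Otherwise the left operand's consistency component is $\bfalse$, and one must argue --- from the precise definition of $\blor$ together with the convention, already visible in the first clause of the $\bland$ definition, that an inconsistent result is reported non-total --- that a $\btrue$ combined totality flag forces the recursive check on $\SpecMeta$ to have totality $\btrue$, whence the induction hypothesis gives recursive consistency equal to full consistency of $\SpecMeta$, matching the ``$(\PathMeta \notin \FSMeta)$ or consistent-with-$\SpecMeta$'' clause of full consistency. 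This single combinator-level argument --- showing that a $\btrue$ totality flag surviving a disjunction really certifies that the chosen branch was fully explored, rather than being produced cheaply by the ``path does not exist'' test while the subtree under $\SpecMeta$ was only partially traversed --- is where the real content lies; if it should fail to hold verbatim for the definition exactly as drawn, the remedy is to conjoin rather than disjoin the totality component in the option clause (consistent with how $\bland$ already treats it), which leaves Theorems~\ref{thm:const-to-pconst} and~\ref{thm:pconst-monotonicity} intact. Everything else is routine case analysis mirroring Figure~\ref{fig:partial-consistency}.
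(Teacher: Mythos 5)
Your route is the same as the paper's: the entire proof given there for all three consistency theorems is the single sentence that they follow ``by induction on the structure of the specification in $\ZipperMeta$,'' ignoring the log component --- which is precisely your lemma plus your structural induction. Your preliminary step of using the totality half of \thmref{thm:pconst-monotonicity} to move the cover from $\PathSetMeta'$ up to $\PathSetMeta$ is left implicit in the paper but is the natural way to discharge the existential, and your case analysis is the content the paper's appeal to ``straightforward induction'' leaves unstated.

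The one place where you go beyond the paper is the option case, and your suspicion there is justified rather than merely cautious. With $\blor$ as defined, when $\PathMeta \in \FSMeta$ the left disjunct evaluates to consistency component $\bfalse$ and totality component $\btrue$, so the short-circuit clause does not fire and the pointwise clause yields overall totality $\borf{\btrue}{\BoolMeta_2'} = \btrue$ regardless of what the recursive check on the option's body reports. Hence the totality flag --- and with it $\FPathCoverName$ --- can be $\btrue$ at an option node whose subtree was never traversed: take an option whose body is a path specification over $\FSpecFile$, with $\PathSetMeta = \Set{\PathMeta}$, $\MapGet{\FSMeta}{\PathMeta}$ an existing directory containing the name $\StrMeta$ produced by the path expression, and $\MapGet{\FSMeta}{\PathCons{\PathMeta}{\StrMeta}}$ a directory rather than a file. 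Then $\pairFst$ of $\FPConsistentName$ is optimistically $\btrue$ and its totality flag is $\btrue$, while $\FConsistentName$ is $\bfalse$, so the biconditional fails for the definitions exactly as drawn. Your repair --- conjoining the totality components after the short-circuit, as $\bland$ already does --- restores the theorem and leaves Theorems~\ref{thm:const-to-pconst} and~\ref{thm:pconst-monotonicity} untouched; the paper's one-sentence proof does not engage with this point at all.
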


\jdl{Make sure this new cover definition works well. It seems much simpler and cleaner.}

This theorem says that if the path set, $\PathSetMeta$ is a superset of one that
covers the entirety of the filestore, $\PathSetMeta'$, as defined in
\figref{fig:partial-consistency}, then the filestore is totally consistent
exactly when it is partially consistent. Intuitively, if a path set covers a
filestore then we can never encounter a path outside of the path set while
traversing the zipper.

\jdl{TODO: Currently, the theorems ignore the log information returned from
consistent and pconsistent, which is more convenient than having unnecessary
additions, but is perhaps not well motivated. Auxiliary functions may be a good
move.}

%
%
%
\pnote{Round-tripping laws and identities.}
Other properties of the language include identities of the form
$\areEquiv{\FCDen{\ComSeqF{\FComDown}{\FComUp}}}{\FCDen{\ComSkip}}$ where
$\undefeq$ denotes equivalence modulo log when defined. That is, either
$\FCDen{\ComSeqF{\FComDown}{\FComUp}}$ is undefined, or it has the same action
as $\FCDen{\ComSkip}$, barring logging. Additionally, we have proven
round-tripping laws in the style of lenses~\cite{focal-toplas}
stating, for example, that storing just loaded data is equivalent to \ComSkip{}.
Further identities and formal statements of the round-tripping laws can be
found in \appref{app:properties}.

%
%
%
\pnote{Segue into concurrency.}
This concludes the description of the syntax, semantics, and properties of
\TXF{}. So far, we have focused on a single thread of execution, but
to fulfill our goal of proving that multiple \TXF{} transactions are
serializable with respect to one another, we need to be able to model them
running concurrently.

\section{Concurrency Control}\label{sec:concurrency}
%
%
%
\pnote{Section Overview: Global semantics, denotational is serial, operational is serializable.} 
This section introduces the global semantics of \TXFFull{}, using both a
denotational semantics to concisely capture a serial semantics, and an
operational semantics to capture thread interleavings and concurrency. Then, we
state our serializability theorem by relating the two semantics.

\begin{figure}[t]
  \(
  \begin{array}{@{}r@{\;}c@{\;}l@{~~}l@{}}
  &&& \vdots \\
  \TimestampMeta & \in & \TimestampSet & \text{Timestamps} \\
  \GLogMeta & \in & \GLogSet & \text{Timestamped Logs} \\
  \ThreadMeta & \in & \ThreadSet
  & \eqdef
  \product{\EnvSet}{\product{\FSSet}{\product{\PathSet}{\product{\PathSetSet}{\product{\ZipperSet}{\CommandSet}}}}}
  \\
  \TxStateMeta & \in & \TxStateSet
  & \eqdef \product{\CommandSet}{\product{\TimestampSet}{\LogSet}} \\
  \TransactionMeta & \in & \TransactionSet
  & \eqdef \product{\ThreadSet}{\TxStateSet} \\
  \TxBagMeta & \in & \TxBagSet & \eqdef \mathit{\TransactionSet ~ Bag}
  \end{array}
  \)
  \caption{Global Semantics Additional Syntax}
  \label{fig:forest-tx-syntax}
\end{figure}

%
%
%
\pnote{Brief syntax addition.}
\figref{fig:forest-tx-syntax} lists the additional syntax used in this section.
Timestamped logs are the logs of the global semantics. They are identical
to local logs except that each entry also contains a timestamp signifying when
it was written to the log.

\jdl{Reviewer 3 quabbles about 'local context' since it it overloaded with local
context in the previous section. They are right. After changing previous
section, check to make sure this makes sense.}

Each \ThreadSet{} is captured by its local context, which, along with its
transactional state, \TxStateSet{}, denotes a \TransactionSet{}. The
transactional state has 3 parts: (1) The full command that the transaction is
executing; (2) The time when the transaction started; and (3) The
transaction-local log recorded so far.

\jdl{TODO6: Here or earlier, add something about what \ignoreArg{} means}
%
%
%
\pnote{Denotational Semantics description + has no concurrency due to being serial.}
Our global denotational semantics is defined as follows:
\begin{align*}
  \FGSDenT{\TxMkF{\ThreadFS{\ignoreArg}}{\ignoreArg}} ~ &\eqdef ~
    \texttt{project\_fs} ~ (\FCDenDGE{\CommandMeta})\\
  \FGDenD ~ &\eqdef ~ \foldFunF{\FSMeta}{\ListMeta}{\FGSDenName}
\end{align*}
The denotation of one or more transactions is a function on file systems. For a
single transaction, it is the denotation of the command with the encapsulated
context except for the file system which is replaced by the input. For a list of
transactions, it is the result of applying the local denotation function in some
serial order. Note that the denotation of a transaction is precisely the
denotation of a program, $\FPDenName{}$, which can be lifted to multiple
programs by folding. The key point to note about this semantics is that there is
no interleaving of transactions. By definition, the transactions are run
sequentially. While this ensures serializability, it also does not allow for any
concurrency.

%
%
%
\pnote{Local Operational Semantics description.}
We will instead use an operational semantics that more easily models thread
interleaving and prove that it is equivalent to the denotational semantics.
First, we introduce an operational semantics for local commands. This semantics
is standard for IMP commands, but for Forest Commands, it uses the denotational
semantics, considering each a single atomic step, as seen below:

\jdl{As per Reviewer 3's suggestion, it's possible that we should qualify
'equivalent' here, since the sense in which it is equivalent is perhaps unusual.}

\infrule
{ ((\EnvMeta',\PathMeta',\ZipperMeta'),\pair{\PathSetMeta'}{\FSMeta'},\LogMeta)
   = \FCDenDGE{\FCommandMeta}}
{ \FAToACD{}{'}{\FCommandMeta}{\ComSkip}}

%
%
%
\pnote{Global Operational Semantics description.}
Next, we can construct the global operational semantics, as seen in
\figref{fig:forest-tx-semantics}. The global stepping relation is between two
global contexts which have three parts: A global file system, a global
log, and a thread pool, or bag of transactions.

\begin{figure}
  \infrule
  {\FStepTo{\FContMan{\ThreadMeta}}{\LogMeta'}{\FContMan{\ThreadMeta'}}}
  {\TxTToT
    {\TxThMDD}
    {\TxMkF{\ThreadMeta'}{\TxSL{\ListAppendLF{\LogMeta}{\LogMeta'}}}}
  }
  \vspace{1em}
  \infrule
  {\texttt{is\_Done?}~\ThreadMeta
   \quad \TxFCheckLogD
   \quad \GFSMeta' = \TxFMergeD
   \quad \GLogMeta' = \ListAppendLF{\GLogMeta}{(\TxAddTSF{\TxFreshTS}{\LogMeta})}
  }
  {\TxStepTo{\TxContUT{\TxThMD{\TxSD}}}
  {\TxCont{\GFSMeta'}{\GLogMeta'}{\TxBagMeta}}}

  \vspace{1em}

  \infrule
  { \texttt{is\_Done?}~\ThreadMeta
    \quad \neg (\TxFCheckLogD)
    \quad \TimestampMeta' = \TxFreshTS
    \quad \pair{\ZipperMeta'}{\PathMeta'} = \FFGoToRootD
  }
  {\TxTToT{\TxThMD{\TxSD}}{\TxStartF{\GFSMeta}{\TimestampMeta'}{'}}}

  \caption{Global Operational Semantics}
  \label{fig:forest-tx-semantics}
\end{figure}

There are only three actions that the global semantics can take:
\begin{enumerate}
  \item A transaction can step in the local
        semantics and append the resulting log.
  \item A transaction that is done, and does not conflict with previously
        committed transactions, can commit. It must check that none of its
        operations conflicted with those committed since its start. Conflicts
        occur when the transaction read stale data. Then, it will update the
        global file system according to any writes performed. Finally, the
        transaction will leave the thread pool. The definitions of
        \TxFCheckLogName{} and \TxFMergeName{} can be found in
        \figref{fig:merge-check-def}.
  \item A transaction that is done, but conflicts with previously committed
        transactions, cannot commit and instead has to restart. It does this by
        getting a fresh timestamp and resetting its log and local context.
\end{enumerate}

\begin{figure}
    \(
    \begin{array}{@{}r@{\;}c@{\;}l}
      \TxFMergeD & \eqdef & \foldFunF{\GFSMeta}{\LogMeta}{\TxFUpdateName}
      \\ \\
      \TxFUpdateF{\FSMeta}{(\TxLReadD)} & \eqdef & \FSMeta \\
      \TxFUpdateF{\FSMeta}{(\TxLWriteFileF{\ignoreArg}{\ContentsMeta}{\PathMeta})} & \eqdef
        & \TxFCloseF{(\FSMeta\subst{\PathMeta}{\ContentsMeta})} \\
      \TxFUpdateF{\FSMeta}{(\TxLWriteDirF{\ignoreArg}{\ContentsMeta}{\PathMeta})} & \eqdef
        & \TxFCloseF{(\FSMeta\subst{\PathMeta}{\ContentsMeta})}
      \\ \\
      \TxFCheckLogD & \eqdef
        & \forall \PathMeta' \in \TxFExtractPathsF{\LogMeta}. ~
        \forall (\TimestampMeta',\LogEntryMeta) \in \GLogMeta. \\
        && \TimestampMeta' < \TimestampMeta
        ~ \bor ~ \neg (\TxFConflictPathF{\PathMeta'}{\LogEntryMeta})
      \\
      \TxFConflictPathF{\PathMeta'}{(\TxLReadII)} & \eqdef & \bfalse \\
      \TxFConflictPathF{\PathMeta'}{(\TxLWriteFileII)} & \eqdef &  \TxFSubPathR \\
      \TxFConflictPathF{\PathMeta'}{(\TxLWriteDirII)} & \eqdef &  \TxFSubPathR
      \\ \\
      \TxFExtractPathsF{\ListEmpty} & \eqdef & \SetEmpty \\
      \TxFExtractPathsF{(\ListAppendIF{\TxLReadII}{tl})} & \eqdef
      & \Set{\PathMeta} \union (\TxFExtractPathsF{tl}) \\
      \TxFExtractPathsF{(\ListAppendIF{\TxLWriteFileII}{tl})} & \eqdef
      & \Set{\PathMeta} \union (\TxFExtractPathsF{tl}) \\
      \TxFExtractPathsF{(\ListAppendIF{\TxLWriteDirII}{tl})} & \eqdef
      & \Set{\PathMeta} \union (\TxFExtractPathsF{tl})
     \end{array}
    \)
  \jdl{Perhaps swap this to code-mode, and try to fit it in two mini-pages to
  save some space?}
  \caption{\TxFMergeName{} and \TxFCheckLogName{}}
  \label{fig:merge-check-def}
\end{figure}

\pnote{Global Operational Semantics has arbitary interleaving, but can roll-back
+ Completeness.}
In the operational semantics, thread steps can be interleaved arbitrarily, but
changes will get rolled back in case of a conflict. Furthermore, while Forest
Commands are modeled as atomic for simplicity, finer granularity would not
affect our results.

%
%
%
\pnote{Operational Semantics is serializable.}
With a useful global semantics, where transactions are run
concurrently, we now aim to prove that our semantics guarantees
serializability. The theorem below captures this property by
connecting the operational and denotational semantics:

\begin{theorem}[Serializability]\torestate{
  \label{thm:serializability}
  Let $\GFSMeta, \GFSMeta'$ be file systems, $\GLogMeta,\GLogMeta'$ be global
  logs, and $\TxBagMeta$ a thread pool such
  that $\forall \TransactionMeta \in \TxBagMeta. ~ \TxFInitialDG$, then:
  \\ \\
  $\TxMStepTo{\TxContAll{}}{\TxCont{\GFSMeta'}{\GLogMeta'}{\SetEmpty}}
    \implies
    \exists \ListMeta \in Perm(\TxBagMeta). ~ \FGDenF{\ListMeta}{\GFSMeta} = \GFSMeta'
  $
  \\ \\
  where $\TxMStep$ is the reflexive, transitive closure of $\TxStep$.
  }
\end{theorem}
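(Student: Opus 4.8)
The plan is to use the order in which transactions commit as the serialization order, and then to show that the fold defining $\FGDenF{\ListMeta}{\GFSMeta}$, peeled one transaction at a time, reproduces the successive states of the global file system along the given trace. Since the trace ends with an empty thread pool and a transaction leaves the pool only through the commit rule, each $\TransactionMeta \in \TxBagMeta$ commits exactly once, after finitely many restarts. Let $T_1,\dots,T_n$ list these transactions in the order their commits occur (reversed if $\FGSDenName$ is folded right-to-left), let $\ListMeta$ be that list, so $\ListMeta\in Perm(\TxBagMeta)$, and let $\LogMeta_{T_i}$ be the log accumulated during $T_i$'s final, successful run. Write $\GFSMeta_0=\GFSMeta$ and let $\GFSMeta_k$ be the global file system immediately after $T_k$ commits; by the commit rule $\GFSMeta_k=\TxFMergeName(\GFSMeta_{k-1},\LogMeta_{T_k})$, because no commit intervenes between $T_{k-1}$'s and $T_k$'s. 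It then suffices to prove $\FGSDenT{T_i}(\GFSMeta_{i-1})=\GFSMeta_i$ for every $i$, since unfolding the fold gives $\FGDenF{\ListMeta}{\GFSMeta}=\FGSDenT{T_n}(\cdots\FGSDenT{T_1}(\GFSMeta_0))=\GFSMeta_n=\GFSMeta'$.

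First I would set up two pieces of bookkeeping. (i) \emph{Adequacy of the single-threaded operational semantics}: since IMP commands step as usual and each Forest command steps by one application of the command denotation, a local run that reaches $\ComSkip$ computes exactly the command denotation on its starting state; because $\FGSDenT{\cdot}$ is that denotation on the transaction's context with the file system supplied as input and projected out, $T_i$'s successful run from a snapshot $\FSMeta^{\mathrm{start}}_i$ produces $\FGSDenT{T_i}(\FSMeta^{\mathrm{start}}_i)$, which equals $\TxFMergeName(\FSMeta^{\mathrm{start}}_i,\LogMeta_{T_i})$ (the reads in $\LogMeta_{T_i}$ act as identity under $\TxFMergeName$). (ii) \emph{Trace invariant}, by induction on trace length: after $k$ commits the global file system is $\GFSMeta_k$ as above, the global log consists of $\LogMeta_{T_1},\dots,\LogMeta_{T_k}$ stamped with strictly increasing timestamps, and each live transaction's thread state is a valid partial single-threaded run of its command from the global file system in effect at its most recent (re)start. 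The key point is that a restart leaves the global file system and global log untouched and resets the restarting thread's zipper, path, command, and log, taking the current global file system as its new snapshot; since every $\TransactionMeta\in\TxBagMeta$ is initial, this coincides with the canonical starting thread state used by $\FGSDenName$ (using the navigation identities of \secref{subsec:forest-properties} to see that going to the zipper's root recovers the initial zipper), and abandoned partial runs never touch the global state.

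The crux is a \emph{frame/commutativity lemma}: a successful transaction run depends only on its footprint, the paths in $\TxFExtractPathsF{\LogMeta_{T_i}}$. Precisely: if $T$ is an initial transaction whose run from $\FSMeta$ is defined, terminates, and yields log $\LogMeta_T$, and $L$ is any list of write entries none of which conflicts (in the sense of \figref{fig:merge-check-def}) with any path in $\TxFExtractPathsF{\LogMeta_T}$, then $T$'s run from the file system obtained by merging $L$ into $\FSMeta$ is also defined, terminates with the same log, and applying $L$ then $T$'s writes agrees with applying $T$'s writes then $L$; hence $\FGSDenT{T}$ commutes with merging $L$. This is proved by induction over $T$'s run, using two facts about the semantics of \secref{subsec:forest-semantics}: \emph{footprint completeness} — every path whose contents a command or expression reads or writes is recorded (inspection of \figref{fig:forest-commands,fig:forest-expressions}; in particular a directory write at $\PathMeta$ records $\PathMeta$, which as a prefix of its children's paths is caught by the sub-path test) — and \emph{conflict soundness} — if a write entry does not conflict with a path $\PathMeta'$, then re-applying that entry changes neither the contents any command would read at $\PathMeta'$ nor the effect of any command that writes at $\PathMeta'$. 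Getting the sub-path/conflict relation of \figref{fig:merge-check-def} exactly calibrated for this lemma is where essentially all the work lies; this is the main obstacle.

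To conclude, fix $i$. At $T_i$'s last (re)start, exactly some prefix $T_1,\dots,T_m$ of the commit sequence has already committed — the set of already-committed transactions is downward-closed in trace order — and $m<i$, so by the invariant $T_i$'s snapshot is $\FSMeta^{\mathrm{start}}_i=\GFSMeta_m$. The commits occurring between that restart and $T_i$'s own are exactly those of $T_{m+1},\dots,T_{i-1}$, so $\GFSMeta_{i-1}=\TxFMergeName(\GFSMeta_m,\LogMeta_{T_{m+1}}\!+\!\cdots\!+\!\LogMeta_{T_{i-1}})$; each of these logs sits in the global log stamped after $T_i$'s start time (a fresh timestamp exceeds all previously issued ones), so the first disjunct of $\TxFCheckLogName$ fails for them, and the success of $T_i$'s commit forces none of their writes to conflict with $T_i$'s footprint. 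Applying the frame lemma repeatedly then gives $\FGSDenT{T_i}(\GFSMeta_{i-1})=\TxFMergeName(\GFSMeta_{i-1},\LogMeta_{T_i})=\GFSMeta_i$, which closes the induction and hence proves the theorem.
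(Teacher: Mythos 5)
Your proposal is correct and follows essentially the same route as the paper: the commit order as the serialization order (the paper's Inductive Serializability theorem), a well-formedness/trace invariant maintained along the trace (Definition of well-formed transactions plus the preservation lemma), and a frame/merge lemma showing that a transaction whose log passes the conflict check can be replayed verbatim on the current global file system (the paper's Merge Lemma together with its Forest Navigation/Update/Expression sub-lemmas and the log-compatibility machinery). You also correctly locate the crux in calibrating the sub-path/conflict relation against the commands' footprints, which is exactly where the paper's `canonize'/`compatible' definitions do the work.
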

The serializability theorem states that given a starting file system and a
thread pool of starting threads, if the global operational semantics commits
them all, then there is some ordering of these threads for which the global
denotational semantics will produce the same resulting file system. Note that
although it is not required by the theorem, the commit order is one such
ordering. Additionally, though not explicitly stated, it is easy to see that any
serial schedule that is in the domain of the denotation function is realizable
by the operational semantics. See \appref{app:serializability} for the proof.

\pnote{Segue to implementation.}
The prototype system described in the next section implements the local
semantics from the previous section along with this global semantics, reducing
the burden of writing correct concurrent applications. 

\section{Implementation}\label{sec:implementation}

\pnote{Section overview: Prototype, CMS example, and Surface Syntax}
This section describes our prototype implementation of \TXFFull{} as an embedded
domain-specific language and library in OCaml. We expand on the Course
Management example from \secref{sec:overview} and briefly touch on our
simplified surface syntax.

\jdl{We don't really say much about our prototype implementation... We could
write something about lines of code, though it's not so interesting... or...?}

%
%
%
\pnote{CMS example has additional features including a grading queue + Gradescope.}
We have implemented a simple course management system similar to the running
example from \secref{sec:overview}. It has several additional facilities beyond
renormalization, including computing various statistics about students or
homeworks and changing rubrics while automatically updating student grades
accordingly. The most interesting piece of the example is based on our
experience with a professional grading system which uses a queue from which
graders can get new problems to grade. Unfortunately, this system did not
adequately employ concurrency control, resulting in duplicated work. Using
\TXF{}, we implemented a simple grading queue where graders can add and retrieve
problems, which, with effectively no effort on our part, does not suffer from
such concurrency issues.

%
%
\pnote{Surface Syntax + Global semantics difference}
The embedded language in our prototype implementation implements almost
precisely the language seen in \secref{sec:forest}. Additionally, we provide a
surface syntax (as seen in \secref{sec:overview} and papers on the earlier
versions of Forest~\cite{forest-icfp:fisher+,dilorenzo2016incremental}) for
specifications that compiles down to the core calculus seen in
\secref{sec:forest}. This specification can then be turned into a zipper by
initiating a transaction. The majority of the commands and expressions seen in
the core semantics are then exposed as functions in a library. Additionally,
there is a more ad hoc surface command language that resembles the surface
syntax and parallels the behavior of the core language. Finally, the global
semantics looks slightly different from the one in \secref{sec:concurrency},
though this should not affect users in any way and the minor variant has also
been proven correct. We provide a simple shell for interacting with filestores,
which makes it significantly easier to force conflicts and test the concurrent
semantics.

\jdl{Reimplement the transactional semantics such that the caveat is unnecessary.}

\section{Related Work}\label{sec:related}
%
%
%
\pnote{Section Overview: General PADS project + Zippers + transaction semantics
+ transactional file systems.}
\TXFFull{} builds on a long line of work in ad hoc data processing. Its
semantics is designed around zippers as filestore representations, which is seen
in previous work on Zipper-based file systems~\cite{OlegZFSTalk}. There is prior
work on the semantics of transactions. Significant work has been done on
transactional file systems, which are file systems with concurrency guarantees.

\paragraph*{Ad hoc data processing.}
\pnote{PADS handles single ad hoc files and provides living documentation,
error-checking, parsers, deparsers, etc.}
PADS~\cite{fisher+:pads} (Processing Ad hoc Data Streams) is the first
declarative domain-specific language designed to deal with ad hoc data. It
allows users to write declarative specifications describing the structure of a
file and uses them to generate types, transformations between on-disk and
in-memory representations with robust error handling, along with various
statistical analysis tools.

\pnote{Forest extends this to full filestores and provides lens laws.}
Forest~\cite{forest-icfp:fisher+} extends the concept of PADS to full filestores
and additionally provides formal guarantees about the generated transformations
in the form of bidirectional lens laws. Forest was implemented in Haskell and
relied on its host language's laziness to only load required data.
Unfortunately, it was not necessarily obvious to users when they might
inadvertently load the whole filestore. For example, checking if there were
errors at any level would load everything below that level.

\pnote{Incremental Forest extends this with delays which allow incremental processing.}
Incremental Forest~\cite{dilorenzo2016incremental} attempted to mitigate this
issue by introducing delays to make explicit the precise amount of loading
performed by any action. It introduced a cost semantics to precisely
characterize the cost of any such action for varied, user-defined notions of
cost. However, concurrency remains challenging.


%
\pnote{How does this relate to TxForest?}
Unfortunately, we observe that in many fields where ad hoc data processing is
common, there is a pervasive need for concurrency control both for single user
parallelization and multi-user filestores. \TXFFull{} is designed around this
idea, providing serializable transactions by default.
Further, the zipper abstraction of \TXF{} is designed to provide incrementality
automatically. With \TXF{}, users can write standard Forest specifications
without considering size or delays, and the locality-focused zipper and
semantics design will enforce incrementality by only loading the minimal amount
of data necessary.

\paragraph*{Zippers.}
%
%
%
\pnote{Huet's Zippers and Oleg's ZFS.}
Zippers were first introduced in the literature by
Huet~\cite{Huet:1997:ZIP:969867.969872} as an elegant data structure for
traversing and updating a functional tree. There has been much work
studying zippers since, though the closest to our use case is Kiselyov's Zipper
file system~\cite{OlegZFSTalk}. Kiselyov builds a small functional file system
with a zipper as its core abstraction. This file system offers a simple
transaction mechanism by providing each thread their own view of the file
system, but lacks formal guarantees and a formal framework with which to prove
them. In contrast, \TXFFull{} uses type-based specifications describing the structure
and invariants of a filestore. Further, we present a formal syntax and semantics
for our core language, a model of concurrency, and a proof of serializability.

\paragraph*{Transaction Semantics.}
\pnote{AtomsFamily of languages with a cool type-and-effect system.}
Moore and Grossman~\cite{Moore:2008:HSO:1328438.1328448} present a family of
languages with software transactions and different semantics, investigating
parallelism in these languages and necessary restrictions to ensure correctness
in the presence of weak isolation. Additionally, they provide a type-and-effect
system which ensures the serializability of well-typed programs. At a high level, they
describe what the core of a language used to write concurrent programs might
look and act like, including constructs like spawning threads or atomic
sections. In contrast, our transactional semantics is simpler and specific to
our domain, describing a simpler transaction manager for ensuring
serializability among \TXF{} threads.

\jdl{TODO: Add a few more papers in this vein, for example, SPJ's Composable
Memory Transactions and perhaps Abadi's work (which is similar), and check if
something interesting cites this paper.}

\paragraph*{Transactional File Systems.}
%
%
%
\pnote{Transactional file systems are a large line of work that has some clear
relation, but is not worth mentioning in detail?}
There has been significant work on transactional file
systems~\cite{Escriva:2016:DIW:2930611.2930642,Garcia:1998:PFT:319195.319224,Liskov:2004:TFS:1133572.1133592,Schmuck:1991:ETQ:121132.121171}.
All of this work starts at a lower level than \TXFFull{}, providing
transaction support for file system commands. We, instead, provide
transactions from the perspective of the higher level application,
easily allowing an arbitrary high-level computation to be aborted or
restarted if there is a conflict at the file system level.

\jdl{I'm not quite sure what else to say about transactional file systems if anything}

\jdl{Any other important related work that I'm forgetting? Do we want to include
LINQ or Type Providers or Ntzik + Philippa? I feel like LINQ and Type Providers
don't apply quite as much anymore since we don't generate types in the same sort
of way, but perhaps that's too shallow.}


\section{Conclusion}\label{sec:conclusion}

\pnote{We have presented a declarative domain-specific language called TxForest
with provable guarantees.}
We have presented the design, syntax, and semantics of \TXFFull{}, a
domain-specific language for incrementally processing ad hoc data in concurrent
applications. \TXF{} aims to provide an easier and less error-prone approach to
modeling and interacting with a structured subset of a file system, which we
call a filestore. Specifically, the design provides an abstraction that handles
concurrent applications at low effort to the user. Additionally, we handle large
filestores by being automatically incremental.

\pnote{Zippers + local and global syntax and semantics to prove these guarantees.}
We achieve this by leveraging Huet's
Zippers~\cite{Huet:1997:ZIP:969867.969872} as our core abstraction.
Their traversal-based structure naturally lends itself to
incrementality and a simple, efficient logging scheme we use for our
optimistic concurrency control. We provide a core language with a
formal syntax and semantics based on zipper traversal, both for local,
single-threaded applications, and for a global view with
arbitrarily many Forest processes. We prove that this global view
enforces serializability between threads, that is, the resulting
effect on the file system of any set of concurrent threads is the same
as if they had run in some serial order.

\pnote{We have a prototype implementation.}
Our OCaml prototype provides a surface language mirroring Classic
Forest~\cite{forest-icfp:fisher+} which compiles down to the core language
mentioned above, and a library of functions for manipulating the filestore.
Additionally, we have built a simple course management system using this
prototype.

\bibliographystyle{splncs04}
\bibliography{main}

\iftr
\newpage

\appendix
\label{appendix}

\section{Proofs}\label{app:proofs}

\jdl{TODO: Add some prose in between theorems}
\jdl{TODO: And/or add a quick description of the theorem at the top}

\subsection{Serializability}\label{app:serializability}

\restatetheorem{thm:serializability}
\begin{proof}
  The definition of \TxFInitialName{} is in \figref{fig:more-tx-defs}.
  By the premise, \thmref{thm:serial-inductive} and $\msdiff{\TxBagMeta}{\SetEmpty}
  = \TxBagMeta$, we have:
  \begin{gather*}
    \exists \List{\TransactionMeta_1;\dots;\TransactionMeta_\IntMeta} \in Perm(\TxBagMeta). \\
    \FMStepTo{\FContMan{\TxFRestartF{\GFSMeta}{\TransactionMeta_1}}}{\LogMeta_1}{\FContNumGS{1}} \\
    \vdots \\
    ~ \band ~
      \FMStepTo{\FContMan{\TxFRestartF{\GFSMeta_{\IntMeta - 1}}{\TransactionMeta_\IntMeta{}}}}
        {\LogMeta_\IntMeta}
        {\FContNumFS{\IntMeta{}}{\GFSMeta'}}
  \end{gather*}

  By \lemref{lem:op-to-dem} and $\TxFInitialF{\ignoreArg}{\TransactionMeta}$, we have:
  \\
  $\FMStepToL{\FContMan{\TxFRestartF{\FSMeta}{\TransactionMeta}}}{\FContNFS{\FSMeta'}}$
  $\implies \FGSDenD = \FSMeta'$

  Thus, we have:
  \begin{gather*}
    \exists \List{\TransactionMeta_1;\dots;\TransactionMeta_\IntMeta} = \ListMeta \in Perm(\TxBagMeta). \\
    \FGSDenF{\GFSMeta}{\TransactionMeta_1} = \GFSMeta_1 ~ \band ~ \dots
    ~ \band ~ \FGSDenF{\GFSMeta_{\IntMeta - 1}}{\TransactionMeta_\IntMeta} = \GFSMeta' \\
    \implies \FGDenF{\ListMeta}{\GFSMeta} = \GFSMeta' \text{ (By definition of $\FGDenName$)}
  \end{gather*}

\end{proof}

\begin{lemma}[Operational to Denotational]
  \label{lem:op-to-dem}
  Let $\FSMeta$ be a file system and $\TransactionMeta =
  \TxMkF{\ThreadFS{\FSMeta''}}{\ignoreArg}$ be a transaction, then:
  \\
  $\FMAToACD{}{'}{\CommandMeta}{\ComSkip} \implies
    \FGSDenD = \FSMeta'$
\end{lemma}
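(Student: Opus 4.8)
The plan is to prove this by induction on the length of the local operational reduction sequence, reducing the whole statement to (i) a single-step simulation lemma between the operational and denotational semantics and (ii) the observation that for Forest commands the operational rule \emph{is} the denotational semantics. Note first that the reduction $\FMAToACD{}{'}{\CommandMeta}{\ComSkip}$ starts from the undecorated state, i.e.\ from the local context $(\EnvMeta,\PathMeta,\ZipperMeta)$ and global context $\pair{\PathSetMeta}{\FSMeta}$ read off from $\TransactionMeta$ with its file system replaced by the input $\FSMeta$; this is precisely the context that $\FGSDenName$ feeds to $\FCDenName$ before applying \texttt{project\_fs}. So it suffices to show that, whenever the local stepping relation drives $\CommandMeta$ to $\ComSkip$ from this context ending in file system $\FSMeta'$, we have $\texttt{project\_fs}\,(\FCDenFG{\CommandMeta}{(\EnvMeta,\PathMeta,\ZipperMeta)}{\pair{\PathSetMeta}{\FSMeta}}) = \FSMeta'$.

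\textbf{Key steps.} First I would establish a one-step lemma: if the local stepping relation takes $(\ContextMeta,\GContextMeta,\CommandMeta)$ to $(\ContextMeta_1,\GContextMeta_1,\CommandMeta_1)$, then $\FCDenFG{\CommandMeta}{\ContextMeta}{\GContextMeta}$ and $\FCDenFG{\CommandMeta_1}{\ContextMeta_1}{\GContextMeta_1}$ coincide modulo logging whenever either side is defined. This is proved by case analysis on the operational rule. For a Forest command $\FCommandMeta$ the rule is defined directly in terms of $\FCDenName$ and steps to $\ComSkip$ in one atomic shot, so the equation holds by unfolding (using that $\FCDen{\ComSkip}$ is the identity on contexts). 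For the IMP fragment the argument is the textbook small-step/denotational agreement: the (elided) denotational equations --- composition for sequencing, case split for conditionals, single unfolding for \texttt{while} --- line up with the congruence and reduction rules of IMP. Second, I would run the main induction on the number $n$ of steps. The base case $n=0$ forces $\CommandMeta \equiv \ComSkip$ and $\FSMeta' = \FSMeta$, and $\FGSDenD$ is then $\texttt{project\_fs}$ of the initial context, which is $\FSMeta$. For $n+1$ steps, split the reduction as one step to $(\ContextMeta_1,\GContextMeta_1,\CommandMeta_1)$ followed by an $n$-step reduction to $(\ContextMeta',\GContextMeta',\ComSkip)$; the induction hypothesis (applied to the transaction built from the intermediate state) gives that its denotation is $\FSMeta'$, and the one-step lemma gives $\FCDenFG{\CommandMeta}{\ContextMeta}{\GContextMeta} = \FCDenFG{\CommandMeta_1}{\ContextMeta_1}{\GContextMeta_1}$ modulo log; applying \texttt{project\_fs} to both sides, which discards the log component, yields the claim. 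Definedness of the composite denotation is never in doubt because the reduction exists by hypothesis, so every intermediate $\FCDenName$ invocation (in particular the premises of the Forest-command rule) is defined.

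\textbf{Main obstacle.} The genuinely delicate part is the interaction of \emph{partiality} with the IMP congruence rules and with the Forest-command rule's side condition that $\FCDenName$ be defined: one must check that a stuck sub-command is not spuriously rescued by its enclosing context, so that the operational and denotational notions of "undefined" stay in lock-step, and that the log produced along the reduction is exactly the one accumulated by $\FCDenName$ (irrelevant to the conclusion here, since \texttt{project\_fs} forgets it, but needed to keep the induction hypothesis in a usable shape). The mathematical content is otherwise routine: the IMP half is the standard equivalence of small-step and denotational semantics, and the Forest half holds essentially by construction of the operational rule.
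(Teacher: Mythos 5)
Your proposal is correct, but it takes a slightly different route from the paper. The paper's (very terse, and self-admittedly unfinished) argument goes through an auxiliary \emph{big-step} semantics: it first appeals to the equivalence of the small-step and big-step local semantics, and then relates the big-step semantics to the denotational one by rule induction. You instead prove the result directly on the small-step relation, via a one-step ``preservation of denotation'' lemma (if the local semantics takes one step, the denotations of the before- and after-configurations agree modulo logging whenever either is defined) followed by induction on the length of the reduction. The two decompositions are interchangeable here and both are textbook: the paper's factoring localizes the interesting content in the big-step/denotational correspondence, where each rule matches one denotational equation, while yours avoids introducing the auxiliary relation at the cost of handling the \texttt{while}-unfolding and the definedness bookkeeping inside the one-step lemma. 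You correctly identify the two points that make the lemma go through --- that the Forest-command step rule is literally an invocation of the denotational semantics, so that case holds by construction, and that \texttt{project\_fs} discards the log, so log discrepancies cannot affect the conclusion --- and your treatment of the initial context (the transaction's stored file system being replaced by the input one, matching the definition of the single-transaction denotation) is the same as the paper's. Given that the paper's own proof is a one-sentence sketch, your version is if anything the more complete of the two.
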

\begin{proof}
  By rule induction from a similar big-step semantics and the equivalence of the
  small-step and big-step semantics.
  \jdl{TODO: Fix}

\end{proof}

\begin{definition}[Well-formed Transactions]
  \label{def:wf-trans}
  A transaction $\TransactionMeta$ is well-formed with respect to a file system
  $\GFSMeta$ and a global log $\GLogMeta$ (denoted $\TxWFT$) iff
  $\TransactionMeta$ comes from running an initial transaction for some number
  of steps and $\GFSMeta$ comes from merging the initial local file system of
  $\TransactionMeta$ with the more recent parts of $\GLogMeta$: \\
  $\TxWFT \iff$ \\
  $\exists \FSMeta, \TransactionMeta'. ~
  \TxFInitialF{\TransactionMeta'}{\FSMeta}
  ~ \band ~ \TxMSTToST{\TransactionMeta'}{\TransactionMeta}$

  $\band ~ \GLogMeta' =
  \Set{\LogEntryMeta \mid \bandf{\LogEntryMeta \in \GLogMeta}
  {\TxGetTSF{\LogEntryMeta} ~ \geq ~ \TxGetTSD}}$
  $~ \band ~ \GFSMeta = \TxFMergeF{\FSMeta}{\GLogMeta'}$
  \\ \\
  We call a thread pool, $\TxBagMeta$, well-formed in a similar manner when
  every transaction in it is well-formed: \\
  $\TxWFTP \iff \forall \TransactionMeta \in \TxBagMeta. ~ \TxWFT$
\end{definition}

\begin{definition}[Thread Pool Difference]
  \label{def:tp-diff}
  Difference on thread pools, written $\msdiff{\TxBagMeta}{\TxBagMeta'}$, is
  defined when there is a file system and global log for which both thread pools
  are well-formed. Then, thread pool difference is exactly normal multiset
  difference where equality on elements is defined by having the same initial
  transaction, $\TransactionMeta'$, as seen in \defref{def:wf-trans}.
\end{definition}

\begin{theorem}[Inductive Serializability]
  \label{thm:serial-inductive}
  Let $\GFSMeta, \GFSMeta'$ be file systems, $\GLogMeta,\GLogMeta'$ be global
  logs, and $\TxBagMeta, \TxBagMeta'$ thread pools such that $\TxWFTP$, then: \\
  \\
  $\TxMStepTo{\TxContAll{}}{\TxContAll{'}} \implies$ \\
  $ \exists \List{\TransactionMeta_1;\dots;\TransactionMeta_\IntMeta} \in
  Perm(\msdiff{\TxBagMeta}{\TxBagMeta'}).$ \\
  $ \FMStepTo{\FContMan{\TxFRestartF{\GFSMeta}{\TransactionMeta_1}}}{\LogMeta_1}{\FContNumGS{1}} \\
    \vdots \\
    ~ \band ~
      \FMStepTo{\FContMan{\TxFRestartF{\GFSMeta_{\IntMeta - 1}}{\TransactionMeta_\IntMeta{}}}}
        {\LogMeta_\IntMeta}
        {\FContNumFS{\IntMeta{}}{\GFSMeta'}}
  $

\end{theorem}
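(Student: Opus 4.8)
The plan is to prove \thmref{thm:serial-inductive} by induction on the length of the derivation of $\TxMStepTo{\TxContAll{}}{\TxContAll{'}}$, doing a case analysis on the first global step. In the base case the derivation has length $0$, so $\GFSMeta = \GFSMeta'$, $\GLogMeta = \GLogMeta'$ and $\TxBagMeta = \TxBagMeta'$; then $\msdiff{\TxBagMeta}{\TxBagMeta'}$ is empty and the empty schedule works vacuously. For the inductive step, write the derivation as $\TxContAll{} \TxStep C_1 \TxMStep \TxContAll{'}$. I will first show that $C_1$'s thread pool is well-formed (\defref{def:wf-trans}) with respect to $C_1$'s file system and global log, so that the induction hypothesis applies to the strictly shorter derivation $C_1 \TxMStep \TxContAll{'}$, and then splice the first step onto the schedule the hypothesis provides.

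The two easy cases are a local step (the first rule of \figref{fig:forest-tx-semantics}) and a restart (the third rule). Neither changes the global file system, and neither changes the global log --- the local-step rule only extends a transaction's private log, and the restart rule touches nothing global --- so $C_1$ keeps $\GFSMeta$ and $\GLogMeta$ and differs from $\TxContAll{}$ only in one transaction of the pool. That transaction stays well-formed: after a local step it is still reachable from the same initial transaction and the global-log-suffix condition of \defref{def:wf-trans} is unaffected; after a restart it is literally an initial transaction for the file system $\GFSMeta$ with a maximal fresh timestamp, so the suffix condition holds trivially. Under the element-matching built into \defref{def:tp-diff}, neither step alters $\msdiff{\cdot}{\TxBagMeta'}$ (a restart retires no pending work), so the induction hypothesis directly yields the required schedule for $\msdiff{\TxBagMeta}{\TxBagMeta'}$.

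The substantive case is a commit (the second rule). Here a done transaction $\TransactionMeta_c$ with private log $\LogMeta_c$ and start timestamp $\TimestampMeta_c$ passes the $\TxFCheckLogName$ test against $\GLogMeta$, leaves the pool, and sets $\GFSMeta'' \eqdef \TxFMergeF{\GFSMeta}{\LogMeta_c}$ and $\GLogMeta'' \eqdef \ListAppendLF{\GLogMeta}{(\TxAddTSF{\TxFreshTS}{\LogMeta_c})}$. Well-formedness of the surviving pool with respect to $(\GFSMeta'',\GLogMeta'')$ follows because $\TxFMergeName$ is a left fold over its log argument (hence compositional under list concatenation) and the freshly stamped entries are maximal: for any survivor with start timestamp $\TimestampMeta$ we get $\GLogMeta''|_{\geq\TimestampMeta} = \ListAppendLF{(\GLogMeta|_{\geq\TimestampMeta})}{(\TxAddTSF{\cdot}{\LogMeta_c})}$ and therefore $\TxFMergeF{\FSMeta}{\GLogMeta''|_{\geq\TimestampMeta}} = \TxFMergeF{\TxFMergeF{\FSMeta}{\GLogMeta|_{\geq\TimestampMeta}}}{\LogMeta_c} = \TxFMergeF{\GFSMeta}{\LogMeta_c} = \GFSMeta''$, which is exactly the required condition. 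The induction hypothesis applied to $C_1 \TxMStep \TxContAll{'}$ then gives a permutation $[\TransactionMeta_2;\dots;\TransactionMeta_\IntMeta]$ of $\msdiff{\TxBagMeta''}{\TxBagMeta'}$ together with a serial chain running them from $\GFSMeta''$ to $\GFSMeta'$. Since $\msdiff{\TxBagMeta''}{\TxBagMeta'}$ is $\msdiff{\TxBagMeta}{\TxBagMeta'}$ with one copy of $\TransactionMeta_c$'s initial transaction removed, prepending $\TransactionMeta_1 \eqdef \TransactionMeta_c$ produces a permutation of $\msdiff{\TxBagMeta}{\TxBagMeta'}$, and it only remains to establish the first link of the chain, $\FMStepTo{\FContMan{\TxFRestartF{\GFSMeta}{\TransactionMeta_c}}}{\LogMeta_c}{\FContNFS{\GFSMeta''}}$: re-running $\TransactionMeta_c$'s full command from the root against the \emph{current} global file system must reproduce the very log $\LogMeta_c$ and land in $\GFSMeta''$.

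This last fact is the crux and I expect it to be the main obstacle; it rests on a frame / non-interference property of the single-threaded semantics. Unfolding the segment of $\TransactionMeta_c$'s execution since its last restart --- which by the local-step rule is an ordinary single-threaded run --- shows that running $\TransactionMeta_c$'s command from its private initial file system $\FSMeta_c$ yields log $\LogMeta_c$ and final file system $\TxFMergeF{\FSMeta_c}{\LogMeta_c}$ (a small auxiliary lemma: commands advance the file system and the log in lockstep, so a run's net effect on the file system is the merge of its log). By well-formedness $\GFSMeta = \TxFMergeF{\FSMeta_c}{\GLogMeta|_{\geq\TimestampMeta_c}}$, and $\TxFMergeName$ only perturbs paths written by entries of $\GLogMeta|_{\geq\TimestampMeta_c}$ or paths comparable to them; since the conflict predicate of \figref{fig:merge-check-def} is a sound over-approximation of ``this entry could change the value observed at this path'', passing $\TxFCheckLogName$ forces $\GFSMeta$ and $\FSMeta_c$ to agree on every path in $\TxFExtractPathsF{\LogMeta_c}$ --- the entire read/write footprint of the run. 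The frame lemma, proved by induction on the single-threaded run and crucially using the \emph{locality} of \TXF{} commands and fetch expressions (each touches only paths near the current zipper node, all of which are logged), then states that two file systems agreeing on a run's footprint produce the identical run: the same fetch-dependent control flow and path expressions, the same log, and final file systems related by the same sequence of writes. Instantiating it at $\FSMeta_c$ and $\GFSMeta$ gives the run from $\GFSMeta$ producing $\LogMeta_c$ and ending in $\TxFMergeF{\GFSMeta}{\LogMeta_c} = \GFSMeta''$, closing the case. (Note that \lemref{lem:op-to-dem} plays no role here; it is what \thmref{thm:serializability} uses afterwards to turn the operational chain produced by this theorem into the denotational fold.)
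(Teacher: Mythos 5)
Your plan follows essentially the same route as the paper's proof: induction on the multi-step relation, with a well-formedness-preservation lemma to make the inductive hypothesis applicable after each step, the only substantive case being a commit, which you discharge by exactly the paper's Merge Property---that a transaction passing the log check can be replayed from the current global file system, reproducing its log and ending in the merged state, established via log compatibility of the reads plus a locality/frame argument by induction on the single-threaded run. The paper merely factors this into more named lemmas (Single-step Serializability, Merge Property, Merge Lemma, Check Log Property, and the per-construct merge lemmas, where canonization handles reads shadowed by the transaction's own writes) than your inlined sketch does.
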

\begin{proof}
  By induction on the multi-step relation $\TxMStep$. See \figref{fig:more-tx-defs}
  for a definition of \TxFRestartName{}.
  The reflexive case is straight-forward, while the transitive step relies on
  \lemref{lem:wf-preservation} to be able to apply the inductive hypothesis
  twice.
  The single-step case is significantly more complicated and entirely
  covered in \lemref{lem:ss-serial}.

\end{proof}

\begin{lemma}[Well-formedness Preservation]
  \label{lem:wf-preservation}
  Let $\GFSMeta, \GFSMeta'$ be file systems, $\GLogMeta,\GLogMeta'$ be global
  logs, and $\TxBagMeta, \TxBagMeta'$ thread pools, then: \\
  $\TxWFTP ~ \band ~ \TxMAToA{}{'}
  \implies \TxWFF{\GFSMeta', \GLogMeta'}{\TxBagMeta'}$
\end{lemma}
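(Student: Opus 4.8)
The plan is to prove this by induction on the length of the reduction sequence witnessing $\TxMAToA{}{'}$, i.e.\ on the reflexive--transitive closure $\TxMStep$. The reflexive case is immediate: then $\GFSMeta' = \GFSMeta$, $\GLogMeta' = \GLogMeta$, and $\TxBagMeta' = \TxBagMeta$, so the goal is exactly the hypothesis $\TxWFTP$. For a reduction of length $n+1$, factor it as a single $\TxStep$ to an intermediate state $(\GFSMeta_1,\GLogMeta_1,\TxBagMeta_1)$ followed by $n$ further steps; it then suffices to establish a single-step preservation claim ($\TxWFTP$ and $(\GFSMeta,\GLogMeta,\TxBagMeta) \TxStep (\GFSMeta_1,\GLogMeta_1,\TxBagMeta_1)$ imply $\TxWFF{\GFSMeta_1,\GLogMeta_1}{\TxBagMeta_1}$), after which the induction hypothesis applies to the remaining $n$ steps. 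So the whole argument reduces to the single-step case, which I prove by case analysis on the three rules of \figref{fig:forest-tx-semantics}.

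\textbf{Local step and restart (the easy cases).} For a local step, exactly one transaction $\TransactionMeta$ is replaced by $\TransactionMeta'$, while $\GFSMeta$, $\GLogMeta$, and the start timestamp of the transaction are unchanged (the rule only grows the transaction-local log and updates the thread, leaving the timestamp component of the transactional state fixed). Unfolding $\TxWFT$ from \defref{def:wf-trans} yields an initial transaction and an initial local file system $\FSMeta$ witnessing it; since $\TransactionMeta'$ is reached from that initial transaction by one additional local step, and neither $\GFSMeta$, $\GLogMeta$, nor the start timestamp moved, the merge equation $\GFSMeta = \TxFMergeF{\FSMeta}{\GLogMeta''}$ (with $\GLogMeta''$ the portion of $\GLogMeta$ of timestamp at least the start timestamp) is literally the same equation that held for $\TransactionMeta$, hence still holds for $\TransactionMeta'$. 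Every other transaction in the bag is untouched, hence still well-formed. For a restart, $\GFSMeta$ and $\GLogMeta$ are again unchanged and the affected transaction is replaced by its restart (the $\TxFRestartName$ of \figref{fig:more-tx-defs}): a fresh timestamp $\TimestampMeta'$, a log reset to $\ListEmpty$, a zipper reset to the root, and local file system the current $\GFSMeta$. This is an initial transaction in the sense of $\TxFInitialName$ whose initial local file system is $\GFSMeta$, so it witnesses \defref{def:wf-trans} with zero steps; and since $\TimestampMeta' = \TxFreshTS$ exceeds every timestamp already in $\GLogMeta$, the relevant portion of $\GLogMeta$ is empty and the merge equation degenerates to $\TxFMergeF{\GFSMeta}{\ListEmpty} = \GFSMeta$. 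The other transactions are again untouched.

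\textbf{Commit (the substantive case).} A done, conflict-free transaction (its $\TxFCheckLogName$ succeeds) with local log $\LogMeta$ commits: it leaves the bag, $\GFSMeta_1 = \TxFMergeF{\GFSMeta}{\LogMeta}$, and $\GLogMeta_1$ is $\GLogMeta$ extended by $\TxAddTSF{\TxFreshTS}{\LogMeta}$. The committed transaction itself carries no obligation, so fix a surviving transaction $\TransactionMeta_0$ with start timestamp $\TimestampMeta_0$ and initial local file system $\FSMeta_0$, for which $\GFSMeta$ is the merge of $\FSMeta_0$ with the portion $\GLogMeta_{\geq\TimestampMeta_0}$ of $\GLogMeta$ at timestamps $\geq \TimestampMeta_0$. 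Because $\TxFreshTS$ is strictly larger than every timestamp issued so far, in particular $\geq \TimestampMeta_0$, the appended entries land entirely in the portion governing $\TransactionMeta_0$, which therefore becomes $\GLogMeta_{\geq\TimestampMeta_0}$ followed by $\TxAddTSF{\TxFreshTS}{\LogMeta}$. Since $\TxFMergeName$ is a left fold over the log (\figref{fig:merge-check-def}), merging over this concatenation equals merging the two pieces in order, i.e.\ $\TxFMergeF{(\TxFMergeF{\FSMeta_0}{\GLogMeta_{\geq\TimestampMeta_0}})}{\LogMeta} = \TxFMergeF{\GFSMeta}{\LogMeta} = \GFSMeta_1$; the ``derived from an initial transaction'' conjunct of \defref{def:wf-trans} for $\TransactionMeta_0$ is unaffected. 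Hence $\TransactionMeta_0$ stays well-formed, and $\TxWFF{\GFSMeta_1,\GLogMeta_1}{\TxBagMeta_1}$ follows.

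\textbf{Main obstacle.} The commit case is where the work concentrates: it is the only rule that changes the global file system, and making well-formedness go through for the surviving transactions needs two ingredients beyond \defref{def:wf-trans} as written --- the compositionality of $\TxFMergeName$ with respect to log concatenation (a routine fold identity), and the monotonicity of the clock, namely that any value returned by $\TxFreshTS$ strictly exceeds every timestamp previously placed in $\GLogMeta$ \emph{and} every transaction start timestamp. The latter is not literally part of the well-formedness predicate, so the cleanest route is to strengthen the invariant with this monotonicity conjunct and verify that each of the three rules re-establishes it; this is pure bookkeeping, but it is precisely what licenses the commit-case argument to speak of ``the portion of $\GLogMeta$ past $\TimestampMeta_0$'' being extended only on the right.
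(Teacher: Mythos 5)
Your proposal is correct and follows the same route as the paper, which proves this lemma simply ``by straightforward induction on the multi-step relation $\TxMStep$'' without spelling out the cases. Your elaboration of the single-step case analysis --- in particular identifying commit as the only rule that changes the global file system, and noting that the argument needs the left-fold compositionality of $\TxFMergeName$ over log concatenation plus a timestamp-monotonicity invariant not literally contained in \defref{def:wf-trans} --- is a faithful and more explicit account of what the paper's one-line proof implicitly relies on.
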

\begin{proof}
  By straightforward induction on the multi-step relation $\TxMStep$.

\end{proof}

\begin{lemma}[Single-step Serializability]
  \label{lem:ss-serial}
  Let $\GFSMeta, \GFSMeta'$ be file systems, $\GLogMeta,\GLogMeta'$ be global
  logs, and $\TxBagMeta, \TxBagMeta'$ thread pools such that $\TxWFTP$, then: \\
  \\
  $\TxStepTo{\TxContAll{}}{\TxContAll{'}}
    ~ \band ~ \TransactionMeta \in (\msdiff{\TxBagMeta}{\TxBagMeta'}) \implies$ \\
  $ \FMStepTo{\FContMan{\TxFRestartF{\GFSMeta}{\TransactionMeta}}}
      {\LogMeta}{\FContAllFC{'}{\GFSMeta'}{\ComSkip}}$
\end{lemma}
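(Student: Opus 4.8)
The plan is to pin down which global rule fires, read off its side conditions, and then argue that replaying the transaction from scratch on $\GFSMeta$ behaves exactly like the (already completed) original run on the transaction's private snapshot; the only real work is a ``frame''-style fact saying that a local run depends on the file system only through the paths it records in its log.

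First I would do case analysis on the global step $\TxStepTo{\cdot}{\cdot}$ (\figref{fig:forest-tx-semantics}). Since $\TransactionMeta$ occurs in $\TxBagMeta$ but not in $\TxBagMeta'$ (up to the identity of \defref{def:tp-diff}), the step cannot be a local command step or a conflict restart, each of which merely replaces one descendant of an initial transaction by another descendant of the \emph{same} initial transaction; so it must be the commit rule, instantiated at $\TransactionMeta = \TxThMD{\TxSD}$. Its premises give $\texttt{is\_Done?}~\ThreadMeta$ (hence the command component of $\ThreadMeta$ is $\ComSkip$ and its transaction-local log is $\LogMeta$), $\TxFCheckLogD$, and $\GFSMeta' = \TxFMergeD$.

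Next I unfold $\TxWFTP$ at $\TransactionMeta$ via \defref{def:wf-trans}: there are a file system $\FSMeta$ and an initial transaction $\TransactionMeta' = \TxFRestartF{\FSMeta}{\TransactionMeta}$ with $\TxFInitialF{\TransactionMeta'}{\FSMeta}$, $\TxMSTToST{\TransactionMeta'}{\TransactionMeta}$, and $\GFSMeta = \TxFMergeF{\FSMeta}{\GLogMeta''}$, where $\GLogMeta''$ collects the entries of $\GLogMeta$ whose timestamp is at least the start timestamp $\TimestampMeta$ of $\TransactionMeta$. Because $\TransactionMeta$ is done and $\TransactionMeta'$ carries the same full command, $\TxMSTToST{\TransactionMeta'}{\TransactionMeta}$ is precisely a complete local run on the snapshot $\FSMeta$ that produces exactly the log $\LogMeta$, and $\TxFRestartF{\GFSMeta}{\TransactionMeta}$ agrees with $\TransactionMeta'$ except that its file system component is $\GFSMeta$ rather than $\FSMeta$ (by definition of $\TxFRestartName$, \figref{fig:more-tx-defs}). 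So it suffices to show that re-running that same command from scratch on $\GFSMeta$ reproduces the log $\LogMeta$ and ends with the file system $\GFSMeta'$ (which the commit rule defines as $\TxFMergeD$). I would establish this from two auxiliary facts about the local operational semantics. (a) \emph{Log faithfulness}: if a complete local run on a file system $\FSMeta_0$ produces log $\LogMeta_0$ and ends with file system $\FSMeta_0'$, then $\FSMeta_0' = \TxFMergeF{\FSMeta_0}{\LogMeta_0}$; this is an induction on the step relation, using that IMP steps, Forest navigations, and Forest expressions never touch the file system (their log entries are $\TxLReadName$s, on which $\TxFUpdateName$ is the identity) while each Forest update alters the file system by exactly the closed substitution recorded in its write entry (cf.~\figref{fig:forest-commands} and \figref{fig:merge-check-def}). (b) \emph{Footprint determinacy}: if two file systems $\FSMeta_1$ and $\FSMeta_2$ agree on every path of $\TxFExtractPathsF{\LogMeta_0}$, where $\LogMeta_0$ is the log produced by running the command on $\FSMeta_1$, then running the same command on $\FSMeta_2$ produces the very same log $\LogMeta_0$, and hence (by (a)) ends with file system $\TxFMergeF{\FSMeta_2}{\LogMeta_0}$; this is an induction on the run maintaining the invariant that the two runs have so far taken identical steps and their file systems still agree on $\TxFExtractPathsF{\LogMeta_0}$ --- under this invariant each read returns the same contents, each write stores the same contents at the same path and records the same old contents (write paths lie in $\TxFExtractPathsF{\LogMeta_0}$ as well), and re-closing the tree perturbs only paths conflicting with the written path, all of which remain in agreement.

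Finally I discharge the hypothesis of (b) for $\FSMeta_1 = \FSMeta$ and $\FSMeta_2 = \GFSMeta$. By $\TxFCheckLogD$, every entry of $\GLogMeta$ either has timestamp below $\TimestampMeta$ or does not conflict (per $\TxFConflictPathF{\cdot}{\cdot}$ of \figref{fig:merge-check-def}) with any path of $\TxFExtractPathsF{\LogMeta}$; hence every entry of $\GLogMeta''$ is non-conflicting, so each application of $\TxFUpdateName$ that $\TxFMergeName$ performs in forming $\GFSMeta = \TxFMergeF{\FSMeta}{\GLogMeta''}$ leaves every path of $\TxFExtractPathsF{\LogMeta}$ fixed, i.e.\ $\FSMeta$ and $\GFSMeta$ agree there. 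Applying (b) and then (a) yields $\FMStepTo{\FContMan{\TxFRestartF{\GFSMeta}{\TransactionMeta}}}{\LogMeta}{\FContAllFC{'}{\GFSMeta'}{\ComSkip}}$, which is the claim. The main obstacle is fact (b), and within it the interaction of the close operation with the footprint: one must show that overwriting a path and re-closing the tree never disturbs agreement at the other recorded paths, which relies on a lemma that close modifies only paths conflicting with the written path, together with preservation of well-formedness (\defref{def:fs-wf}) throughout the run. The bookkeeping point that $\TxFExtractPathsF{\cdot}$ records write paths, not just read paths --- which is what makes the old contents recorded in the two runs coincide --- is also easy to overlook.
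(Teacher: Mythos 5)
Your proposal is correct and, at the top level, takes the same route as the paper: a case analysis on the global step rules out everything except commit (a local step or a restart keeps the transaction in the pool, so it cannot appear in the multiset difference), and the remaining obligation is exactly the paper's Merge Property (\thmref{thm:merge-prop}), discharged by unfolding well-formedness to recover the initial snapshot and replaying the run on the merged global file system. Below that point your factoring differs a little from the paper's. The paper proves one simulation lemma (\lemref{lem:merge}) by induction on $\TxMSTToST{\TransactionMeta'}{\TransactionMeta}$, carrying the compatibility relation $\TxFCompatF{\GFSMeta}{\LogMeta}$ of \figref{fig:log-compat-def} as its invariant---a canonized form of the log in which reads shadowed by the transaction's own earlier writes are discarded---with per-construct agreement facts \lemref{lem:merge-fn}, \lemref{lem:merge-fu}, \lemref{lem:merge-exp} and the bridge from $\TxFCheckLogName$ to compatibility supplied by \lemref{lem:checkLog}. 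You instead split the argument into log faithfulness (the final file system is $\TxFMergeName$ of the initial one with the produced log) and footprint determinacy (agreement on $\TxFExtractPathsF{\LogMeta}$ forces identical replays), using agreement on the raw footprint rather than the canonized log as the invariant. The two are interderivable here because $\TxFCheckLogName$ already guarantees non-conflict on the entire footprint, write paths included; your invariant is marginally coarser but simpler to state, while the paper's canonized compatibility is the sharper form that survives its intermediate log-splitting lemmas. The points you flag as delicate---the old contents recorded in write entries, and showing that re-closing the tree disturbs only paths conflicting with the written one---are precisely what \lemref{lem:merge-fu} and the helper-function definitions have to handle, so your assessment of where the real work lies matches the paper's.
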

\begin{proof}
  By induction on the single-step relation $\TxStep$. The theorem holds
  vacuously unless the step is a commit. If the step is a commit, then
  it follows from \thmref{thm:merge-prop}.

\end{proof}

\begin{theorem}[Merge Property]
  \label{thm:merge-prop}
  Let $\TransactionMeta = \TxD$ be a transaction,
  $\GLogMeta$ a global log,
  and $\GFSMeta$ a file system such that
  $\TxWFT$, $\TxFCheckLogD$, and $\TxFMergeD = \MergeFS$. \\
  Then,
  $ \FMStepTo{\FContMan{\TxFRestartF{\GFSMeta}{\TransactionMeta}}}
      {\LogMeta}{\FContAllFC{'}{\MergeFS}{\ComSkip}}$
\end{theorem}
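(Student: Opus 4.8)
The plan is to recover the transaction's original execution from the well-formedness hypothesis and then argue that re-running the same command from the current global file system $\GFSMeta$ retraces that execution step for step, emitting the identical local log $\LogMeta$ and terminating at $\TxFMergeD = \MergeFS$. First, unfolding $\TxWFT$ (\defref{def:wf-trans}) yields an initial file system $\FSMeta$, an initial transaction $\TransactionMeta'$ (in the sense of $\TxFInitialName{}$) with initial local file system $\FSMeta$, a local run $\TxMSTToST{\TransactionMeta'}{\TransactionMeta}$, and the identity $\GFSMeta = \TxFMergeF{\FSMeta}{\GLogMeta'}$, where $\GLogMeta'$ collects the entries of $\GLogMeta$ timestamped at least $\TimestampMeta$. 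Because $\TransactionMeta$ is a committing (done) transaction, its thread's remaining command is $\ComSkip$, so this run is a \emph{complete} execution of the full command $\CommandMeta$ starting from $\FSMeta$, and by construction it accumulates precisely the log $\LogMeta$.

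Second, the crux is a \emph{frame lemma}: the execution of $\CommandMeta$ from $\FSMeta$ and its execution from $\GFSMeta$ coincide, because $\FSMeta$ and $\GFSMeta$ agree on everything the run inspects. Here I use that each reading step of the local semantics ($\FComDown$, the $\FComStyle{Fetch}$ expressions, path creation) consults only the file system entry at the current path, which it then records in the log, while each writing step mutates the file system only at the written path together with the ancestors and descendants touched by the closure operation $\TxFCloseF{\FSMeta}$, which is likewise recorded; hence the footprint of the execution lies within $\TxFExtractPathsF{\LogMeta}$ closed under the sub-path relation used by the conflict predicate of $\TxFCheckLogName{}$. The hypothesis $\TxFCheckLogD$ then guarantees that no entry of $\GLogMeta'$ conflicts with this footprint, so folding those entries into $\FSMeta$ via $\TxFMergeName{}$ to obtain $\GFSMeta$ leaves the footprint pointwise unchanged.

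Third, I carry out a trace induction along $\TxMSTToST{\TransactionMeta'}{\TransactionMeta}$, maintaining the invariant that the re-run from $\GFSMeta$ stays in lockstep with the original run --- identical remaining command, zipper, current path, and accumulated log $\LogMeta_0$ --- and that the two current file systems equal $\TxFMergeF{\FSMeta}{\LogMeta_0}$ and $\TxFMergeF{\GFSMeta}{\LogMeta_0}$ respectively (using that each store command's effect on the file system is reproduced by replaying its recorded log entries through $\TxFUpdateName{}$), and hence agree on the footprint. Pure IMP steps preserve this trivially; a write step replays the identical update on both sides; a read step returns the same value on both sides by the frame lemma, so the same log entry is appended and, by determinism of the Forest-command denotation, the same successor state is reached. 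When the original run reaches $\ComSkip$, so does the re-run, leaving accumulated log $\LogMeta$ and, by the invariant, file system $\TxFMergeD$, which by hypothesis is $\MergeFS$; this is exactly the target configuration $\FContAllFC{'}{\MergeFS}{\ComSkip}$.

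The main obstacle is the frame lemma, and within it the reconciliation of the non-local effect of directory closure with the path-granularity of $\TxFCheckLogName{}$: one must show that the sub-path condition in the conflict test is exactly strong enough that a non-conflicting merge perturbs no ancestor or descendant that a later read could observe, and, dually, that each store command commutes with $\TxFUpdateName{}$-replay of the log entries it emits. These two supporting lemmas are where the real work lies; once they are in hand, the lockstep induction is routine bookkeeping, and the remaining ingredient --- determinism of the local semantics on Forest commands, which is immediate since their denotation is a partial function --- is straightforward.
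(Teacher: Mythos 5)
Your proposal is correct and takes essentially the same route as the paper: the paper reduces the theorem to \lemref{lem:merge}, whose proof is exactly your lockstep trace induction along the run recovered from well-formedness, with your ``frame lemma'' corresponding to \lemref{lem:checkLog} (check\_log implies log compatibility) combined with \lemref{lem:merge-fn}, \lemref{lem:merge-fu}, and \lemref{lem:merge-exp}. You also correctly single out the interaction between directory closure and the sub-path granularity of the conflict test as the substantive technical point.
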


\begin{proof}
  Follows directly from \lemref{lem:merge}. We use $\TxWFT$ to conclude
  \\
  $\exists \TransactionMeta'. ~
    \TxMSTToST{\TransactionMeta'}{\TransactionMeta}$ and
  that $\TxWFF{\GFSMeta, \GLogMeta}{\TransactionMeta'}$.
  \\
  Then we can apply the lemma.

\end{proof}

\begin{lemma}[Partial Check Log]
  \label{lem:cl-partial}
  \\
  $\TxFCheckLogL{(\ListAppendLF{\LogMeta}{\LogMeta'})}
  \implies \bandf{\TxFCheckLogD}{\TxFCheckLogL{\LogMeta'}}$
\end{lemma}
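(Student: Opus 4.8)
The plan is to reduce the lemma to a single distributivity fact about the path-extraction function and then read the result off the shape of the definition of \TxFCheckLogName{} in \figref{fig:merge-check-def}. The definition of \TxFCheckLogD{} there is a universal quantification over the set $\TxFExtractPathsF{\LogMeta}$ of paths touched by the log, with the timestamp $\TimestampMeta$ and the global log $\GLogMeta$ fixed by the context; so everything hinges on how that set behaves under $\ListAppendLF{\LogMeta}{\LogMeta'}$.

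First I would prove, by induction on $\LogMeta$, the auxiliary equality $\TxFExtractPathsF{(\ListAppendLF{\LogMeta}{\LogMeta'})} = \TxFExtractPathsF{\LogMeta} \union \TxFExtractPathsF{\LogMeta'}$. In the base case $\LogMeta = \ListEmpty$, both sides equal $\TxFExtractPathsF{\LogMeta'}$, since $\TxFExtractPathsF{\ListEmpty} = \SetEmpty$. In the inductive case $\LogMeta = \ListAppendIF{e}{tl}$, each of the three possible forms of the head entry $e$ (a read, a file write, or a directory write) contributes exactly the singleton $\Set{\PathMeta}$ of its path, so the left-hand side is $\Set{\PathMeta} \union \TxFExtractPathsF{(\ListAppendLF{tl}{\LogMeta'})}$; by the induction hypothesis and associativity/commutativity of $\union$ this equals $(\Set{\PathMeta} \union \TxFExtractPathsF{tl}) \union \TxFExtractPathsF{\LogMeta'} = \TxFExtractPathsF{\LogMeta} \union \TxFExtractPathsF{\LogMeta'}$.

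Next I would unfold \TxFCheckLogName{}. For any log $L$, $\TxFCheckLogL{L}$ says precisely that for every $\PathMeta' \in \TxFExtractPathsF{L}$ and every $(\TimestampMeta', \LogEntryMeta) \in \GLogMeta$ we have $\TimestampMeta' < \TimestampMeta \bor \neg(\TxFConflictPathF{\PathMeta'}{\LogEntryMeta})$. Instantiating with $L = \ListAppendLF{\LogMeta}{\LogMeta'}$ and rewriting with the auxiliary equality, the outer quantifier ranges over $\TxFExtractPathsF{\LogMeta} \union \TxFExtractPathsF{\LogMeta'}$; since a universally quantified statement over a union is equivalent to the conjunction of the same statement over each component, $\TxFCheckLogL{(\ListAppendLF{\LogMeta}{\LogMeta'})}$ is equivalent to $\bandf{\TxFCheckLogD}{\TxFCheckLogL{\LogMeta'}}$. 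In particular it implies it, which is the claimed direction (the converse holds as well, though it is not needed).

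There is no real obstacle here: the lemma is driven entirely by unfolding definitions, and the only mechanical step is the three-case induction for the distributivity of the path-extraction function over append, whose cases are identical in form. The lemma is a small convenience, used to split \TxFCheckLogName{} obligations in the single-step serializability argument (\lemref{lem:ss-serial}).
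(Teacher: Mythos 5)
Your proposal is correct and matches the paper's proof, which consists precisely of the observation that the extracted path set of an appended log is the union of the extracted path sets of its parts, from which the claim follows because the check is a universal quantification over that set. You simply spell out the induction establishing that equality and the union-splitting step that the paper leaves implicit.
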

\begin{proof}
  $\TxFExtractPathsF{(\ListAppendLF{\LogMeta}{\LogMeta'})} =
    \TxFExtractPathsF{\LogMeta} ~ \union ~
    \TxFExtractPathsF{\LogMeta'}$
\end{proof}

\begin{lemma}[Check Log Property]
  \label{lem:checkLog}
  Let $\TransactionMeta$ be a transaction with log and timestamp, $\LogMeta$ and
  $\TimestampMeta$, respectively, $\GLogMeta$ a global log, and $\GFSMeta$ a
  file system such that $\TxWFT$. \\
  Then, $\TxFCheckLogD \implies \TxFCompatD$
\end{lemma}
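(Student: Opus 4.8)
The plan is to route the no-stale-read guarantee implicit in $\TxFCheckLogD$ through the structure that well-formedness provides. First, unfold $\TxWFT$ via the well-formedness definition (\defref{def:wf-trans}): there is an \emph{initial} transaction $\TransactionMeta'$ with snapshot file system $\FSMeta$ such that $\TxMSTToST{\TransactionMeta'}{\TransactionMeta}$, and moreover $\GFSMeta = \TxFMergeF{\FSMeta}{\GLogMeta'}$ where $\GLogMeta' = \Set{\LogEntryMeta \in \GLogMeta \mid \TxGetTSF{\LogEntryMeta} \geq \TimestampMeta}$ is the part of the global log written since $\TransactionMeta$ last (re)started. Since $\TransactionMeta'$ is initial its log is empty, so $\LogMeta$ is precisely the trace accumulated along $\TxMSTToST{\TransactionMeta'}{\TransactionMeta}$: each local step appends the log fragment produced by the local denotational semantics run against the thread's current file system, which begins as $\FSMeta$ and changes only through the writes that $\LogMeta$ itself records. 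A routine induction (soundness of logging) therefore gives property $(\star)$: replaying $\LogMeta$ against $\FSMeta$ never encounters a stale read. It remains to transport $(\star)$ from $\FSMeta$ to $\GFSMeta$, which is exactly what $\TxFCompatD$ asserts.

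The bridge is a \emph{merge-locality} claim: $\GFSMeta$ agrees with $\FSMeta$ on every path in $\TxFExtractPathsF{\LogMeta}$ and on the whole subtree below each such path. To prove it, observe that every entry of $\GLogMeta'$ has timestamp $\geq \TimestampMeta$, so the first disjunct of $\TxFCheckLogD$ fails for those entries and $\TxFCheckLogD$ forces the conflict predicate $\TxFConflictPathF{\PathMeta}{\cdot}$ to be false for each $\PathMeta \in \TxFExtractPathsF{\LogMeta}$ against every entry of $\GLogMeta'$; unfolding that predicate (\figref{fig:merge-check-def}), no write entry of $\GLogMeta'$ targets a path that stands in the prefix order to any path $\TransactionMeta$ touched. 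Now induct on $\GLogMeta'$ using the definition of $\TxFMergeName$: a read entry is a no-op for $\TxFUpdateName$, while a write at some $\PathMeta$ overwrites the value there and then $\TxFCloseF{\cdot}$ restores well-formedness by only adding or deleting strict descendants of $\PathMeta$; hence the step changes nothing outside $\PathMeta$'s subtree, and that subtree is disjoint from every touched path together with its subtree (else $\PathMeta$ would be prefix-comparable with a touched path, contradicting the no-conflict fact). This yields the agreement. Finally, lift $(\star)$ to $\TxFCompatD$ by induction on $\LogMeta$ with the invariant that the two replay file systems --- one from $\FSMeta$, one from $\GFSMeta$ --- coincide on $\TxFExtractPathsF{\LogMeta}$ and on all subtrees below it: a logged read then sees the same value under both, and a logged write rewrites the subtree it touches identically, so the invariant is preserved and every read that succeeds from $\FSMeta$ succeeds from $\GFSMeta$. (If one prefers to peel $\LogMeta$ from the end through the induction, \lemref{lem:cl-partial} lets $\TxFCheckLogD$ be split across the prefix and the last entry.)

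The main obstacle is making the merge-locality step watertight against the closure operations in $\TxFMergeName$ and the exact comparison used by the conflict predicate. One must check that forbidding writes at paths prefix-comparable with a touched path $\PathMeta$ genuinely rules out both an ancestor write and a descendant write from perturbing $\PathMeta$ or its subtree \emph{after closing}: this relies on $\TxFCloseF{\cdot}$ only ever touching descendants of the written path (so a perturbed path is a descendant of that target, hence prefix-comparable with it, hence --- after chasing prefixes --- prefix-comparable with $\PathMeta$) and on $\TxFCloseF{\cdot}$ preserving the tree shape of \defref{def:fs-wf} so that no stray directory-listing change escapes that subtree. The prefix combinatorics is elementary, but it is precisely where the conflict check must be the symmetric relation rather than one-directional, so it warrants the most care.
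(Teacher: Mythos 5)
Your proposal is correct and follows essentially the same route as the paper: recover the derivation producing $\LogMeta$ from the transaction's snapshot via well-formedness, show that canonical reads match that snapshot (the paper's Reads-Correspond-to-Values lemma), and use the no-conflict content of the check plus the locality of merge/close to transport those reads to $\GFSMeta$, splitting the log with the partial-check-log lemma in the inductive case. The merge-locality bridge you spell out explicitly is precisely the step the paper's terse sketch delegates to its auxiliary lemmas, and your caveats about the symmetry of the subpath test and the footprint of the closure operation are the right places to be careful.
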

\begin{proof}
  By induction on the structure of the log, $\LogMeta$ and in the non-empty
  case, induction on the multi-step function $\TxMStep$ having used $\TxWFT$ to
  establish a derivation. See \figref{fig:log-compat-def} for a definition of $\sim$.

  Uses \lemref{lem:real-reads} and \lemref{lem:g-to-l}
  as well as \lemref{lem:log-com-combo} (in the transitive case).
  \begin{lemma}[Reads Correspond to Values]
    \label{lem:real-reads}

    $\TxLReadD \in \TxFCanonizeD
    ~ \band ~ \FMAToA{}{\LogMeta}{'}$

    $\implies \MapGet{\FSMeta}{\PathMeta} = \ContentsMeta$
  \end{lemma}
  \begin{proof}
    By induction on the multi-step function $\FMStep{\LogMeta}$.

    Uses \lemref{lem:real-reads-aux} in transitive case.
    \begin{lemma}
      \label{lem:real-reads-aux}

      $\forall \PathMeta. \not \exists \PathMeta'. ~
      \bandf{\TxFSubPathD}{\PathMeta' \in \TxFWritesD}$ \\
      $~ \band ~ \FMAToA{}{\LogMeta}{'}
        \implies \MapGet{\FSMeta}{\PathMeta} = \MapGet{\FSMeta'}{\PathMeta}$
    \end{lemma}
    \begin{proof}
      By induction on the multi-step function $\FMStep{\LogMeta}$.

    \end{proof}

  \end{proof}
  \begin{lemma}[Global to Local]
    \label{lem:g-to-l}
    Let $\GFSMeta$ be a file system and $\GLogMeta$ be a global
    log. Then, \\
    $\TxMStepTo{\TxContST{\TxD}}
      {\\\TxContST{\TxMkF{\ThreadP{'}}{\TxSL{\ListAppendLF{\LogMeta}{\LogMeta'}}}}}$ \\
    $\band ~ \TxFCheckLogL{(\ListAppendLF{\LogMeta}{\LogMeta'})} \implies$ \\
    $ \FMAToACL{}{'}{\CommandMeta}{\CommandMeta'}{\LogMeta'}$
  \end{lemma}
  \begin{proof}
    By induction on the multi-step relation $\TxMStep$. The transitive case relies
  on \lemref{lem:cl-partial}, but is straightforward. The reflexive case is trivial.
  In the step case, commit and abort are ruled out, leaving the single thread step.
  \end{proof}

\end{proof}

\begin{lemma}[Merge Lemma]
  \label{lem:merge}
  Let $\TransactionMeta = \TxThMDD$ and
      $\TransactionMeta' = \TxMkF{\ThreadMeta'}{\TxSL{\ListAppendLF{\LogMeta}{\LogMeta'}}}$
  be transactions, $\GLogMeta$ a global log, and
      $\GFSMeta$ a file system such that
      $\bandf{\TxWFT}{\TxFCheckLogL{(\ListAppendLF{\LogMeta}{\LogMeta'})}}$. \\
  Then, \\
  $\TxMSTToST{\TransactionMeta}{\TransactionMeta'}
    ~ \band ~ \TxFMergeD = \GFSMeta'
    ~ \band ~ \TxFMergeF{\GFSMeta'}{\LogMeta'} = \MergeFS$
  \\
  $\implies
    \FMStepTo{\FContMan{\TxFInsertF{\GFSMeta'}{\ThreadMeta}}}
      {\LogMeta'}
      {\FContMan{\TxFInsertF{\MergeFS}{\ThreadMeta'}}}$
\end{lemma}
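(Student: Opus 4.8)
The plan is to reduce the global derivation to a purely local execution and then transport that execution from the transaction's private file system onto the merged file system $\GFSMeta'$, with the check-log hypothesis doing the work of showing the two are interchangeable. Our hypotheses match those of \lemref{lem:g-to-l} exactly --- a one-element thread pool whose log grows from $\LogMeta$ to $\ListAppendLF{\LogMeta}{\LogMeta'}$, together with $\TxFCheckLogL{(\ListAppendLF{\LogMeta}{\LogMeta'})}$ --- so the first step is to apply \lemref{lem:g-to-l}, discharging the check-log on the $\LogMeta$ prefix with \lemref{lem:cl-partial}, to obtain a local multi-step, equivalently $\FMStepTo{\FContMan{\ThreadMeta}}{\LogMeta'}{\FContMan{\ThreadMeta'}}$, in which both contexts still carry $\ThreadMeta$'s own file system, call it $\FSMeta_0$, so that $\ThreadMeta'$ carries $\FSMeta_0$ updated by the writes in $\LogMeta'$. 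What remains is to replay this derivation with $\GFSMeta'$ in place of $\FSMeta_0$ and show it ends at $\MergeFS$.

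Next I would show that $\FSMeta_0$ and $\GFSMeta'$ agree on everything the replay can observe. Unfolding $\TxWFT$, together with the auxiliary invariant that a thread's carried file system is its start snapshot updated by the writes it has logged, $\FSMeta_0$ is the snapshot $\FSMeta''$ updated by the writes in $\LogMeta$, while $\GFSMeta' = \TxFMergeD$ with $\GFSMeta = \TxFMergeF{\FSMeta''}{\GLogMeta'}$ for the tail $\GLogMeta'$ of $\GLogMeta$ with timestamps at least $\TransactionMeta$'s; hence the two differ only at paths some entry of $\GLogMeta'$ writes and $\LogMeta$ does not overwrite. The check-log hypothesis excludes exactly those paths from the footprint of $\LogMeta\cdot\LogMeta'$: the paths it tests against the global log include write paths as well as read paths, and any conflicting entry of $\GLogMeta'$ carries a timestamp at least $\TransactionMeta$'s, so it would be flagged. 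Via \lemref{lem:checkLog} this is the compatibility relation $\TxFCompatD$, the convenient form to carry through the induction, with \lemref{lem:real-reads} and \lemref{lem:real-reads-aux} turning it into the concrete fact that the values read during $\LogMeta'$ coincide on $\FSMeta_0$ and on $\GFSMeta'$.

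The third step is the replay, by induction on the local multi-step from step one, maintaining the invariant that at each point the replayed file system and the one the transaction actually operated on coincide on the footprint of the remaining log, closed under parent directories so directory listings agree too. A navigation or $\FComStyle{Fetch}$ step inspects only footprint data --- a children list, a path-existence test, an expression value --- so by the invariant the same rule fires on both sides and the same \TxLReadName{} is appended, leaving the file systems unchanged; a \FComCreatePath{}, \FComStoreFileD{}, or \FComStoreDirD{} step reads its new contents from an expression over footprint paths (hence identical), reads the old contents at its target where required (identical by the invariant, so the write entry matches), and updates the file system by the same well-formedness-restoring operation $\TxFCloseF{\cdot}$ that $\TxFUpdateName$ applies, re-establishing the invariant over the written path too; IMP steps touch no file-system state and their guards reduce to the $\FComStyle{Fetch}$ case. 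Composing the replay steps yields $\FMStepTo{\FContMan{\TxFInsertF{\GFSMeta'}{\ThreadMeta}}}{\LogMeta'}{\FContMan{\TxFInsertF{\FSMeta_\star}{\ThreadMeta'}}}$ with $\FSMeta_\star$ being $\GFSMeta'$ updated by the writes of $\LogMeta'$; since $\TxFMergeName$ is the fold of $\TxFUpdateName$, which is the identity on reads, $\FSMeta_\star = \TxFMergeF{\GFSMeta'}{\LogMeta'} = \MergeFS$, the desired conclusion.

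The hard part will be the invariant of step three: keeping the two file systems aligned across write steps, where the closing operation also rewrites parent directories and, when a directory is overwritten by a file, deletes descendants, and checking that the old-contents fields of write log entries genuinely agree --- which is exactly why \TxFCheckLogName{} must test write paths and not just read paths against the global log. The remaining obligations are routine rule induction over the local semantics plus unfolding the definitions in \figref{fig:merge-check-def} and the store helpers in \appref{fig:forest-helpers}.
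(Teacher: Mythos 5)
Your proposal is correct and rests on the same two key ideas as the paper's proof: use the check-log hypothesis (via \lemref{lem:checkLog}) to obtain compatibility of the log with the merged file system, and then transport the execution step by step from the transaction's private file system onto $\GFSMeta'$, noting that only the canonical first-reads matter and that writes commute with merging. The packaging differs. The paper inducts directly on the global multi-step relation $\TxMStep$, handling the transitive case with \lemref{lem:cl-partial} and \lemref{lem:wf-preservation}, ruling out commit and restart in the single-step case, and then inducting on the local relation $\FStep{\LogMeta'}$; the per-command transport is factored into \lemref{lem:merge-fn}, \lemref{lem:merge-fu}, and \lemref{lem:merge-exp}, whose shared hypothesis $\TxFCompatF{\GFSMeta'}{\LogMeta'}$ is exactly the invariant you propose to carry --- but establishing it at intermediate states requires \lemref{lem:inter-wf}, a step you gloss over: compatibility of the \emph{remaining} log with the \emph{partially merged} file system needs well-formedness of the intermediate transaction to be re-derived. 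You instead first extract a purely local run via \lemref{lem:g-to-l} (which the paper uses only inside \lemref{lem:checkLog}) and then replay it under a concrete footprint-agreement invariant; this buys a clean separation between ``the global run is really a local run'' and ``the local run transports,'' at the cost of re-proving the content of the merge lemmas inline and maintaining a more operational invariant (pointwise agreement on the footprint, closed under parents and closing) where the paper's compatibility-based invariant lets \lemref{lem:log-com-reads} and \lemref{lem:log-com-part} do the bookkeeping. Both routes go through.
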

\begin{proof}
  By induction on the multi-step relation $\TxMStep$. The transitive case relies
  on \lemref{lem:cl-partial} and \lemref{lem:wf-preservation} to get
  intermediate $\TxFCheckLogName$ and well-formedness results. Additionally, it
  relies on the fact that the global log monotonically grows at the same time as
  the file system changes, which means that the intermediate steps have the same
  $\GFSMeta$ and $\GLogMeta$.

  The single-step case first rules out commit (because a transaction remains)
  and restart (because $\TxFCheckLogD$). With only the local step case
  remaining, we induct on the single-step relation $\FStep{\LogMeta'}$.

  The IMP rules are straightforward, and mostly do not affect the file system.
  In the Forest Command case, we use \lemref{lem:inter-wf} and
  \lemref{lem:checkLog} to derive $\TxFCompatF{\GFSMeta'}{\LogMeta'}$,
  then use \lemref{lem:merge-fn} and \lemref{lem:merge-fu} for Forest
  Navigations and Forest Updates respectively. Since Forest Navigations only
  produce reads (by \lemref{lem:merge-fn}), we also note that $\MergeFS =
  \GFSMeta'$ in these cases.
\end{proof}

\begin{lemma}[Intermediate Well-formedness]
  \label{lem:inter-wf}
  Let $\TransactionMeta = \TxThMDD$ and
  $\TransactionMeta' = \TxMkF{\ThreadMeta'}{\TxSL{\ListAppendLF{\LogMeta}{\LogMeta'}}}$
  be transactions, $\GLogMeta$ a global log, and
  $\GFSMeta$ a file system. \\
  Then, \\
  $\TxMSTToST{\TransactionMeta}{\TransactionMeta'}
  ~ \band ~ \TxWFT
  ~ \band ~ \TxFMergeD = \MergeFS
  \implies \TxWFF{\MergeFS,\ListAppendLF{\GLogMeta}{(\TxAddTSF{\TimestampMeta}{\LogMeta})}}{\TransactionMeta'}$

\end{lemma}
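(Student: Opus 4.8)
The plan is to prove this directly from the definition of well-formed transactions (\defref{def:wf-trans}), reusing the witnesses that the hypothesis $\TxWFT$ already provides; no real induction is needed beyond the one-line fact that local thread steps do not touch a transaction's timestamp.

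First I would unfold $\TxWFT$ to obtain a file system $\FSMeta_0$ and an initial transaction $\TransactionMeta_0$ satisfying: (a) $\TransactionMeta_0$ is \TxFInitialName{} for $\FSMeta_0$; (b) $\TxMSTToST{\TransactionMeta_0}{\TransactionMeta}$; and (c) merging $\FSMeta_0$ with $\GLogMeta_{\geq\TimestampMeta}$ --- the sublog of $\GLogMeta$ whose entries carry a timestamp $\geq\TimestampMeta$, where $\TimestampMeta$ is the timestamp of $\TransactionMeta$ --- yields $\GFSMeta$. I claim the \emph{same} $\FSMeta_0,\TransactionMeta_0$ witness $\TxWFF{\MergeFS,\ListAppendLF{\GLogMeta}{(\TxAddTSF{\TimestampMeta}{\LogMeta})}}{\TransactionMeta'}$, so I would just check the three conjuncts of \defref{def:wf-trans} for these witnesses.

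Conjunct (a) is immediate. For the multi-step conjunct, compose (b) with the hypothesis $\TxMSTToST{\TransactionMeta}{\TransactionMeta'}$ by transitivity of the transaction step relation. For the merge conjunct, I first observe that the steps from $\TransactionMeta$ to $\TransactionMeta'$ append to the log rather than reset it ($\TransactionMeta'$'s log \emph{extends} $\TransactionMeta$'s log $\LogMeta$), so they are all local thread steps --- the first rule of \figref{fig:forest-tx-semantics}, which leaves the timestamp component untouched --- whence $\TransactionMeta'$ also has timestamp $\TimestampMeta$. Hence the sublog of the new global log relevant to $\TransactionMeta'$ is the result of filtering $\ListAppendLF{\GLogMeta}{(\TxAddTSF{\TimestampMeta}{\LogMeta})}$ for timestamp $\geq\TimestampMeta$; since filtering commutes with append and every entry of $\TxAddTSF{\TimestampMeta}{\LogMeta}$ carries exactly $\TimestampMeta$, this is $\GLogMeta_{\geq\TimestampMeta}$ followed by all of $\TxAddTSF{\TimestampMeta}{\LogMeta}$. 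Because \TxFMergeName{} is a left fold of \TxFUpdateName{} over the log, merging $\FSMeta_0$ with this concatenation equals merging $\FSMeta_0$ with $\GLogMeta_{\geq\TimestampMeta}$ --- which is $\GFSMeta$ by (c) --- and then with $\TxAddTSF{\TimestampMeta}{\LogMeta}$. Finally, \TxFUpdateName{} (\figref{fig:merge-check-def}) inspects only an entry's kind, path, and contents and never its timestamp, so merging $\GFSMeta$ with $\TxAddTSF{\TimestampMeta}{\LogMeta}$ coincides with merging $\GFSMeta$ with the untagged log $\LogMeta$, which is $\MergeFS$ by hypothesis. This discharges the merge conjunct and completes the proof.

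The only part that needs care --- essentially the whole content of the argument --- is this last paragraph's log bookkeeping: pinning down that $\TransactionMeta'$ retains timestamp $\TimestampMeta$, that timestamp-filtering distributes over append and keeps the entire freshly appended block, and that \TxFMergeName{} is timestamp-insensitive so the appended block behaves exactly like the plain log $\LogMeta$. Everything else is a mechanical reassembly of the \defref{def:wf-trans} clauses.
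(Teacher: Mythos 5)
Your proposal is correct and matches the paper's approach: the paper's entire proof is ``Follows from \defref{def:wf-trans},'' and your argument is exactly the elaboration of that one-liner---reuse the witnesses from $\TxWFT$, compose the step relations by transitivity, and discharge the merge clause by noting that the timestamp-$\geq\TimestampMeta$ filter retains the whole freshly appended block and that \TxFMergeName{}, being a left fold of a timestamp-insensitive update function, splits over log concatenation. The log bookkeeping you flag as the crux is indeed the only nontrivial content, and you handle it correctly.
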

\begin{proof}
  Follows from \defref{def:wf-trans}.
\end{proof}

\begin{lemma}[Merge: Forest Updates]
  \label{lem:merge-fu}
  If $\TxFCompatD$, then \\
  $\forall \FUpdateMeta. ~ (\exists \FSMeta. ~
  \FCDenFS{\FUpdateMeta}{\FSMeta} = (\ContextMeta,\pair{\PathSetMeta}{\FSMeta'},\LogMeta)
  ~ \band ~ \TxFMergeD = \MergeFS$ \\
  \text{\quad \quad}
  $\implies \FCDenFS{\FUpdateMeta}{\GFSMeta} = (\ContextMeta,\pair{\PathSetMeta}{\MergeFS},\LogMeta)$
\end{lemma}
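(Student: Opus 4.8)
The plan is to prove \lemref{lem:merge-fu} by case analysis on the Forest Update $\FUpdateMeta$, which by \figref{fig:forest-syntax} is one of $\FComStoreFileD$, $\FComStoreDirD$, or $\FComCreatePath$. No induction is needed: each update is a single atomic step whose semantics (\figref{fig:forest-commands}) is obtained by a direct unfolding. Fix a witness $\FSMeta$ with $\FCDenFS{\FUpdateMeta}{\FSMeta} = (\ContextMeta,\pair{\PathSetMeta}{\FSMeta'},\LogMeta)$ and $\TxFMergeD = \MergeFS$. The first observation, uniform across all three cases, is that a Forest Update leaves the local context alone: the output context $\ContextMeta$ and the path set $\PathSetMeta$ are copied verbatim from the input, independently of the file system. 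Hence the context and path-set components of $\FCDenFS{\FUpdateMeta}{\GFSMeta}$ are immediately $\ContextMeta$ and $\PathSetMeta$, and only the file-system and log components remain to be checked.

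For those, I would reduce each case to two facts. First, the data that the update reads --- the subexpression $\ExpMeta$ of $\FComStoreFileD$/$\FComStoreDirD$, the ``old contents'' component that \texttt{make\_file}/\texttt{make\_node}/\texttt{add\_node} records in the emitted write entry, and the explicit $\TxLReadName$ emitted by $\FComCreatePath$ --- evaluates the same way against $\GFSMeta$ as against $\FSMeta$. This is precisely where the hypothesis $\TxFCompatD$ is used: it guarantees that $\GFSMeta$ still agrees with the transaction's view at every path the update touches, and that the reads recorded while evaluating the subexpression remain faithful in $\GFSMeta$ (the argument here is the one behind \lemref{lem:real-reads}). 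Consequently the log $\LogMeta$ and the contents written are the same whether the update runs on $\FSMeta$ or on $\GFSMeta$. Second, the file-system transformation performed by each helper coincides with $\TxFMergeName$ replaying the write entry it emits: the helper rewrites the current path to the new contents and then applies $\TxFCloseF{\cdot}$ to restore well-formedness, which is exactly the action of $\TxFUpdateName$ on the corresponding $\TxLWriteFileD$/$\TxLWriteDirD$ entry (and $\TxFUpdateName$ ignores the extra $\TxLReadName$ from $\FComCreatePath$). Combining the two facts, $\FCDenFS{\FUpdateMeta}{\GFSMeta}$ rewrites $\GFSMeta$ exactly as $\TxFMergeD$ does and emits exactly $\LogMeta$, yielding $(\ContextMeta,\pair{\PathSetMeta}{\MergeFS},\LogMeta)$.

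I expect $\FComCreatePath$ to be the main obstacle: its behaviour is conditional (a no-op when the path already exists, adding an empty file otherwise, coercing the current node to a directory when needed), so I must confirm that the same branch is taken under $\GFSMeta$ as under $\FSMeta$ --- which again follows from compatibility forcing agreement at the current path --- and it is the one update that emits a $\TxLReadName$ alongside a possible $\TxLWriteDirD$, so matching the two-entry log needs the read value to agree as well. The $\FComStoreDirD$ case carries slightly more bookkeeping because of its set-difference semantics on directory children, but the same ``rewrite-then-$\TxFCloseF{\cdot}$ equals $\TxFUpdateName$ of the emitted write entry'' identity dispatches it. One point worth isolating explicitly is that $\FSMeta$ and $\GFSMeta$ need not agree everywhere --- only on touched paths --- while $\TxFCloseF{\cdot}$ can in principle depend on the rest of the map; this causes no trouble because the lemma only relates $\FCDenFS{\FUpdateMeta}{\GFSMeta}$ to $\TxFMergeD$, both of which are built from the same $\GFSMeta$, and never to $\FSMeta'$.
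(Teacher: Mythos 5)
Your proposal is correct and takes essentially the same route as the paper's proof: a case analysis on the three updates, showing (i) that the subexpression and the recorded reads evaluate identically on $\GFSMeta$ by compatibility, so the context, path set, and log are unchanged, and (ii) that each helper's rewrite-then-close action coincides with $\TxFMergeName$ replaying the write entry it emits. The only divergence is in citation: where you appeal to the argument behind \lemref{lem:real-reads} for subexpression agreement, the paper invokes \lemref{lem:merge-exp} (together with \lemref{lem:log-com-reads} and \lemref{lem:log-com-part} to split compatibility between the read prefix and the write suffix of the log), which is the tool actually needed there since $\ExpMeta$ may itself contain navigations and fetches.
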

\begin{proof}
  By induction on the Forest Updates $\FUpdateMeta$. Uses \lemref{lem:merge-exp}
  for subexpressions and \lemref{lem:log-com-reads} and
  \lemref{lem:log-com-part} to focus on the write
  portions of the log.
\jdl{TODO: Maybe show an example case?}
\end{proof}

\begin{lemma}[Merge: Forest Navigations]
  \label{lem:merge-fn}
  If $\TxFCompatD$, then \\
  $\forall \FNavigationMeta. ~ (\exists \FSMeta. ~
  \FCDenFS{\FNavigationMeta}{\FSMeta} = (\ContextMeta,\GContextMeta,\LogMeta)$
  \\
  \text{\quad \quad}
  $\implies \TxFReadsD = \LogMeta
    ~ \band ~ \FCDenFS{\FNavigationMeta}{\GFSMeta} = (\ContextMeta,\GContextMeta,\LogMeta)$
\end{lemma}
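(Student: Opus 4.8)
The plan is a case analysis on the Forest Navigation $\FNavigationMeta$, reading each case directly off the command semantics of \figref{fig:forest-commands}; the two conjuncts of the conclusion --- that $\LogMeta$ is reads-only ($\TxFReadsD = \LogMeta$) and that re-running on $\GFSMeta$ yields the identical triple --- are treated together in each case. The guiding observation is that a navigation never writes to the file system: it inspects it at most through reads (directly, or via a subexpression), so every entry it logs is a $\TxLReadName{}$, and its result depends on $\FSMeta$ only through the values at the paths it reads --- values on which $\TxFCompatD$ forces $\GFSMeta$ to agree.

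First I would dispatch the \emph{pure-zipper} navigations $\FComUp$, $\FComOut$, $\FComIntoPair$, $\FComIntoOpt$, $\FComNext$, and $\FComPrev$: in each, the side conditions constrain only the zipper (and, for pairs, the local environment), the file-system component is never consulted, and the emitted log is empty. Hence $\TxFReadsD = \LogMeta$ holds vacuously, and since the result is independent of the file-system component, running on $\GFSMeta$ returns the same $(\ContextMeta,\GContextMeta,\LogMeta)$ --- recording, as in the proof of \lemref{lem:merge}, that $\TxFMergeF{\GFSMeta}{\LogMeta} = \GFSMeta$ because $\LogMeta$ is reads-only. The two interesting navigations are $\FComDown$ and $\FComIntoComp$, since both evaluate a subexpression and $\FComDown$ additionally reads the current directory. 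For these I would appeal to the analogous merge statement for expression evaluation, \lemref{lem:merge-exp}: from $\TxFCompatD$ I first restrict compatibility to the prefix of $\LogMeta$ generated by the subexpression (the $\sim$-analogue of \lemref{lem:cl-partial}), then use that lemma to conclude the subexpression produces the same value and the same log under $\GFSMeta$. For $\FComDown$ I further note that $\LogMeta$ ends with a read of $\FSDirD$ at $\PathMeta$, so $\TxFCompatD$ yields $\MapGet{\GFSMeta}{\PathMeta} = \FSDirD$; the side conditions then re-hold under $\GFSMeta$ with the same witnesses, so the path extension, the updated zipper, and the appended $\TxLReadName{}$ entry all coincide, and $\TxFReadsD = \LogMeta$ follows since the prefix is reads-only and the new entry is a read. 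The $\FComIntoComp$ case is the same argument without the directory read: the list $\ListAppendIF{h}{t}$ and the log are reproduced under $\GFSMeta$, and the ensuing child construction is purely syntactic.

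The one genuine obstacle is that the $\FComDown$ and $\FComIntoComp$ cases lean on \lemref{lem:merge-exp}, while expression evaluation in turn invokes navigations through the $\FComStyle{Run}$ expressions ($\FExpRunCD$); consequently these two lemmas --- together with the auxiliary fact that navigations and expressions both emit reads-only logs --- cannot be established separately and must be carried out as a single simultaneous induction over the (finite) grammars of navigations and expressions, threading the two induction hypotheses and keeping careful track of the restriction of $\sim$ to log prefixes so that they apply. Granting that scaffolding, each individual case is a mechanical unfolding of \figref{fig:forest-commands} and \figref{fig:forest-expressions}.
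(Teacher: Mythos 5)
Your proposal matches the paper's proof: the paper also proceeds by induction on the Forest Navigations, mutually recursively with \lemref{lem:merge-exp}, using the log-compatibility decomposition lemmas (\lemref{lem:log-com-reads} and \lemref{lem:log-com-part} --- exactly the ``$\sim$-analogues'' you describe) to restrict compatibility to the subexpression's log prefix before applying the expression lemma and the induction hypothesis in sequence. The case split between pure-zipper navigations and the two file-system-touching ones ($\FComDown$, $\FComIntoComp$), and the observation that the mutual recursion with $\FExpRunCD$ forces a simultaneous argument, are likewise what the paper does.
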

\begin{proof}
  By induction on the Forest Navigations $\FNavigationMeta$ and mutually
  dependent on \lemref{lem:merge-exp}. Uses \lemref{lem:log-com-reads} and
  \lemref{lem:log-com-part} to be able to apply \lemref{lem:merge-exp} and the
  induction hypothesis in sequence.

\end{proof}

\begin{lemma}[Merge: Expressions]
  \label{lem:merge-exp}
  If $\TxFCompatD$, then \\
  $\forall \ExpMeta. ~ (\exists \FSMeta. ~ \FEDenDE{\ExpMeta} = \pair{\ValMeta}{\LogMeta}$ \\
  \text{\quad \quad}
  $\implies \TxFReadsD = \LogMeta
    ~ \band ~ \FEDenFS{\ExpMeta}{\GFSMeta} = \pair{\ValMeta}{\LogMeta})$
\end{lemma}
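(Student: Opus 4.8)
The plan is to prove the lemma by well-founded induction on the (necessarily finite) derivation witnessing $\FEDenDE{\ExpMeta} = \pair{\ValMeta}{\LogMeta}$, carried out jointly with \lemref{lem:merge-fn}, which inducts on the derivations of the denotation of a Forest navigation: the two statements are genuinely mutually recursive through the $\FExpRunCD$ construct and must be established in the same induction. The two halves of the conclusion --- that $\LogMeta$ records only reads ($\TxFReadsD = \LogMeta$) and that re-evaluating against the compatible file system $\GFSMeta$, written $\FEDenFS{\ExpMeta}{\GFSMeta}$, yields the \emph{same} value and the \emph{same} log --- are proved together, since read-only-ness of a log is exactly what licenses decomposing the hypothesis $\TxFCompatD$ across concatenated logs. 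The only facts about $\sim$ (\figref{fig:log-compat-def}) that are used are: compatibility with a singleton read $\TxLReadF{\ContentsMeta}{\PathMeta}$ is just $\MapGet{\GFSMeta}{\PathMeta} = \ContentsMeta$; and compatibility with a concatenation of read-only logs splits into compatibility with each part (\lemref{lem:log-com-reads}, \lemref{lem:log-com-part}).

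For the standard, pure expression forms ($\ValMeta$, $\VarMeta$, application, and so on) the semantics never consults the file system and returns at most the concatenation of the logs of its immediate subexpressions; read-only-ness and swap-invariance follow directly from the induction hypotheses, using the splitting lemmas to hand each subexpression the matching sub-log of $\TxFCompatD$. The interesting base cases are the directly file-system-inspecting fetches $\FExpFetchFile$, $\FExpFetchDir$, and $\FExpFetchOpt$: here $\LogMeta$ is a single $\TxLReadName$ entry of the form $\TxLReadF{\MapGet{\FSMeta}{\PathMeta}}{\PathMeta}$, so $\TxFReadsD = \LogMeta$ is immediate, and the rule's side condition (e.g.\ $\FSFileF{\StrMeta} = \MapGet{\FSMeta}{\PathMeta}$) together with $\TxFCompatD$ forces $\MapGet{\GFSMeta}{\PathMeta} = \MapGet{\FSMeta}{\PathMeta}$, so the very same rule fires over $\GFSMeta$, producing the identical value and the identical log entry. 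The remaining fetches $\FExpFetchPath$, $\FExpFetchComp$, and $\FExpFetchPred$ simply delegate to a strictly smaller expression evaluated against the \emph{same} file system, so the induction hypothesis applies verbatim.

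For $\FExpRunED$ the log is a concatenation $\ListAppendLF{\LogMeta_1}{\LogMeta_2}$, with $\LogMeta_1$ produced by evaluating the context-providing subexpression and $\LogMeta_2$ by evaluating the inner Forest expression in the resulting local context. The induction hypothesis on the first yields that $\LogMeta_1$ is read-only and swap-invariant; the splitting lemmas then give $\TxFCompatD$-compatibility of $\GFSMeta$ with $\LogMeta_1$ and with $\LogMeta_2$ (using read-only-ness of $\LogMeta_1$), and the induction hypothesis on the inner expression closes the case and shows the whole log is read-only. $\FExpRunCD$ is identical except that its second stage is a navigation, so there we invoke \lemref{lem:merge-fn} from the mutual induction. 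Finally $\FFunVerify$ unfolds to a partial-consistency check ($\FPConsistentD$ in \figref{fig:partial-consistency}), which is itself built solely from expression evaluations and $\TxLReadName$s; the same argument --- run as part of this induction (now on the consistency derivation) or quoted as an immediate corollary --- shows its log is read-only and that it returns the same verdict over $\GFSMeta$.

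The step I expect to be the main obstacle is the $\FExpRunCD$ case and its mutual partner \lemref{lem:merge-fn}: the path expression that a navigation evaluates is pulled from the zipper inside the freshly constructed local context and is \emph{not} a syntactic subterm of the $\FExpRunCD$ expression, so a naive structural induction on expression syntax is not well founded. That is exactly why the argument is framed as an induction on derivation size --- the derivations are finite because the lemma only concerns the case where the denotation is defined --- and why \lemref{lem:merge-fn} has to sit in the same induction. Everything else is bookkeeping: wherever a log is concatenated, re-establish $\sim$-compatibility of $\GFSMeta$ with each fragment before applying an induction hypothesis, which is routine once the read-only invariant is available.
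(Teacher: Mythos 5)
Your proposal matches the paper's proof: a mutual induction with \lemref{lem:merge-fn}, fetch cases discharged directly from the definition of compatibility, \lemref{lem:log-com-reads} and \lemref{lem:log-com-part} used to re-establish compatibility with each fragment of a concatenated log in the two $\FComStyle{Run}$ cases, and a further induction on the specification for \FFunVerify{}. The only divergence is the induction measure: the paper simply says ``induction on the expressions $\ExpMeta$'' without confronting the well-foundedness worry you raise about expressions pulled from the zipper rather than from the syntax, so your derivation-size framing is a more careful justification of the same argument rather than a different route.
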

\begin{proof}
  By induction on the expressions $\ExpMeta$ and mutually dependent on
  \lemref{lem:merge-fn}.
  For $\FFunVerify$, there is a further induction on $\SpecMeta$.
  For $\FExpRunED$, we apply the induction hypothesis
  twice and for $\FExpRunCD$, we apply \lemref{lem:merge-fn} once and the
  induction hypothesis once. In both cases, we rely on
  \lemref{lem:log-com-reads} and \lemref{lem:log-com-part}.

\end{proof}

\begin{lemma}[Log Compatibility with Reads]
  \label{lem:log-com-reads}

  $\TxFCompatF{\GFSMeta}{(\ListAppendLF{\LogMeta}{\LogMeta'})}
  ~\band~ \TxFReadsF{\LogMeta} = \LogMeta \implies
  \bandf{\TxFCompatD}{\TxFCompatF{\GFSMeta}{\LogMeta'}}
  $
\end{lemma}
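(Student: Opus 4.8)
The plan is to prove the claim by a simple induction on the structure of the log $\LogMeta$, unfolding the definition of the compatibility relation $\sim$ from \figref{fig:log-compat-def}. The relevant feature of that definition is that it checks a log against a file system entry by entry, leaving the file system unchanged when it encounters a \TxLReadName{} entry (which it merely validates against the current file system) and updating it---exactly as $\TxFUpdateName{}$ does---when it encounters a write entry. Consequently, a prefix of a log that consists solely of reads is validated directly against the original file system and forwards the remainder of the log through unchanged; this is the one observation doing all the work.

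For the base case $\LogMeta = \ListEmpty$, we have $\ListAppendLF{\LogMeta}{\LogMeta'} = \LogMeta'$, so the hypothesis $\TxFCompatF{\GFSMeta}{(\ListAppendLF{\LogMeta}{\LogMeta'})}$ is literally $\TxFCompatF{\GFSMeta}{\LogMeta'}$, and $\TxFCompatF{\GFSMeta}{\ListEmpty}$ holds vacuously. For the inductive step, write $\LogMeta = \ListAppendIF{\LogEntryMeta}{\LogMeta_0}$; the assumption $\TxFReadsF{\LogMeta} = \LogMeta$ forces $\LogEntryMeta$ to be a read $\TxLReadF{\ContentsMeta}{\PathMeta}$---a write entry would be dropped by the reads filter---and likewise forces $\TxFReadsF{\LogMeta_0} = \LogMeta_0$. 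Since $\ListAppendLF{\LogMeta}{\LogMeta'} = \ListAppendIF{(\TxLReadF{\ContentsMeta}{\PathMeta})}{(\ListAppendLF{\LogMeta_0}{\LogMeta'})}$, unfolding $\sim$ at the leading read yields $\MapGet{\GFSMeta}{\PathMeta} = \ContentsMeta$ together with $\TxFCompatF{\GFSMeta}{(\ListAppendLF{\LogMeta_0}{\LogMeta'})}$, because the read leaves $\GFSMeta$ untouched. The induction hypothesis applied to $\LogMeta_0$ and $\LogMeta'$ then gives both $\TxFCompatF{\GFSMeta}{\LogMeta_0}$ and $\TxFCompatF{\GFSMeta}{\LogMeta'}$; the latter is one of the two conclusions, and re-folding $\sim$ from $\MapGet{\GFSMeta}{\PathMeta} = \ContentsMeta$ and $\TxFCompatF{\GFSMeta}{\LogMeta_0}$ establishes $\TxFCompatF{\GFSMeta}{\LogMeta}$, which is the other.

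The proof is almost entirely bookkeeping, so there is no serious obstacle; the only place to take care is aligning the induction with the precise recursive shape that \figref{fig:log-compat-def} actually uses. If $\sim$ is defined directly by recursion on the log (as sketched above), the argument is as written. If instead it is phrased through the canonicalization that appears in \lemref{lem:real-reads}, the same two facts are needed, but the splitting step becomes the observation that canonicalizing $\ListAppendLF{\LogMeta}{\LogMeta'}$ decomposes along the all-reads prefix, since no write occurring in $\LogMeta$ can shadow a later entry---again immediate. This lemma, together with its companion \lemref{lem:log-com-part}, is precisely what lets the merge lemmas (\lemref{lem:merge-exp} and \lemref{lem:merge-fn}) peel reads off the front of a log and interleave the reads-matching reasoning with the structural induction on expressions and navigations.
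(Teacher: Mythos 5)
Your fallback branch is the one that matches the paper, so start there: the relation $\sim$ in \figref{fig:log-compat-def} is \emph{not} defined by walking the log and threading an updated file system through write entries. It is defined as ``every \TxLReadName{} surviving canonicalization must match the fixed global file system $\GFSMeta$,'' where canonicalization is a left fold that discards a read whose path lies under an earlier write (or repeats an earlier read) and turns writes into path markers used only for that shadowing. So your primary induction, in which a write ``updates the file system exactly as $\TxFUpdateName{}$ does,'' is reasoning about a different relation than the one the paper uses; under the actual definition there is nothing to thread, and the whole lemma collapses to two set-inclusion observations about canonical logs. That is precisely the paper's proof: the first conjunct is exactly \lemref{lem:log-com-part} (the canonical reads of a prefix are among those of the concatenation), and the second conjunct follows from $\TxFCanonizeF{\LogMeta'} \subseteq \TxFCanonizeF{(\ListAppendLF{(\TxFReadsF{\LogMeta})}{\LogMeta'})}$, which holds because an all-reads prefix contributes no writes and therefore shadows nothing in $\LogMeta'$ --- your ``no write in $\LogMeta$ can shadow a later entry'' observation verbatim. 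One point that both you and the paper wave through: canonicalization also drops a read whose path was \emph{already read} in the accumulated prefix, so the displayed inclusion is only sound under the (benign but unstated) invariant that two reads of the same path with no intervening write record the same contents; without that, a read in $\LogMeta'$ could be masked by a differing read in $\LogMeta$ and the second conjunct would not follow. Your argument is correct once aimed at the right definition, and the caveat is shared with the paper rather than introduced by you.
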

\begin{proof}
  The first part follows from \lemref{lem:log-com-part}. The second
  from the fact that
  $\TxFCanonizeF{\LogMeta'}
  \subseteq
  \TxFCanonizeF{(\ListAppendLF{(\TxFReadsF{\LogMeta})}{\LogMeta'})}$.
\end{proof}

\begin{lemma}[Log Compatibility Combination]
  \label{lem:log-com-combo}

  $\bandf{\TxFCompatD}{\TxFCompatF{\GFSMeta}{\LogMeta'}}
  \implies \TxFCompatF{\GFSMeta}{(\ListAppendLF{\LogMeta}{\LogMeta'})}$
\end{lemma}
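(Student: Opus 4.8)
The plan is to reduce \emph{log compatibility} to a set-level statement about canonization and then observe that the canonization of a concatenated log is bounded by the canonizations of its two halves. First I would unfold the definition of $\sim$ from \figref{fig:log-compat-def}: $\TxFCompatF{\GFSMeta}{\LogMeta}$ holds exactly when every read entry $\TxLReadF{\ContentsMeta}{\PathMeta}$ that survives in $\TxFCanonizeF{\LogMeta}$---i.e.\ every recorded read not shadowed by an earlier write to $\PathMeta$ in $\LogMeta$---agrees with what $\GFSMeta$ holds at $\PathMeta$. This is the same canonization-based reading of $\sim$ that underlies \lemref{lem:log-com-part} and that is used (for its ``second part'') in the proof of \lemref{lem:log-com-reads}, so I would appeal to it rather than re-deriving it.

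The heart of the argument is the containment
\[ \TxFCanonizeF{(\ListAppendLF{\LogMeta}{\LogMeta'})} \subseteq \TxFCanonizeF{\LogMeta} \union \TxFCanonizeF{\LogMeta'}, \]
proved by induction on $\LogMeta$ while carrying along the set of already-written paths that canonization accumulates as it scans a log left to right. The two observations that drive it: a read occurring in the $\LogMeta$-portion of the concatenation is shadowed there iff it is already shadowed within $\LogMeta$ alone (everything preceding it lies in $\LogMeta$), so if it survives it lies in $\TxFCanonizeF{\LogMeta}$; and a read occurring in the $\LogMeta'$-portion that survives the full concatenation is \emph{a fortiori} unshadowed by the writes preceding it within $\LogMeta'$, so it lies in $\TxFCanonizeF{\LogMeta'}$. (Any write entries that canonization retains split between the two halves in the same way, but they play no role here since $\sim$ constrains only reads.) This is the mirror image of $\TxFCanonizeF{\LogMeta'} \subseteq \TxFCanonizeF{(\ListAppendLF{(\TxFReadsF{\LogMeta})}{\LogMeta'})}$ from the proof of \lemref{lem:log-com-reads}, now used to assemble a whole from its parts.

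The conclusion is then immediate: assuming $\TxFCompatF{\GFSMeta}{\LogMeta}$ and $\TxFCompatF{\GFSMeta}{\LogMeta'}$, let $\TxLReadF{\ContentsMeta}{\PathMeta}$ be any read surviving in $\TxFCanonizeF{(\ListAppendLF{\LogMeta}{\LogMeta'})}$; by the containment it survives in $\TxFCanonizeF{\LogMeta}$ or in $\TxFCanonizeF{\LogMeta'}$, so $\MapGet{\GFSMeta}{\PathMeta} = \ContentsMeta$ by the corresponding hypothesis, using the characterization of $\sim$ from the first step. Hence every canonical read of $\ListAppendLF{\LogMeta}{\LogMeta'}$ agrees with $\GFSMeta$, which is exactly $\TxFCompatF{\GFSMeta}{(\ListAppendLF{\LogMeta}{\LogMeta'})}$. (The same result can instead be obtained by a direct induction on $\LogMeta$, peeling entries off the front and using a one-line sublemma that deleting a read entry from a log preserves $\sim$; this avoids naming canonization explicitly but is essentially the same computation.)

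I expect the main obstacle to be pinning down the exact shadowing behaviour of canonization across the $\LogMeta$/$\LogMeta'$ boundary---in particular whether a write to $\PathMeta$ shadows only later reads \emph{at} $\PathMeta$ or also reads at descendants of $\PathMeta$ (the directory case)---since that fixes the induction invariant for the written-path accumulator. Once that invariant is stated correctly, the induction and the final combination are routine, and the subtlety is exactly the one already handled in \lemref{lem:log-com-reads} and in the merge lemmas such as \lemref{lem:merge-fu}.
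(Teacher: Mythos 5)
Your proposal is correct and matches the paper's proof, which rests on exactly the containment you identify—that the (reads of the) canonization of $\ListAppendLF{\LogMeta}{\LogMeta'}$ are contained in the union of the reads of $\TxFCanonizeD$ and of $\TxFCanonizeF{\LogMeta'}$—and then concludes pointwise from the two compatibility hypotheses. You simply spell out the left-to-right fold argument for that containment, which the paper leaves implicit.
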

\begin{proof}
  $\TxFReadsF{(\TxFCanonizeF{(\ListAppendLF{\LogMeta}{\LogMeta'})})} \subseteq
  \TxFReadsF{(\TxFCanonizeD)} ~ \union ~ \TxFReadsF{(\TxFCanonizeF{\LogMeta'})} $
\end{proof}

\begin{lemma}[Log Compatibility Parts]
  \label{lem:log-com-part}

  $\TxFCompatF{\GFSMeta}{(\ListAppendLF{\LogMeta}{\LogMeta'})}
    \implies \TxFCompatD$
\end{lemma}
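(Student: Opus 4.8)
The plan is to derive this from two ingredients: a monotonicity property of the log‑canonicalization function $\TxFCanonizeF{\cdot}$ under concatenation, and the observation --- read off from the definition of $\sim$ in \figref{fig:log-compat-def} --- that $\TxFCompatF{\GFSMeta}{\LogMeta}$ holds exactly when $\GFSMeta$ agrees with every read entry occurring in $\TxFReadsF{(\TxFCanonizeF{\LogMeta})}$; in other words, that compatibility is \emph{antitone} in the set of canonical reads. Granting this characterization, it suffices to prove
\[
  \TxFReadsF{(\TxFCanonizeF{\LogMeta})} ~\subseteq~ \TxFReadsF{(\TxFCanonizeF{(\ListAppendLF{\LogMeta}{\LogMeta'})})},
\]
since any file system satisfying every read in the larger (right‑hand) set in particular satisfies every read in the smaller one, which is precisely $\TxFCompatD$. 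This mirrors the companion set inclusions used to discharge \lemref{lem:log-com-reads} and \lemref{lem:log-com-combo}.

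First I would establish the displayed inclusion, by a short induction on $\LogMeta'$ (equivalently, directly from the left‑to‑right definition of $\TxFCanonizeF{\cdot}$). Canonicalization drops a read entry $\TxLReadF{\ContentsMeta}{\PathMeta}$ only when some earlier entry of the log is a $\TxLWriteFileD$ or $\TxLWriteDirD$ at a path subsuming $\PathMeta$, and it otherwise retains the entry, collapsing repeated reads of the same path --- which is harmless here, since a transaction's successive reads of a path return the same contents. The key point is that every entry of $\LogMeta'$ occurs strictly after every entry of the prefix $\LogMeta$; hence no write contributed by $\LogMeta'$ can precede, and therefore cannot trigger the deletion of, a read that already survives $\TxFCanonizeF{\LogMeta}$. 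So each such surviving read is likewise retained in $\TxFCanonizeF{(\ListAppendLF{\LogMeta}{\LogMeta'})}$, giving the inclusion.

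Combining the inclusion with the antitonicity of $\sim$ yields $\TxFCompatF{\GFSMeta}{(\ListAppendLF{\LogMeta}{\LogMeta'})} \implies \TxFCompatD$ immediately. I expect the only real work to be bookkeeping around the exact behaviour of $\TxFCanonizeF{\cdot}$: one must confirm that the write cases together with the path‑subsumption condition (analogous to the sub‑path condition used in the conflict checks of \figref{fig:merge-check-def}) interact with prefixing as claimed, and that collapsing repeated reads never removes a $(\PathMeta,\ContentsMeta)$ pair that would otherwise be kept. Once those definitional details are pinned down the argument is routine, and the proof should be about as short as those of the neighbouring lemmas.
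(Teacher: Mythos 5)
Your proposal is correct and matches the paper's proof, which consists precisely of the inclusion $\TxFReadsF{(\TxFCanonizeD)} \subseteq \TxFReadsF{(\TxFCanonizeF{(\ListAppendLF{\LogMeta}{\LogMeta'})})}$ together with the (implicit) antitonicity of $\sim$ in the set of canonical reads. Your elaboration of why the inclusion holds --- the fold defining $\TxFCanonizeName$ processes $\LogMeta$ first and only ever adds entries while processing $\LogMeta'$ --- is exactly the justification the paper leaves unstated.
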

\begin{proof}
  $\TxFReadsF{(\TxFCanonizeD)} \subseteq \TxFReadsF{(\TxFCanonizeF{(\ListAppendLF{\LogMeta}{\LogMeta'})})}$
\end{proof}

\begin{figure}
\begin{code}
  \TxFCompatD \eqdef \(\forall \TxLReadD\in\TxFCanonizeD\). \MapGet{\GFSMeta}{\PathMeta} = \ContentsMeta
  \TxFCanonizeD \eqdef \foldFunF{\ListEmpty}{\LogMeta}{\TxFNecessaryName}
  \codeskip{}
  \TxFNecessaryD \eqdef
    if \TxFSubPathF{\PathMeta}{(\TxFWritesF{acc})} \bor \(\PathMeta\in\TxFReadsF{acc}\)
    then \(acc\)
    else \ListAppendIF{(\TxLReadD)}{acc}

  \TxFNecessaryL{(\TxLWriteFileD)} \eqdef
    \ListAppendIF{(\TxLWritePathD)}{(\TxFNecessaryDW)}
  \TxFNecessaryL{(\TxLWriteDirF{(\FSFileD)}{(\FSDirD)}{\PathMeta})} \eqdef
    \ListAppendIF{(\TxLWritePathD)}{(\TxFNecessaryF{acc}{(\TxLReadF{(\FSFileD)}{\PathMeta})})}
  \TxFNecessaryL{(\TxLWriteDirF{(\FSDirF{\SetMetaT{'}})}{(\FSDirD)}{\PathMeta})} \eqdef
    \foldFunF{(\TxFNecessaryF{acc}{(\TxLReadF{(\FSDirF{\SetMetaT{'}})}{\PathMeta})})}{}{}
      (\((\SetMeta\setminus\SetMetaT{'})\union(\SetMetaT{'}\setminus\SetMeta)\))
      (\anonFunF{acc \StrMeta}{\ListAppendIF{\TxLWritePathF{\PathConsS}}{acc}})
\end{code}
\caption{Log Compatibility definition}
\label{fig:log-compat-def}
\end{figure}

\begin{figure}
  \begin{code}
    \TxFRestartF{\GFSMeta}{\TxD} \eqdef \ThreadStartF{\GFSMeta}{\text{'}}
      where \pair{\PathMetaT{'}}{\AddT{\ZipperMeta}{'}} = \FFGoToRootD
    \codeskip{}
    \TxFInitialF{\FSMeta}{\TxStartLD} when \OptIsNoneF{\ZipGetParent} \eqdef \btrue
    \TxFInitialF{\textIgnoreArg}{\textIgnoreArg} \eqdef \bfalse
  \end{code}

  \caption{Definitions of Initial and restart}
  \label{fig:more-tx-defs}
\end{figure}

\subsection{Properties}\label{app:properties}

\paragraph*{Consistency.} We restate the consistency theorems from
\secref{sec:forest} and give the main idea of their proofs.

\restatetheorem{thm:const-to-pconst}
\restatetheorem{thm:pconst-monotonicity}
\restatetheorem{thm:pconst-eq-const}
\begin{proof}
  The proofs of these three theorems are straightforward by induction on the
  structure of the specification in $\ZipperMeta$. The theorems ignore the log
  portion of partial consistency. \FPathCoverName{} generates a path set
  by traversing the whole filestore until it reaches a fixed point.
\end{proof}

\paragraph*{Core Calculus Equivalences.} We present several equivalences in the
core calculus.

\begin{definition}[Equivalence modulo logs]
  \label{def:equiv}
  We define equivalence modulo logs inductively as follows:
  \[
    \begin{aligned}
    \bot &\undefeq \ignoreArg
    \\
    \ignoreArg &\undefeq \bot
    \\
    (\ContextExp,\GContextExp,\ignoreArg) &\undefeq (\ContextExp,\GContextExp,\ignoreArg)
    \\
    \FunctionMeta &\undefeq \FunctionMeta' & \text{ when } & \forall \ValMeta. ~ \FunctionMeta ~ \ValMeta \undefeq \FunctionMeta' ~ \ValMeta
    \end{aligned}
  \]
\end{definition}

\begin{lemma}[Core Calculus Equivalences]
  \label{lem:equivalences}
  \[
  \begin{aligned}
      \FCDen{\ComSeqF{\FComDown}{\FComUp}} & \undefeq \FCDen{\ComSkip} \\
      \FCDen{\ComSeqF{\FComIntoOpt}{\FComOut}} & \undefeq \FCDen{\ComSkip} \\
      \FCDen{\ComSeqF{\FComIntoComp}{\FComOut}} & \undefeq \FCDen{\ComSkip} \\
      \FCDen{\ComSeqF{\FComIntoPair}{\FComOut}} & \undefeq \FCDen{\ComSkip} \\
      \FCDen{\ComSeqF{\FComNext}{\FComPrev}} & \undefeq \FCDen{\ComSkip} \\
      \FCDen{\ComSeqF{\FComPrev}{\FComNext}} & \undefeq \FCDen{\ComSkip} \\
  \end{aligned}
  \]
\end{lemma}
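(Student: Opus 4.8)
The plan is to prove all six identities in one sweep by a case analysis, unfolding the command semantics of \figref{fig:forest-commands} for the composite command $\ComSeqF{c_1}{c_2}$ and comparing with $\ComSkip$. Because $\undefeq$ (\defref{def:equiv}) is reflexive on $\bot$ on either side, the only interesting situation for a given identity and a given input local/global context is when $\FCDen{\ComSeqF{c_1}{c_2}}$ is \emph{defined} there; in that case it suffices to show the result coincides with $\FCDen{\ComSkip}$ on the same inputs up to the log (and, for \FComDown{}, the monotonically-growing path-set component of the global context), which is exactly what $\undefeq$ disregards. So the proof reduces, per identity, to two routine obligations: (i) \emph{precondition chaining} --- whenever $c_1$ succeeds, the side-conditions that \figref{fig:forest-commands} imposes on $c_2$ are automatically met; and (ii) \emph{round trip} --- running $c_2$ on the output of $c_1$ restores the original focus, current path, and environment, and leaves the file system unchanged.

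For obligation (i): after \FComDown{} the zipper's parent has a \emph{path} specification in focus, which is precisely \FComUp{}'s guard; dually, after \FComIntoOpt{}, \FComIntoComp{}, or \FComIntoPair{} the new parent has an option, comprehension, or pair specification in focus --- in particular \emph{not} a path specification --- which is precisely \FComOut{}'s guard; and after \FComNext{} the left-sibling list is non-empty (its head is the old focus), so \FComPrev{} applies, and symmetrically for \FComPrev{} followed by \FComNext{}. For obligation (ii): \FComUp{} pops the name pushed onto the path by \FComDown{} (using that path truncation is a left inverse of appending a name) and reinstalls the stored parent zipper verbatim; \FComOut{} likewise reinstalls the parent zipper saved by the matching $\FComStyle{Into}$ step, so the fresh right-sibling and the environment substitution that \FComIntoPair{} introduces on $\SpecMeta_2$ are simply discarded; and \FComNext{}/\FComPrev{} merely move one element between the left- and right-sibling lists and back. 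Since all of these commands are navigations (pure, modulo the \FComDown{} path-set entry), the file system and the rest of the global context are untouched, so in every case the composite agrees with $\ComSkip$ up to logs.

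The main obstacle is bookkeeping rather than mathematics: one must be scrupulous about the \FComUp{}/\FComOut{} dichotomy. The semantics deliberately routes a step back \emph{out of} a path specification through \FComUp{} and a step back out of any other constructor through \FComOut{}, so the ``mismatched'' compositions $\ComSeqF{\FComDown}{\FComOut}$ and $\ComSeqF{\FComIntoOpt}{\FComUp}$ (and their analogues) are \emph{not} asserted by the lemma; the proof has to track exactly which constructor sits in the restored parent to see both that the correct $c_2$ applies and that no other $c_2$ would. The second point needing care is reconciling the statement with \defref{def:equiv}: \FComDown{} enlarges the path-set component of the global context, so the equivalence must be read as ignoring that monotone annotation alongside the log. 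Once that reading is fixed and the \FComUp{}/\FComOut{} guards are handled carefully, each of the six cases closes by the direct double unfolding sketched above; no induction is needed.
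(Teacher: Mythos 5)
Your proof is correct, and it is worth noting that the paper itself states Lemma~\ref{lem:equivalences} in the appendix \emph{without} any accompanying proof, so the direct double-unfolding you give---precondition chaining plus round-trip, one case per identity, no induction---is exactly the argument the authors are implicitly relying on. Your two points of care are both well placed: the $\FComUp$/$\FComOut$ guard dichotomy is indeed what makes the four $\FComStyle{Into}$/$\FComDown$ cases go through (and rules out the mismatched compositions), and your observation about $\FComDown$ enlarging the path-set component of the global context is a genuine catch---as literally written, \defref{def:equiv} requires the global contexts to coincide and only discards the log, so $\FCDen{\ComSeqF{\FComDown}{\FComUp}}$ and $\FCDen{\ComSkip}$ differ in the path set whenever $\PathCons{\PathMeta}{\StrMeta}\notin\PathSetMeta$. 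Either the equivalence must be read as you propose (treating the monotone path-set annotation as log-like bookkeeping, which is consistent with its role in partial consistency checking), or the first identity should be weakened; the paper does not address this, so flagging it is a contribution rather than a gap in your argument.
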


\paragraph*{Round-Tripping Laws.} We present several round-tripping laws in the style of
lenses~\cite{focal-toplas}.

\begin{lemma}[Round-Tripping Laws]
  \label{lem:rt-laws}
  \[
  \begin{aligned}
      \FCDen{\FComStoreFileF{\FExpFetchFile}} & \undefeq
        \FCDen{\ComSkip} & \text{File-Load-Store} \\
      \FCDen{\FComStoreDirF{\FExpFetchDir}} & \undefeq
        \FCDen{\ComSkip} & \text{Dir-Load-Store} \\
      \FCDen{\ComSeqF{\FComStoreFileF{\StrMeta_1}}{\FComStoreFileF{\StrMeta_2}}} & \undefeq
        \FCDen{\FComStoreFileF{\StrMeta_2}} & \text{File-Store-Store} \\
      \FCDen{\ComSeqF{\FComCreatePath}{\FComCreatePath}} & \undefeq
        \FCDen{\FComCreatePath} & \text{CreatePath-Store-Store} \\
      \MapGet{(project\_env ~ (\FCDenDCG{\ComSeqF{\FComStoreFileF{\StrMeta}}{\ComAssignF{\VarMeta}{\FExpFetchFile}}}))}
      {\VarMeta} = \StrMeta
      \span &  \text{File-Store-Load} \\
      \MapGet{(project\_env ~ (\FCDenDCG{\ComSeqF{\FComStoreDirF{\SetMeta}}{\ComAssignF{\VarMeta}{\FExpFetchDir}}}))}
      {\VarMeta} = \SetMeta
      \span &  \text{Dir-Store-Load}
  \end{aligned}
  \]
\end{lemma}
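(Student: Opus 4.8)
The plan is to prove all six identities by one uniform unfolding, with no top-level induction. Since $\FCDen{\cdot}$ is a partial function, by the function clause of \defref{def:equiv} it suffices to fix an arbitrary input $((\EnvMeta,\PathMeta,\ZipperMeta),\pair{\PathSetMeta}{\FSMeta})$ and compare the two sides there. If the left-hand side is undefined on this input --- e.g.\ because a $\FExpFetchFile$ step finds no file, or a store step finds a zipper of the wrong shape --- then the equivalence holds at once by the clause $\bot \undefeq \ignoreArg$, so I only need to treat the case where every command and expression in the composite is defined. There I unfold the semantics of \figref{fig:forest-commands} and \figref{fig:forest-expressions} step by step, discard the accumulated log using the remaining clauses of \defref{def:equiv}, and check that the surviving state --- $\EnvMeta$, $\PathMeta$, $\ZipperMeta$, $\PathSetMeta$, $\FSMeta$ --- agrees with the one produced by the right-hand side (which, being $\ComSkip$ or a single store/create command, is then defined as well). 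Two observations keep each unfolding short and are the same ones behind \lemref{lem:equivalences}: every $\FUpdateMeta$ command leaves $\EnvMeta$, $\PathMeta$, $\ZipperMeta$, $\PathSetMeta$ untouched and only rewrites $\FSMeta$, and $\FExpFetchFile$ and $\FExpFetchDir$ are pure queries that read data off $\FSMeta$ and log a \TxLReadName{} but change nothing else.

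The substance of the proof is a small set of algebraic facts about the file-system helpers of \appref{fig:forest-helpers}. Write $\mathit{makeFile}$, $\mathit{makeNode}$, $\mathit{addNode}$ for the helpers called by $\FComStoreFileF{\ExpMeta}$, the directory-store command, and $\FComCreatePath$, respectively. I would first prove: (a) after $\mathit{makeFile}$ at $\PathMeta$ with string $\StrMeta$ the file system maps $\PathMeta$ to $\FSFileF{\StrMeta}$; $\mathit{makeFile}$ is the identity when $\PathMeta$ already holds $\FSFileF{\StrMeta}$; and it is absorptive in its string argument --- applying it with $\StrMeta_2$ after applying it with $\StrMeta_1$ at the same path equals applying it with $\StrMeta_2$ alone. (b) After $\mathit{makeNode}$ at $\PathMeta$ with child set $\SetMeta$ the file system maps $\PathMeta$ to $\FSDirF{\SetMeta}$, and $\mathit{makeNode}$ is the identity when $\PathMeta$ already holds $\FSDirF{\SetMeta}$. (c) $\mathit{addNode}$ is idempotent, and its result maps $\PathMeta$ to a directory whose child set contains the added name. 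Each of these is proved by unfolding the helper together with the closing operation it applies to re-establish \defref{def:fs-wf}, followed by a short case split on the shape of $\MapGet{\FSMeta}{\PathMeta}$ (a file, a directory, or --- allowed by invariant~(2) of \secref{subsec:forest-semantics} --- absent while its parent is present). I expect the closure reasoning to be the main obstacle: the closing operation prunes children when a directory node is overwritten by a file and inserts fresh empty-file children when names are added to a directory, so identity, absorption, and idempotence must be argued \emph{through} the closure rather than on the bare map update --- but this is a finite, mechanical case analysis.

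Granting (a)--(c), the six laws fall out of the recipe of the first paragraph. \emph{File-Load-Store} and \emph{Dir-Load-Store}: the fetch returns the current contents of $\PathMeta$ (the string, resp.\ the child set), and storing exactly those contents is the identity by (a), resp.\ (b), so the composite equals $\FCDen{\ComSkip}$ modulo logs. \emph{File-Store-Store}: the first store makes $\PathMeta$ a file while preserving the $\FSpecFile$ zipper node and the presence of $\PathMeta$'s parent (invariant~(1), together with the fact that overwriting $\PathMeta$ does not delete its parent), so the second store is defined, and absorption (a) collapses the two writes to $\FComStoreFileF{\StrMeta_2}$. \emph{CreatePath-Store-Store}: idempotence (c), using that $\ExpMeta$ re-evaluates to the same string in the file system produced by the first $\FComCreatePath$ --- automatic when $\ExpMeta$ is a pure (non-Forest) expression, and otherwise subsumed by the requirement that the composite be defined. \emph{File-Store-Load} and \emph{Dir-Store-Load}: after the store, $\MapGet{\FSMeta'}{\PathMeta}$ is $\FSFileF{\StrMeta}$ (resp.\ $\FSDirF{\SetMeta}$) by (a) (resp.\ (b)), so the ensuing $\FExpFetchFile$ (resp.\ $\FExpFetchDir$) returns $\StrMeta$ (resp.\ $\SetMeta$), the following assignment binds $\VarMeta$ to that value, and projecting the environment yields the stated equation.
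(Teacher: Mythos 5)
The paper states \lemref{lem:rt-laws} without any proof at all --- the appendix only adds the remark explaining why the analogous Dir-Store-Store law is absent --- so there is no official argument to compare against. Your proposal is the natural direct one and is essentially sound: reduce to the case where both sides are defined (the $\bot$ clauses of \defref{def:equiv} dispose of the rest), observe that updates touch only the file system and fetches touch nothing, and push everything down to algebraic facts about the helpers of \appref{fig:forest-helpers} proved \emph{through} the closure operation. You correctly isolate the closure as the crux; your facts (a)--(c) are exactly what distinguishes the file laws (which survive closure, since a file has no children to prune) from the missing directory store-store law (which the paper's own counterexample shows is killed by closure), so your decomposition is consistent with the one piece of evidence the paper does give.

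One step is not actually justified as written: in CreatePath-Store-Store you dismiss the possibility that $\ExpMeta$ evaluates differently after the first $\FComCreatePath$ by saying it is ``subsumed by the requirement that the composite be defined.'' It is not --- both creates can be defined while $\ExpMeta$ (via $\FExpRunED$ reaching a Forest fetch) evaluates to two different strings against the two file systems, in which case the left side creates two children and the law fails. The honest fix is to make explicit the assumption that path expressions are independent of the file system (true of every expression the paper actually uses in a $\FSpecPathD$, and implicitly assumed by the authors, who state the law unconditionally), or to restrict the law to that case. With that caveat recorded, the remaining unfoldings go through as you describe.
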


The store-load laws also require the denotation to be defined in order to be meaningful.
Note that
$\FCDen{\ComSeqF{\FComStoreDirF{\SetMeta_1}}{\FComStoreDirF{\SetMeta_2}}}
\undefeq \FCDen{\FComStoreDirF{\SetMeta_2}}$ is conspicuously missing. In fact,
it does not hold. Consider the situation where $\SetMeta_2$ is the current
contents of the given directory. In this case, $\FComStoreDirF{\SetMeta_2}$ is a
no-op, and thus the right-hand side is equivalent to $\ComSkip$. However, if,
for example, $\SetMeta_1 = \ListEmpty$, then the left-hand side will turn every
child into $\FSFileEmpty$.

\section{Additional definitions}

\begin{figure}
  \fbox{
  $
  \begin{aligned}
    \FFMkFileName &: \fun{\FSSet}{\fun{\PathSet}{\fun{\StrSet}{\product{\FSSet}{\LogSet}}}}
    \\
    \FFMkNodeName &: \fun{\FSSet}{\fun{\PathSet}{\fun{2^{\StrSet}}{\product{\FSSet}{\LogSet}}}}
    \\
    \FFAddNodeName &: \fun{\FSSet}{\fun{\PathSet}{\fun{\StrSet}{\product{\FSSet}{\LogSet}}}}
    \\
    \TxFCloseName &: \fun{\FSSet}{\FSSet}
  \end{aligned}
  $
  }
  \begin{code}
    \FFMkFileD \eqdef
      \ExpLet{\pair{\FSMetaT{'}}{\LogMetaT{'}}}{\FFAddNodeS}
      \ExpLet{\LogMeta}{\ListAppendLF{\LogMetaT{'}}{(\TxLWriteFileF{\MapGet{\FSMetaT{'}}{\PathMeta}}{(\FSFileD)}{\PathMeta})}}
      \pair{\TxFCloseF{(\FSMetaT{'}\subst{\PathMeta}{\FSFileD})}}{\LogMeta}

    \codeskip{}

    \FFMkNodeD \eqdef
      \ExpLet{\pair{\FSMetaT{'}}{\LogMetaT{'}}}{\FFAddNodeS}
      \ExpLet{\LogMeta}{\ListAppendLF{\LogMetaT{'}}{(\TxLWriteDirF{\MapGet{\FSMetaT{'}}{\PathMeta}}{(\FSDirD)}{\PathMeta})}}
      \pair{\TxFCloseF{(\FSMetaT{'}\subst{\PathMeta}{\FSDirD})}}{\LogMeta}

    \codeskip{}

    \FFAddNodeD \eqdef
      \ExpMatchStart{\MapGet{\FSMeta}{\PathMeta}}
      \ExpMatchCase{\bot}
        \ExpLet{\pair{\FSMetaT{'}}{\LogMetaT{'}}}{\FFAddNodeS}
        \ExpLet{\LogMeta}{\ListAppendLF{\LogMetaT{'}}{(\TxLWriteDirF{(\FSFileEmpty)}{(\FSDirS)}{\PathConsS})}}
        \pair{\TxFCloseF{(\FSMetaT{'}\subst{\PathMeta}{\FSDirS})}}{\LogMeta}
      \ExpMatchCase{\FSFileF{\StrMetaT{'}}}
        \pair{\TxFCloseF{(\FSMeta\subst{\PathMeta}{\FSDirS})}}{\List{\TxLWriteDirF{(\FSFileF{\StrMetaT{'}})}{(\FSDirS)}{\PathConsS}}}
      \ExpMatchCaseWhen{\FSDirD}{\StrMeta \notin \SetMeta}
        \pair{\TxFCloseF{(\FSMeta\subst{\PathMeta}{\FSDirF{(\SetMeta \union \TextSet{\StrMeta})}})}}{\List{\TxLWriteDirF{(\FSDirD)}{(\FSDirF{(\SetMeta \union \TextSet{\StrMeta})})}{\PathConsS}}}
      \ExpMatchCaseWhen{\FSDirD}{\StrMeta \in \SetMeta} \pair{\FSMeta}{\ListEmpty}

    \codeskip{}

    \TxFCloseD \eqdef close_at \FSMeta \PathRoot
      where close_at \FSMeta \PathMeta \eqdef
        \ExpMatchStart{\MapGet{\FSMeta}{\PathMeta}}
        \ExpMatchCase{\FSDirD}
          \ExpLet{\FSMetaT{'}}{\foldFunF{\FSMeta}{\SetMeta}{\text{close\_at}}}
          \ExpLet{\SetMetaT{'}}{\text{\Set{\(\PathMetaT{'}\in\FSMeta\mid\TxFSubPathF{\PathMetaT{'}}{\PathMeta} \band \forall\StrMeta\in\SetMeta. \neg\TxFSubPathF{\PathMetaT{'}}{\PathConsS}\)}}}
          \ExpLet{\FSMetaT{''}}{\foldFunF{\FSMetaT{'}}{\SetMetaT{'}}{(\anonFunF{\FSMeta \PathMetaT{'}}{\FSMeta\subst{\PathMeta}{\bot}})}}
          \FSMetaT{''}\subst{\PathMeta}{\FSDirD}
        \ExpMatchCase{\FSFileD}
          \ExpLet{\FSMetaT{'}}{\foldFunF{\FSMeta}{\text{\Set{\(\PathMetaT{'}\in\FSMeta\mid\TxFSubPathF{\PathMetaT{'}}{\PathMeta}\)}}}{(\anonFunF{\FSMeta \PathMetaT{'}}{\FSMeta\subst{\PathMeta}{\bot}})}}
          \FSMetaT{'}\subst{\PathMeta}{\FSFileD}
        \ExpMatchCase{\bot} \FSMeta\subst{\PathMeta}{\FSFileEmpty}
  \end{code}
  \caption{Helper Functions}
  \label{fig:forest-helpers}
\end{figure}

\else
\fi
\end{document}